\newcommand{\ie}{\emph{i.e.},\xspace}
\newcommand{\eg}{\emph{e.g.},\xspace}
\newcommand{\pp}[1]{\left( #1 \right)}
\newtheorem{theorem}{Theorem}[section]
\newtheorem{proposition}[theorem]{Proposition}
\def\BibTeX{{\rm B\kern-.05em{\sc i\kern-.025em b}\kern-.08em
    T\kern-.1667em\lower.7ex\hbox{E}\kern-.125emX}}
\begin{document}

\title{Are ID Embeddings Necessary? Whitening Pre-trained Text Embeddings for Effective Sequential Recommendation
% {\footnotesize \textsuperscript{*}Note: Sub-titles are not captured in Xplore and
% should not be used}
% \thanks{Identify applicable funding agency here. If none, delete this.}
}

\author{\IEEEauthorblockN{Lingzi Zhang\IEEEauthorrefmark{2}\IEEEauthorrefmark{3}, Xin Zhou\IEEEauthorrefmark{3}, Zhiwei Zeng\IEEEauthorrefmark{2}, and Zhiqi Shen\IEEEauthorrefmark{2}\IEEEauthorrefmark{3}}
\IEEEauthorblockA{
\IEEEauthorrefmark{2}School of Computer Science and Engineering, Nanyang Technological University, Singapore\\
\IEEEauthorrefmark{3}Alibaba-NTU Singapore Joint Research Institute, Nanyang Technological University, Singapore\\
Email: lingzi001@e.ntu.edu.sg, xin.zhou@ntu.edu.sg, zhiwei.zeng@ntu.edu.sg, zqshen@ntu.edu.sg}
\vspace{-1em}
% \IEEEauthorblockA{\IEEEauthorrefmark{2}Twentieth Cen
% tury Fox, Springfield, USA\\
% Email: homer@thesimpsons.com}

% \and
% \IEEEauthorblockN{5\textsuperscript{th} Given Name Surname}
% \IEEEauthorblockA{\textit{dept. name of organization (of Aff.)} \\
% \textit{name of organization (of Aff.)}\\
% City, Country \\
% email address or ORCID}
% \and
% \IEEEauthorblockN{6\textsuperscript{th} Given Name Surname}
% \IEEEauthorblockA{\textit{dept. name of organization (of Aff.)} \\
% \textit{name of organization (of Aff.)}\\
% City, Country \\
% email address or ORCID}
}

% \author{
% \IEEEauthorblockN{Michael Shell}
% \IEEEauthorblockA{School of Electrical and\\
% Computer Engineering\\
% Georgia Institute of Technology\\
% Atlanta, Georgia 30332--0250\\
% Email: mshell@ece.gatech.edu}\\   %<------ Line breaks in the current column
% \IEEEauthorblockN{Michael Shell}
% \IEEEauthorblockA{School of Electrical and\\
% Computer Engineering\\
% Georgia Institute of Technology\\
% Atlanta, Georgia 30332--0250\\
% Email: mshell@ece.gatech.edu}
% \and
% \IEEEauthorblockN{Homer Simpson}
% \IEEEauthorblockA{Twentieth Century Fox\\
% Springfield, USA\\
% Email: homer@thesimpsons.com}\\[0.9cm] 
% \IEEEauthorblockN{Homer Simpson}
% \IEEEauthorblockA{Twentieth Century Fox\\
% Springfield, USA\\
% Email: homer@thesimpsons.com}
% }

\maketitle

\begin{abstract}
Recent sequential recommendation models have combined pre-trained text embeddings of items with item ID embeddings to achieve superior recommendation performance. Despite their effectiveness, the expressive power of text features in these models remains largely unexplored. While most existing models emphasize the importance of ID embeddings in recommendations, our study takes a step further by studying sequential recommendation models that only rely on text features and do not necessitate ID embeddings.
Upon examining pre-trained text embeddings experimentally, we discover that they reside in an anisotropic semantic space, with an average cosine similarity of over 0.8 between items.
We also demonstrate that this anisotropic nature hinders recommendation models from effectively differentiating between item representations and leads to degenerated performance. To address this issue, we propose to employ a pre-processing step known as whitening transformation, which transforms the anisotropic text feature distribution into an isotropic Gaussian distribution. 
Our experiments show that whitening pre-trained text embeddings in the sequential model can significantly improve recommendation performance. However, the full whitening operation might break the potential manifold of items with similar text semantics. 
To preserve the original semantics while benefiting from the isotropy of the whitened text features, we introduce WhitenRec+, an ensemble approach that leverages both fully whitened and relaxed whitened item representations for effective recommendations. 
We further discuss and analyze the benefits of our design through experiments and proofs.
Experimental results on three public benchmark datasets demonstrate that WhitenRec+ outperforms state-of-the-art methods for sequential recommendation. 
\end{abstract}

\begin{IEEEkeywords}
Sequential Recommendation, Whitening Transformation
\end{IEEEkeywords}

\section{Introduction}
\label{sec:intro}

The sequential recommendation is a subfield of recommender systems that aims to provide personalized item recommendations to users over time. It considers the order in which items are consumed by users to predict the next item the user is likely to interact with~\cite{kang2018self,ma2019hierarchical,xie2022contrastive,zhou2020s3,zhang2019feature,hou2022towards,sun2019bert4rec}. 
Recently, there has been an upsurge of interest in developing sequential recommendation methods that integrate textual information about items, such as product attributes~\cite{zhou2020s3,zhang2019feature,xie2022decoupled}, descriptions~\cite{hou2022towards,zhou2022bootstrap,wei2020graph}, and reviews~\cite{shuai2022review,tran2022aligning}, with ID embeddings to generate more accurate and relevant recommendations.
These recommendation frameworks usually align text features with ID embeddings, highlighting the significance of ID embeddings in recommendations.  
However, there is a conspicuous absence of research exploring sequential recommendation based solely on text features. 
In this paper, we refer to recommendation methods that only use item text features as \textit{text-based recommendation models}. 
We argue that studying text-based recommendation models, which do not necessitate ID embeddings, offers three primary advantages.
Firstly, these models can greatly improve the performance in cold-start scenarios.
E-commerce platforms introduce thousands of new products daily, and conventional sequential models typically integrate random ID embeddings with pre-trained text embeddings to provide recommendations for these new products. 
\textcolor{black}{The integration of random initialized ID embeddings for these new items may inevitably have a detrimental effect on the performance of recommender systems.}
Secondly, text-based recommendation models can be more efficient than those that require ID embeddings. 
Using only text embeddings simplifies demands for both tensor storage and computational resource, as there is no requirement to maintain a large and frequently updated ID embedding matrix.
Lastly, text embeddings are transferable across platforms or domains, whereas ID embeddings are not since user IDs and item IDs are typically not shared in practice.

However, effectively implementing sequential recommendations with only pre-trained text embeddings is non-trivial. Most existing sequential recommendation models~\cite{hou2022towards,yuan2023go,hou2023learning} directly utilize text embeddings extracted by pre-trained language models (\eg BERT~\cite{vaswani2017attention}). 
\textcolor{black}{One recent work~\cite{yuan2023go} posits that superior performance from text-based recommendation models is achieved exclusively with advanced pre-trained language models. We contend that these models do not optimally harness pre-trained text embeddings for sequential recommendation.} To identify potential issues with these pre-trained text embeddings, we first examine their cosine similarities on three recommendation datasets. Our analysis reveals that the pre-trained text embeddings exhibit a notably high average cosine similarity of approximately 0.8, indicating that their embedding spaces are highly anisotropic. 
We then conduct a quantitative analysis to assess the impact of embedding anisotropy on recommendation performance by comparing the performance of an ID-based method and a text-based method adapted from a widely used framework SASRec~\cite{kang2018self}. 
Our results show that the text-based method often yields sub-optimal results compared to the ID-based method. Although the text-based method learns from additional content information, text embeddings appear to be less expressive than standard item ID embeddings and are insufficient to achieve optimal recommendation performance on their own.

To resolve the problem of anisotropy in pre-trained text embeddings, we employ a pre-processing step known as whitening transformation~\cite{kessy2018optimal}, which transforms the pre-trained text embedding distribution into a smooth and isotropic Gaussian distribution and removes the correlation among axes. We name the sequential recommendation model with whitening transformation as {WhitenRec}. Since the primary learning objective for recommendation is to optimize the alignment and uniformity between item representations and sequence representations~\cite{qiu2022contrastive,wang2022towards}, the improved uniformity of sequence representations resulting from whitening transformation leads to enhanced recommendation performance. Notably, {WhitenRec} significantly improves the performance of the sequential recommendation models while using only text features, outperforming the models using ID embeddings, text embeddings, or both embeddings without whitening.
WhitenRec leverages fully whitened representations, where all dimensions are decorrelated and embeddings are uniformly projected into a spherical distribution. 
Although whitening is effective in recommendation, excessive whitening may have a negative impact on the manifold of items that share similar textual semantics.
Therefore, we can also relax the whitening criteria where partial dimensions are decorrleated and the obtained representations tend to preserve more original text semantics at the expense of embedding uniformity~\cite{weng2022an}.
Although the retention of text semantics may appear advantageous for the recommendation task, our experimental results suggest that full whitening leads to the best performance compared to different degrees of relaxed whitening.

To reap the benefits of full whitening while preserving partial semantics in original text features, we propose an ensemble framework {WhitenRec+}, which combines both fully whitened representations and relaxed whitened representations together to enhance item representation learning for the sequential recommendation. Specifically, fully whitened representations are produced by whitening the pre-trained text embeddings with the most stringent whitening to decorrelate across all dimensions. Relaxed whitened representations are produced with less stringent whitening to decorrelate dimensions within each group of dimensions, \ie correlation among groups is kept. The fully whitened item representations and relaxed whitened item representations are subsequently combined by passing them through a shared projection head and summing their outputs. The obtained representations are then processed by the Transformer for sequential recommendation. 
\textcolor{black}{In order to elucidate the efficacy of WhitenRec and WhitenRec+, we undertake both empirical and theoretical investigations, examining representation uniformity and alignment, conditioning, as well as information reconstruction. 
Our initial findings illustrate that both methods can augment the uniformity of user representations, thereby enhancing recommendation performance.  
Subsequently, the conditioning of the transformed item embedding matrix sees improvement in both methods, thus bolstering training stability and optimization. 
Finally, a mathematical analysis reveals that WhitenRec+ outperforms WhitenRec in information preservation, requiring less data for the reconstruction of training inputs.}

In summary, our contributions are the following:
\begin{itemize}
    \item We streamline the existing sequential recommendation framework by studying models that only utilize item text features without the need for ID embeddings.
    Our empirical analysis reveals that anisotropy in pre-trained text embeddings restricts the performance of text-based sequential recommendation models. To resolve this issue, we employ whitening transformation to transform pre-trained text embedding distribution into an isotropic form, which can significantly improve the performance of text-based sequential recommendation models.
    \item Our empirical analysis of the whitening process reveals that it may hurt the manifold of items exhibiting similar textual semantics. To this end, we propose an ensemble approach, {WhitenRec+}, which leverages different degrees of whitening transformations to reap the benefits of full whitening while preserving some of the inherent semantics in the original text features. 
    We conduct a thorough analysis and discussion of the merits of this design in terms of representation uniformity, conditioning, and information reconstruction.
    \item Extensive experiments are conducted on three benchmark datasets to evaluate the performance of the proposed WhitenRec and WhitenRec+ models for the sequential recommendation. Notably, WhitenRec+ outperforms state-of-the-art models across all metrics for all three datasets. 
\end{itemize}

%%%%%%%%%%%%%%%%%%%%%%%%%%%%%%%%%%%%%%%%%%%%%%%%%%%%%%%%%%%%%%%%%%%%%%%%%%%%%%%%%%%%
\section{Related Work}
\subsection{Sequential Recommender Systems}
The sequential recommendation problem has gained significant attention in the research community. One of the earliest approaches is Markov Chain-based models~\cite{rendle2010factorizing}, which models the probability of transitions between items in a sequence. However, these models often suffer from the cold-start problem and have limited capability of handling complex sequence patterns. 
Another approach views the sequence as an image and has led to the development of a line of works based on Convolution Neural Network (CNN)~\cite{tang2018personalized,yuan2019simple}. 
Recurrent Neural Network (RNN)~\cite{donkers2017sequential,peng2021ham} has shown remarkable performance in utilizing sequential information for recommendation. 
For example, GRU4Rec~\cite{donkers2017sequential} treats users’ behavior sequences as time series data and uses a multi-layer GRU structure to capture the sequential patterns.
Graph Neural Networks (GNN)~\cite{chang2021sequential,IJCAI-GCL4SR} have been explored to model complex item transition patterns. 
For example, GCL4SR~\cite{IJCAI-GCL4SR} employs a global transition graph and the randomly sampled subgraphs to augment the interaction sequence.
Recently, methods based on transformer architecture~\cite{kang2018self,sun2019bert4rec,zhou2020s3,liu2021augmenting,xie2022contrastive} have shown strong performance in capturing long-range dependencies in a sequence. 
SASRec~\cite{kang2018self} uses a self-attention mechanism and a positional encoding scheme to encode the sequential order of the items. BERT4Rec~\cite{sun2019bert4rec} extends SASRec with a bi-directional self-attention module. CL4SRec~\cite{xie2022contrastive} develops three data augmentation approaches, including item cropping, masking, and reordering, to facilitate contrastive tasks and self-supervised signal extraction.

\subsection{Text-enhanced Recommender Systems}
Recent works~\cite{zhou2020s3,xie2022decoupled,zhang2019feature,hou2022towards,zhou2022bootstrap,zhang2023multimodal,wei2020graph,yuan2023go,zhou2023tale,zhou2023comprehensive} have attempted to leverage textual data of items, such as descriptions, attributes, or brands of products to improve item representations for recommendations. Text-enhanced recommender systems have gained increasing attention due to the explosion of text data and the need for more personalized and informative recommendations.
Some works~\cite{zhou2020s3,xie2022decoupled} focus on modeling item attributes and optimizing the model with attribute prediction task. For example, S$^3$-Rec~\cite{zhou2020s3} adopts a pre-training strategy to predict the correlation between an item and its attributes. In DIF-SR~\cite{xie2022decoupled}, the modeling of item attributes is moved from the input to the attention layer. The attention calculation of auxiliary information and item representation is decoupled to improve the modeling capability of item representations.
With the fast development of Natural Language Processing (NLP) techniques, more works~\cite{zhang2019feature,hou2022towards,zhou2022bootstrap,yuan2023go,zhou2023enhancing} extract the pre-trained features from product descriptions using pre-trained language models. For instance,
in FDSA~\cite{zhang2019feature}, different item features are first aggregated using a vanilla attention layer, followed by a feature-based self-attention block to learn how features transit among items in a sequence. Although these works achieve promising results, they directly utilize the pre-trained text embeddings without analyzing their potential problems. 
\textcolor{black}{
It is noteworthy that UniSRec~\cite{hou2022towards} also proposes utilizing item texts to derive more transferable representations for sequential recommendations. It further involves a linear transformation of the original text representations to mitigate their anisotropy problem. However, our experimental results, as detailed in Sec.~\ref{sec:whiten}, reveal that this parametric approach does not necessarily yield whitened outputs that eliminate correlation across feature dimensions, thereby leading to suboptimal performances.}

\subsection{Whitening}
The whitening, or decorrelation, is a data transformation process with the theoretical guarantee of avoiding collapse by decorrelating each feature dimension~\cite{kessy2018optimal}. 
Among the earliest approaches to whitening is Principal Component Analysis (PCA), first introduced for data analysis and dimensionality reduction~\cite{jolliffe2002principal}, and more recently adapted for use in deep learning~\cite{desjardins2015natural}.
Compared with PCA, Zero-phase Component Analysis (ZCA)~\cite{bell1997independent} whitening introduces an
additional rotation back to the original coordinate system.
Cholesky Decomposition (CD)~\cite{dereniowski2004cholesky} whitening proposed by \cite{siarohin2018whitening} decomposes the covariance matrix into a lower triangular matrix and its conjugate transpose.
Recently, UniSRec~\cite{hou2022towards} adopts a parametric whitening (PW) method which incorporates a linear layer in the whitening transformation for better generalizability.

In the field of deep learning, prior research efforts~\cite{ioffe2015batch,huang2018decorrelated,hua2021feature} explore the application of whitening techniques to the activation of intermediate layers in neural networks. Batch Normalization (BN)~\cite{ioffe2015batch} is the first to perform normalization per mini-batch, thereby enabling back-propagation and reducing the internal covariate shift during training. 
Decorrelated Batch Normalization (DBN)~\cite{huang2018decorrelated} builds upon BN by incorporating ZCA whitening over mini-batch data to further remove correlation among dimensions. 
Lately, another research direction~\cite{ermolov2021whitening,weng2022an,bardes2022vicreg} has emerged, focusing on employing whitening for self-supervised learning, which seeks to avoid the collapse of augmented representations into a single point.
\textcolor{black}{Different from these studies, our work leverages different degrees of decorrelation strength during the whitening process of pre-trained text embeddings to enhance the representation learning for the sequential recommendation.}

%%%%%%%%%%%%%%%%%%%%%%%%%%%%%%%%%%%%%%%%%%%%%%%%%%%%%%%%%%%%%%%%%%%%%%%%%%%%%%%%%%%%
\section{\textcolor{black}{Preliminary and Findings}}
\begin{figure*}
  \centering
  \includegraphics[width=0.9\linewidth]{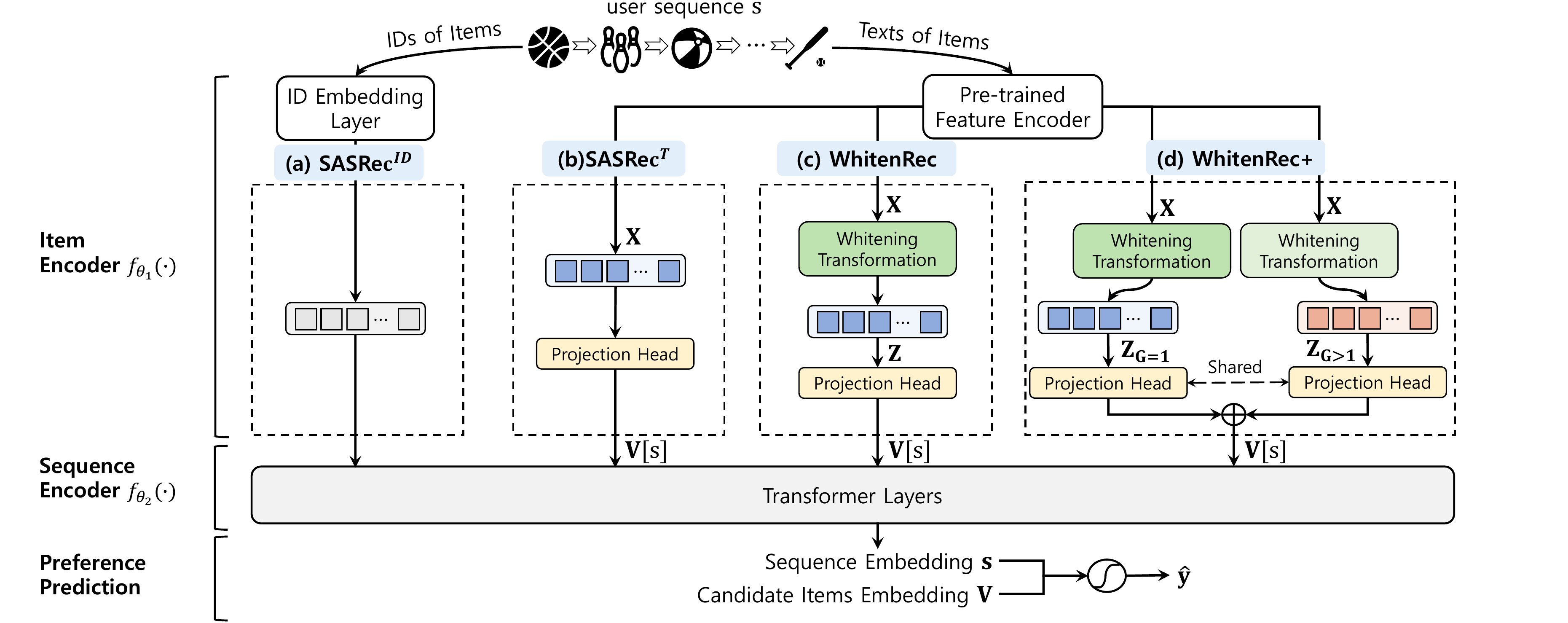}
  \caption{Overall Framework of presenting four variations of sequential recommendation methods, which are SASRec$^{ID}$, SASRec$^{T}$, WhitenRec, and WhitenRec+. Each method is composed of three components, including an item encoder, a sequence encoder, and the preference prediction layer.}
  \label{fig:overall}
  \vspace{-1em}
\end{figure*}

\textcolor{black}{In this section, we begin by outlining an overall framework for sequential recommendation. Subsequently, we delve into our experimental findings concerning anisotropic embeddings, which have been identified to adversely affect the recommendation performance.}

\subsection{Overall Framework}
We present a general framework for sequential recommendation models adapted from SASRec~\cite{kang2018self}, as depicted in Fig.~\ref{fig:overall}.
The framework comprises three major components: an item encoder to extract the latent features of items, a sequence encoder to derive the sequence embedding as the user representation, and a prediction layer to predict the next item. 
Typically, the standard Transformer~\cite{vaswani2017attention} architecture is used as the sequence encoder, and prediction is made based on the inner product between the user \textcolor{black}{(\ie item sequence)} and item representations. 
Fig.~\ref{fig:overall}a illustrates the base model SASRec, which we refer to as SASRec$^{ID}$ throughout the remainder of this paper to facilitate comparison.

The above framework involves a given set of items, denoted as $\mathcal{I}$, an item embedding table $\mathbf{E}\in \mathbb{R}^{d_0\times |\mathcal{I}|}$, and a user sequence $s$ that is comprised of a chronological sequence of items from $\mathcal{I}$. Here, $|\mathcal{I}|$ represents the size of the set and $d_0$ represents the dimension size. 
The recommendation objective is to minimize the cross-entropy loss by training the model parameters $\theta_1$ and $\theta_2$. The objective function $\mathcal{L}$ can be formally represented as:
\begin{gather}
    \mathcal{L} = - \log (\hat{\mathbf{y}}) \, \textrm{ONE-HOT}(\mathbf{y}), 
    \quad
    \hat{\mathbf{y}} = \text{softmax}(\mathbf{V}\mathbf{s}),  \\
    \mathbf{V} = f_{\theta_1}(\mathbf{E}), 
    \quad
    \mathbf{s} = f_{\theta_2}(\mathbf{V}[s]), 
\end{gather}
where $\mathbf{y}$ is the ground-truth next item given a user sequence $s$. $f_{\theta_1}(\cdot)$ is the item encoder that contains an embedding layer and/or a projection head and $f_{\theta_2}(\cdot)$ is the sequence encoder leveraging the Transformer~\cite{vaswani2017attention}.
$\mathbf{V}\in \mathbb{R}^{d \times |\mathcal{I}|}$ denotes the embedding matrix of all items output from the item encoder.
$\mathbf{V}[s]\in \mathbb{R}^{d \times |s|}$ retrieves embeddings of all items in $s$ from $\mathbf{V}$.
Following prior works~\cite{zhou2020s3,qiu2022contrastive,xie2022contrastive}, the hidden vector corresponding to the last position of the sequence from Transformer is selected to be the user representation $\mathbf{s} \in \mathbb{R}^{d \times 1}$.

\subsection{Anisotropic Embedding Space Induces Poor Recommendation Performance}
\label{sec:problem}

Recent research in the field of NLP has revealed that BERT sentence embeddings tend to degenerate into an anisotropic shape, which is referred to as the \textit{representation degeneration problem}~\cite{li2020emnlp,su2021whitening,ethayarajh-2019-contextual}. The embeddings are pushed into a similar direction that is negatively correlated with most hidden states, thus clustering in a narrow cone region of the embedding space. This phenomenon can result in high semantic similarities among embeddings and limit the effectiveness of sentence embeddings. Moreover, it has been demonstrated that this representation degeneration problem can adversely impact the performance of downstream language modeling tasks as well~\cite{li2020emnlp,su2021whitening}.
Since BERT embeddings are commonly utilized by text-based recommendation models to extract text information of items, we conduct a preliminary investigation to determine whether the representation degeneration problem also affects the performance of text-based sequential recommendation models.

We study three datasets from Amazon~\cite{ni2019justifying}, including Arts, Toys, and Tools. Following~\cite{hou2022towards,zhang2021mining,zhou2022bootstrap}, we first concatenate titles, categories, and brands of items as their text descriptions. Next, for each item, a special learnable symbol $[\texttt{CLS}]$ is prepended to the beginning of its text descriptions, after which the concatenated text sequence is processed by the BERT~\cite{devlin2018bert}. We use the output of $[\texttt{CLS}]$ as the text embedding of the item, which is a 768-dimensional vector.

\begin{figure}
  \begin{center}
    \includegraphics[width=0.25\textwidth]{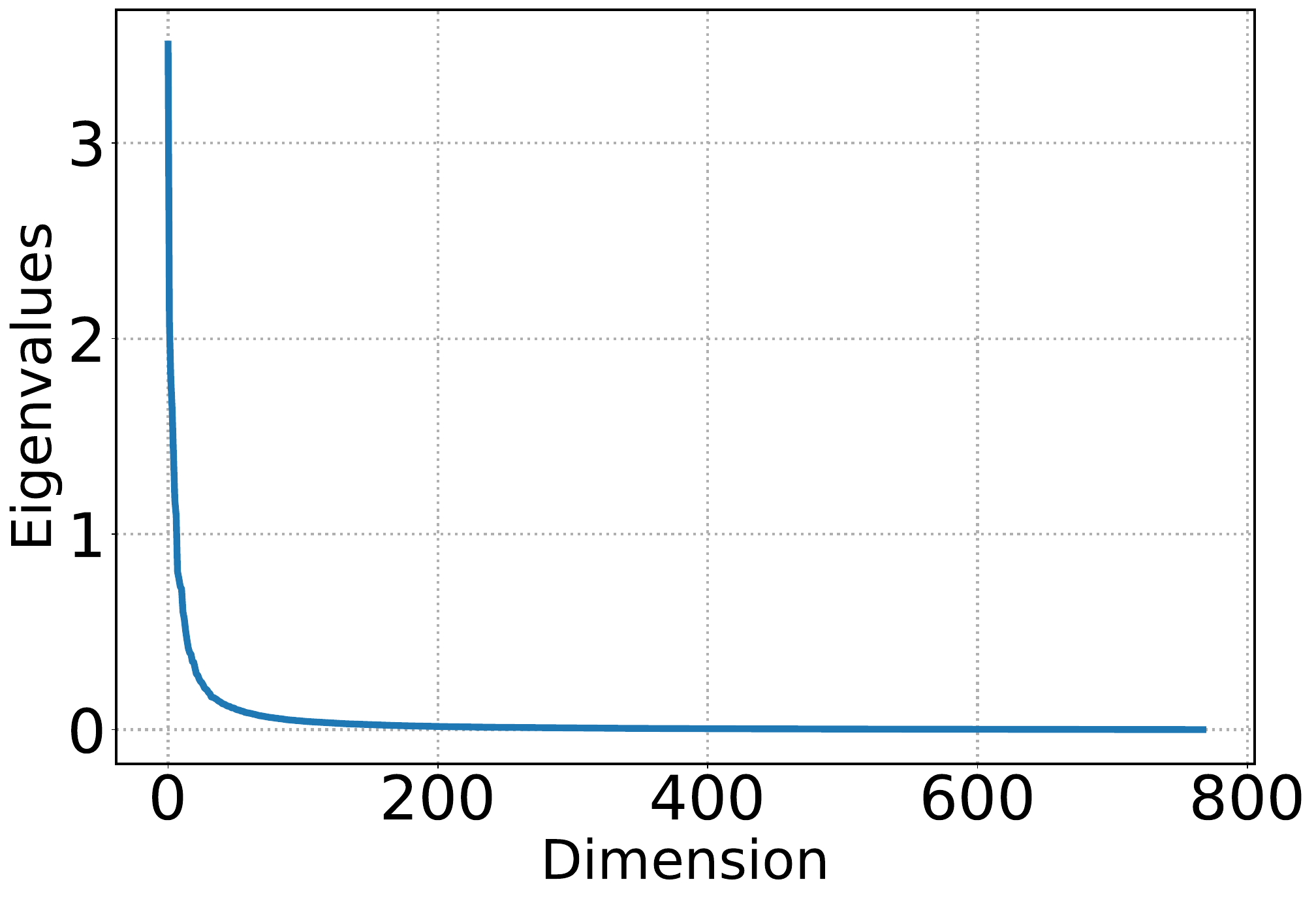}
  \end{center}
  \vspace{-1.5em}
  \caption{Normalized singular values of item text embeddings for Arts. For concision, we omit plots of other datasets as they exhibit similar trends.}
  \label{fig:svd}
  \vspace{-1.5em}
\end{figure}
To show that pre-trained text embeddings in these three datasets also suffer from representation degeneration, we plot their singular values in Fig. ~\ref{fig:svd} and observe a rapid decrease in small values. 
This suggests an anisotropic nature in which one dimension is dominant while the effectiveness of other dimensions is limited.
Additionally, for each item pair (\ie different items) in a dataset, we calculate the cosine similarity based on their pre-trained text embeddings. 
The average cosine similarities of all item pairs for Arts, Toys, and Tools datasets are 0.85, 0.84, and 0.85 respectively.
Indeed, item representations are presented with high cosine similarities, which indicates that their semantic similarities are high and their embedding distributions are highly anisotropic.
Therefore, it is difficult to distinguish between items that use semantically different texts but are close to each other in the embedding space. 
The above analysis demonstrates that the pre-trained text embeddings in recommendation domain also manifest the representation degeneration problem.

\begin{figure*}
     \centering
     \subfloat[Pre-trained text embedding]{\includegraphics[width=0.22\textwidth]{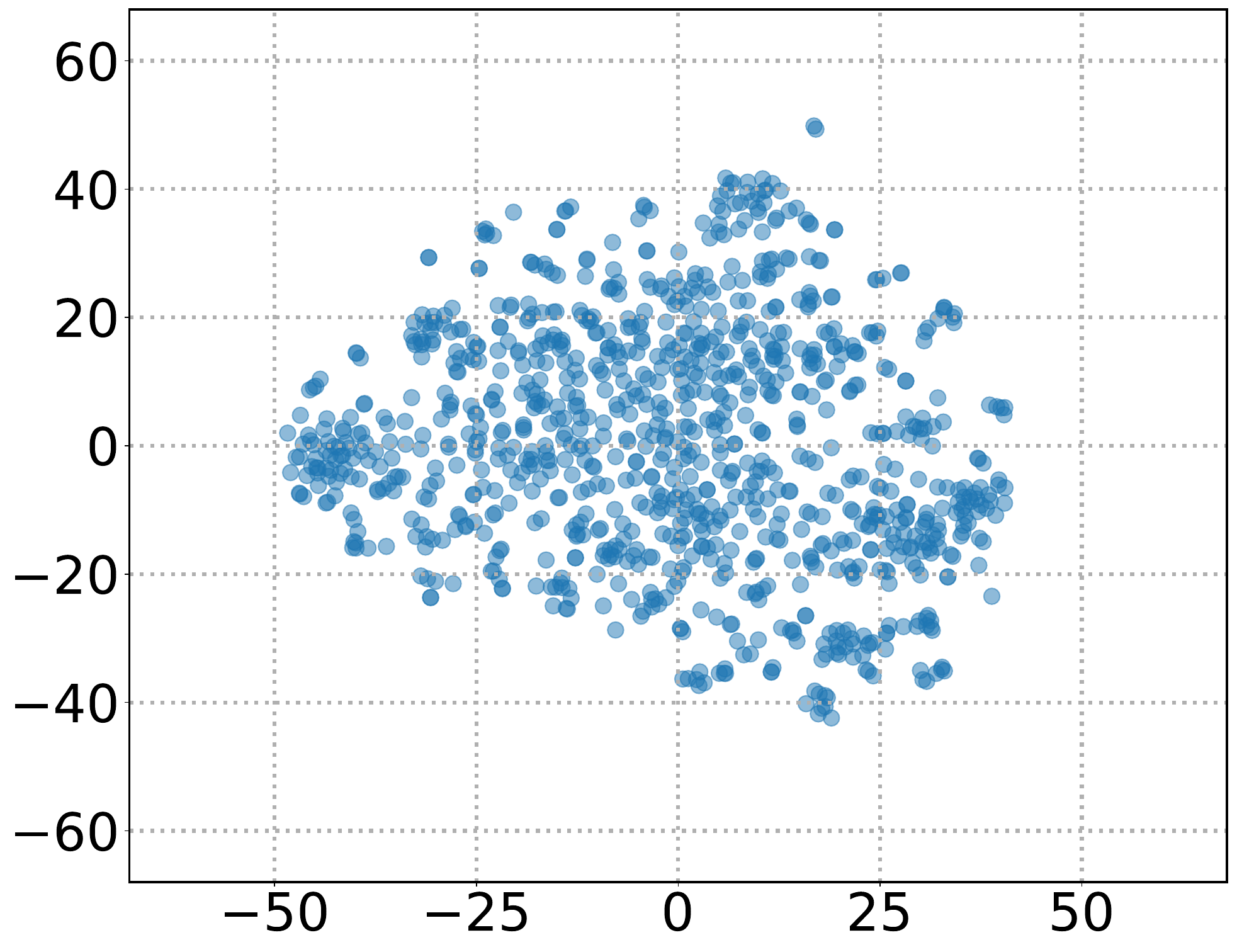}
     \label{fig:tsne-text}}
     \
     \subfloat[Whitened embedding $G=1$]{\includegraphics[width=0.22\textwidth]{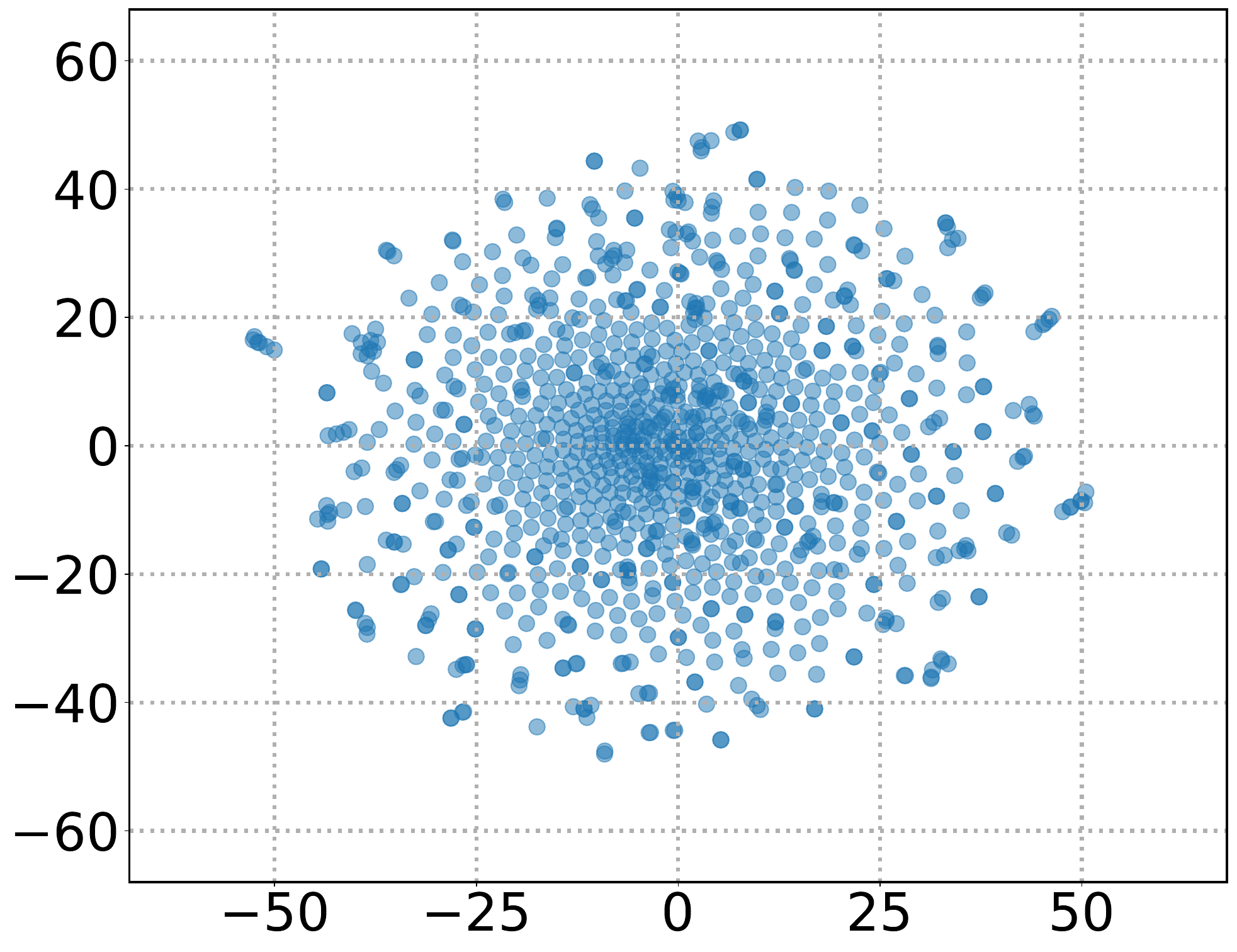}
     \label{fig:tsne-g1}}
     \
     \subfloat[Whitened embedding $G=4$]{\includegraphics[width=0.22\textwidth]{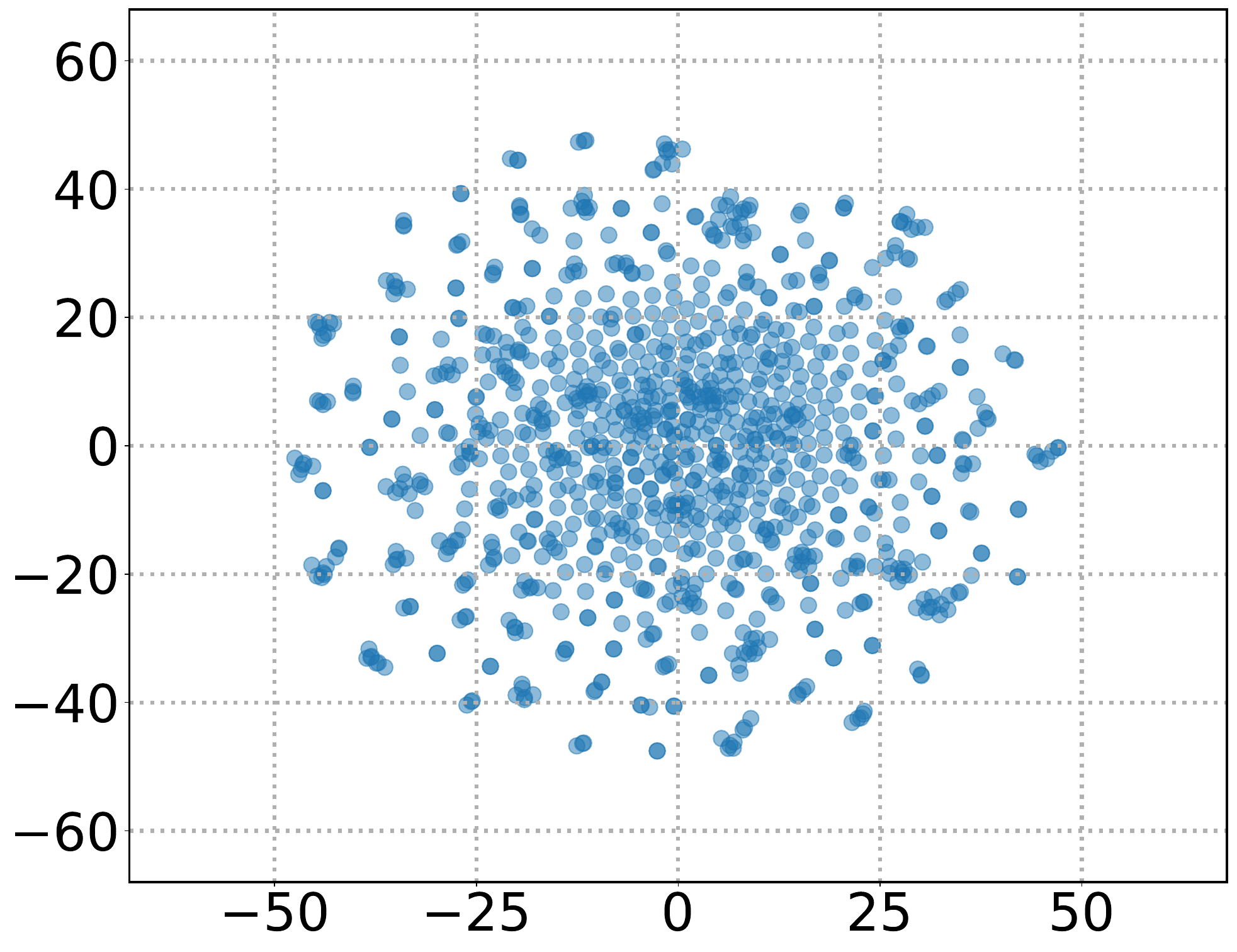}
     \label{fig:tsne-g4}}
     \
     \subfloat[Whitened embedding $G=32$]{\includegraphics[width=0.22\textwidth]{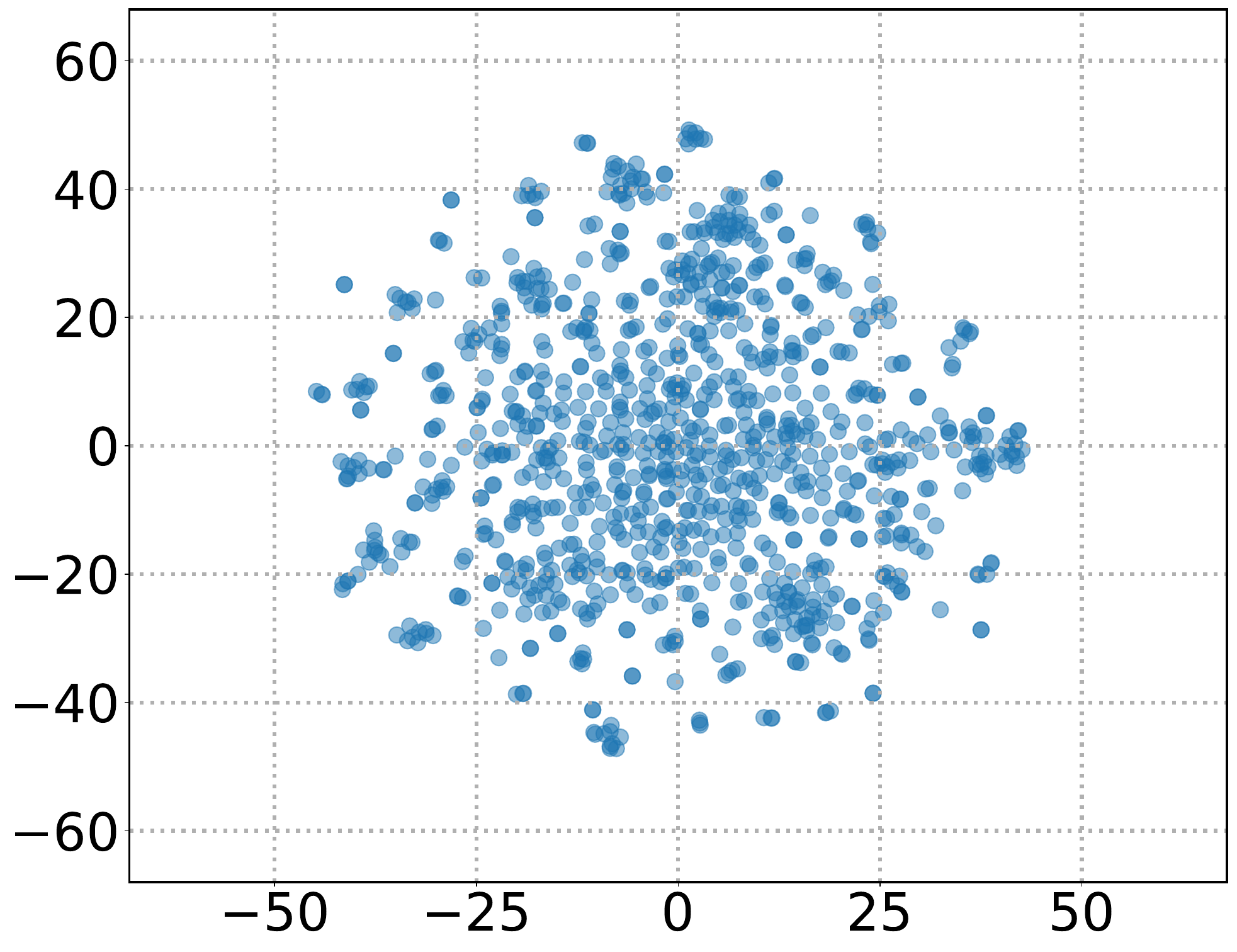}
     \label{fig:tsne-g32}}
     \caption{t-SNE plots of item text embeddings under different settings for Arts.}
     \label{fig:tsne}
     \vspace{-1em}
\end{figure*}

To demystify how the degeneration of representation in pre-trained text embeddings affects recommendation performance, we conduct a quantitative analysis of the independent impact of text embeddings on recommendation performance. 
In particular, we implement a specific instantiation of the general framework (Fig.~\ref{fig:overall}), which we refer to as SASRec$^T$ (Fig.~\ref{fig:overall}b): 
the embedding matrix of items $\mathbf{E}$ is initialized with pre-trained text embedding matrix $\mathbf{X}\in\mathbb{R}^{d_t\times |\mathcal{I}|}$ of items and is not updated during training. $d_t$ is the feature dimension size.
It is worth noting that SASRec$^T$ does not utilize ID embeddings.
$f_{\theta_1}(\cdot)$ is a projector MLP with two hidden layers for feature transformation, where ReLU activation is appended to both hidden layers of the projector. 
We compare the recommendation performance of SASRec$^{T}$ with SASRec$^{ID}$, which does not incorporate text information. 
As presented in Table~\ref{tab:compare}, the application of text embeddings improves Recall, yet it negatively impacts the NDCG in both the Arts and Tools datasets. Considering the Toys dataset, a decline is observed across all evaluated metrics. 
Despite incorporating more informative item contents as opposed to randomly initialized ID embeddings in SASRec$^{ID}$, the effectiveness of SASRec$^T$ is inferior to that of SASRec$^{ID}$ on most datasets.
We suspect that anisotropic item embedding spaces may be the underlying cause of constraining the performance of text-based sequential recommendation models.

\section{\textcolor{black}{Methods: WhitenRec \& WhitenRec+}}
In this section, to address the anisotropy problem, we propose to apply the whitening transformation to eliminate the strong correlation between axes and make text embeddings more isotropic. By doing so, the recommendation performance can be significantly improved. Additionally, we introduce a simple yet effective extension that combines relaxed whitened representations with fully whitened representations, enhancing item representation learning for sequential recommendation.

\subsection{Whitening Transformation to Resolve Anisotropy Problem}
\label{sec:whitening-x}

The anisotropy of pre-trained text embeddings is a widely recognized form of feature degeneration in representation learning. As such, prior works~\cite{hua2021feature, weng2022an} have demonstrated that the application of a whitening transformation~\cite{kessy2018optimal} to project the elements of pre-trained text embeddings onto a spherical distribution can mitigate the anisotropy problem and reduce similarity among distinct instances.
The whitened representation removes the correlation among axes and ensures the item set is scattered in a spherical distribution to avoid the feature collapse with theoretical guarantee~\cite{weng2022an}.

To perform the whitening transformation, given pre-trained text embeddings of all items $\mathbf{X}\in\mathbb{R}^{d_t \times |\mathcal{I}|}$, the whitened output $\mathbf{Z}$ is derived as
\begin{align}
    \mathbf{Z} = \Phi(\mathbf{X}-\mu\cdot\mathbf{1}^\top),
    \label{eq:zca1}
\end{align}
where $\Phi:\mathbb{R}^{d_t \times |\mathcal{I}|}\rightarrow \mathbb{R}^{d_t \times |\mathcal{I}|}$ denotes the function for whitening transformation, $\mu=\frac{1}{|\mathcal{I}|}\mathbf{X}\cdot\mathbf{1}$ is the mean of $\textbf{X}$, $\mathbf{1}$ is the column vector of all ones.
There are many possible ways to perform whitening, including BN~\cite{ioffe2015batch}, PCA~\cite{huang2018decorrelated,kessy2018optimal}, CD~\cite{dereniowski2003cholesky}, and ZCA~\cite{bell1997independent} whitening. Different whitening methods differ in the choice of $\Phi$. 
By default, we choose ZCA whitening, which yields the best performance for most experimental datasets. We also compare different whitening operations and report the details of experimental results in Sec.~\ref{sec:whiten}.
For ZCA whitening, $\Phi$ is defined as follows,
\begin{align}
    \Phi = \mathbf{D}\Lambda^{-\frac{1}{2}}\mathbf{D}^{\top},
    \label{eq:zca2}
\end{align}
where $\Lambda = \text{diag}(\sigma_1,\cdots,\sigma_{d_t})$ and $\mathbf{D}=[\mathbf{d}_1,\cdots,\mathbf{d}_{d_t}]$ are the eigenvalues and associated eigenvectors of $\mathbf{\Sigma} = \mathbf{D}\Lambda\mathbf{D}^\top$. $\mathbf{\Sigma}=\frac{1}{|\mathcal{I}|}(\mathbf{X} - \mu\cdot\mathbf{1}^\top)(\mathbf{X} - \mu\cdot\mathbf{1}^\top)^\top + \epsilon\mathbf{I}$ is the covariance matrix of the centered input $\mathbf{X}$. $\Phi$ ensures the transformed output $\mathbf{Z}$ has the property of $\mathbf{Z}\mathbf{Z}^\top= \mathbf{I}_{d_t}$ to make $\mathbf{X}$ fully whitened. 
Given that the ZCA assumes a full rank covariance matrix, conducting ZCA on all items in which the cardinality of $\mathcal{I}$ significantly exceeds the dimensionality of $d_t$ (\ie $|\mathcal{I}|\gg d_t$) ensures that $\Sigma$ is full rank.
We visualize the t-SNE of item text embeddings before and after ZCA whitening in Fig.~\ref{fig:tsne-text} and Fig.~\ref{fig:tsne-g1}, respectively. We observe that the distribution of item text embeddings that have undergone whitening exhibits spherical symmetry around the origin and is uniformly spread in all directions.

\begin{table}
\caption{\textcolor{black}{Performance comparison of methods without whitening and with whitening. R@20 and N@20 are reported.}}
  \centering
   \small
  \begin{tabular}{l@{\hspace{0.5\tabcolsep}}l@{\hspace{0.5\tabcolsep}}|c@{\hspace{0.5\tabcolsep}}c@{\hspace{0.5\tabcolsep}}|c@{\hspace{0.5\tabcolsep}}c}
    \toprule
    Dataset&Metric& SASRec$^{ID}$ &SASRec$^T$ &WhitenRec & \%Improv\\
    \midrule
    \multirow{2}{*}{Arts}
    &R@20&0.1410&\underline{0.1476}&\textbf{0.1625}& 10.1\%\\
    &N@20&\underline{0.0776}&0.0721&\textbf{0.0796}& 2.6\%\\
    \midrule
    \multirow{2}{*}{Toys}
    &R@20&\underline{0.1121}&0.0983&\textbf{0.1201}& 7.1\% \\
    &N@20&\underline{0.0467}&0.0429&\textbf{0.0521}& 11.6\%\\
    \midrule
    \multirow{2}{*}{Tools}
    &R@20&0.0712&\underline{0.0739}&\textbf{0.0861}& 16.5\%\\
    &N@20&\underline{0.0418}&0.0386&\textbf{0.0453}& 8.4\%\\

    \bottomrule
  \end{tabular}
  \vspace{-1em}
  \label{tab:compare}
\end{table}

We incorporate the ZCA whitening of Eqn.~\eqref{eq:zca1} into SASRec$^{T}$ and refer to the resulting model as WhitenRec, which is illustrated in Fig.~\ref{fig:overall}c. 
\textcolor{black}{As indicated in Table~\ref{tab:compare}, the corresponding representation yields a significant improvement of 10.1\%, 7.1\%, 16.5\% in Recall@20 compared to SASRec$^{ID}$ or SASRec$^{T}$ on Arts, Toys, and Tools respectively.}
It is evident that the application of the whitening transformation in WhitenRec, without introducing any additional trainable parameters, results in a significant improvement in performance.

\begin{figure*}
  \quad
  \centering
  \begin{minipage}[b]{0.179\paperwidth}
    \subfloat[Arts]{\includegraphics[width=\textwidth]{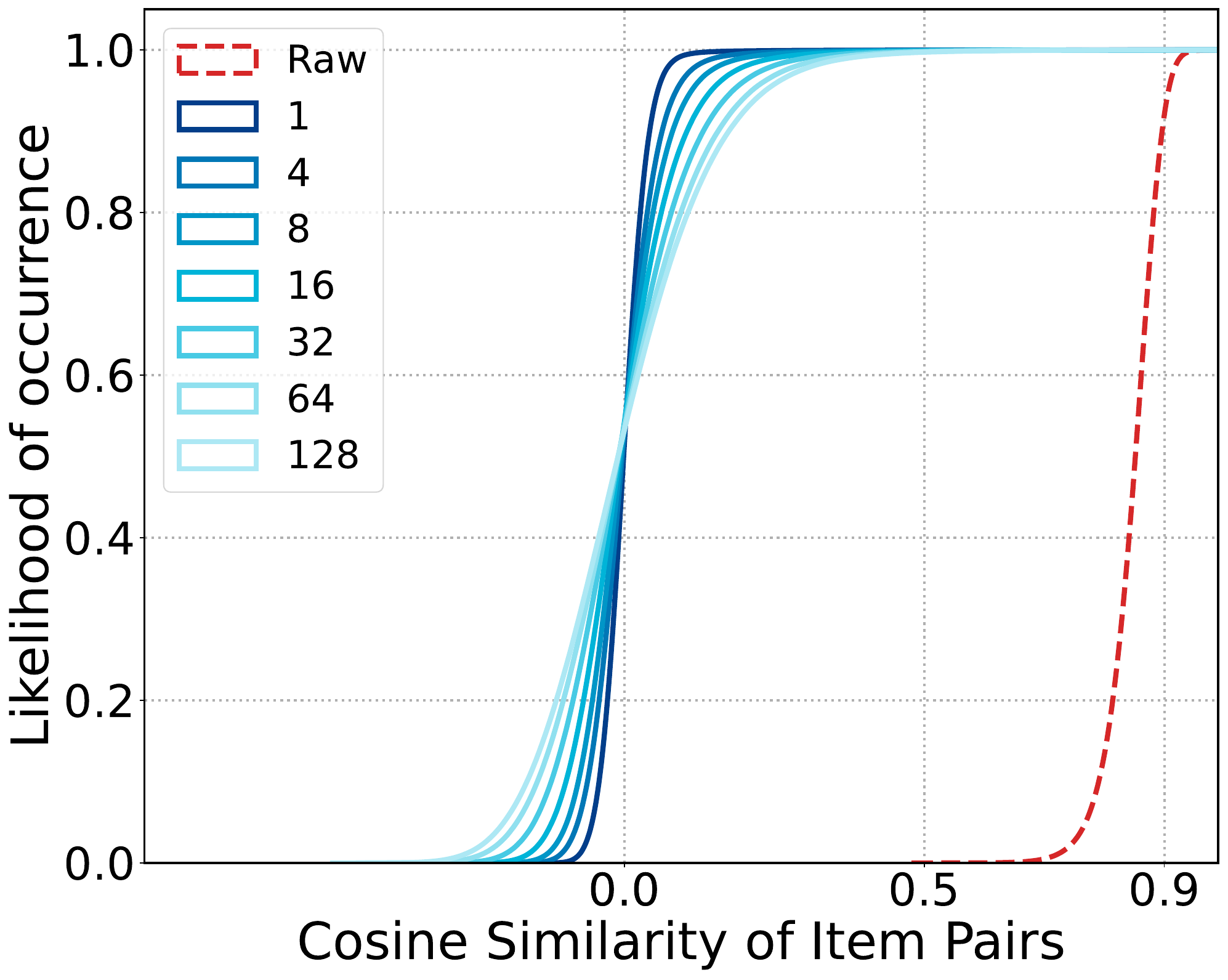}}
    \caption{CDF of item pairs.}
    \label{fig:cdf-items}
  \end{minipage}
  \hfill
  % \quad
  \centering
  \begin{minipage}[b]{0.6\paperwidth}
    \subfloat[Arts]{\includegraphics[width=0.32\textwidth]{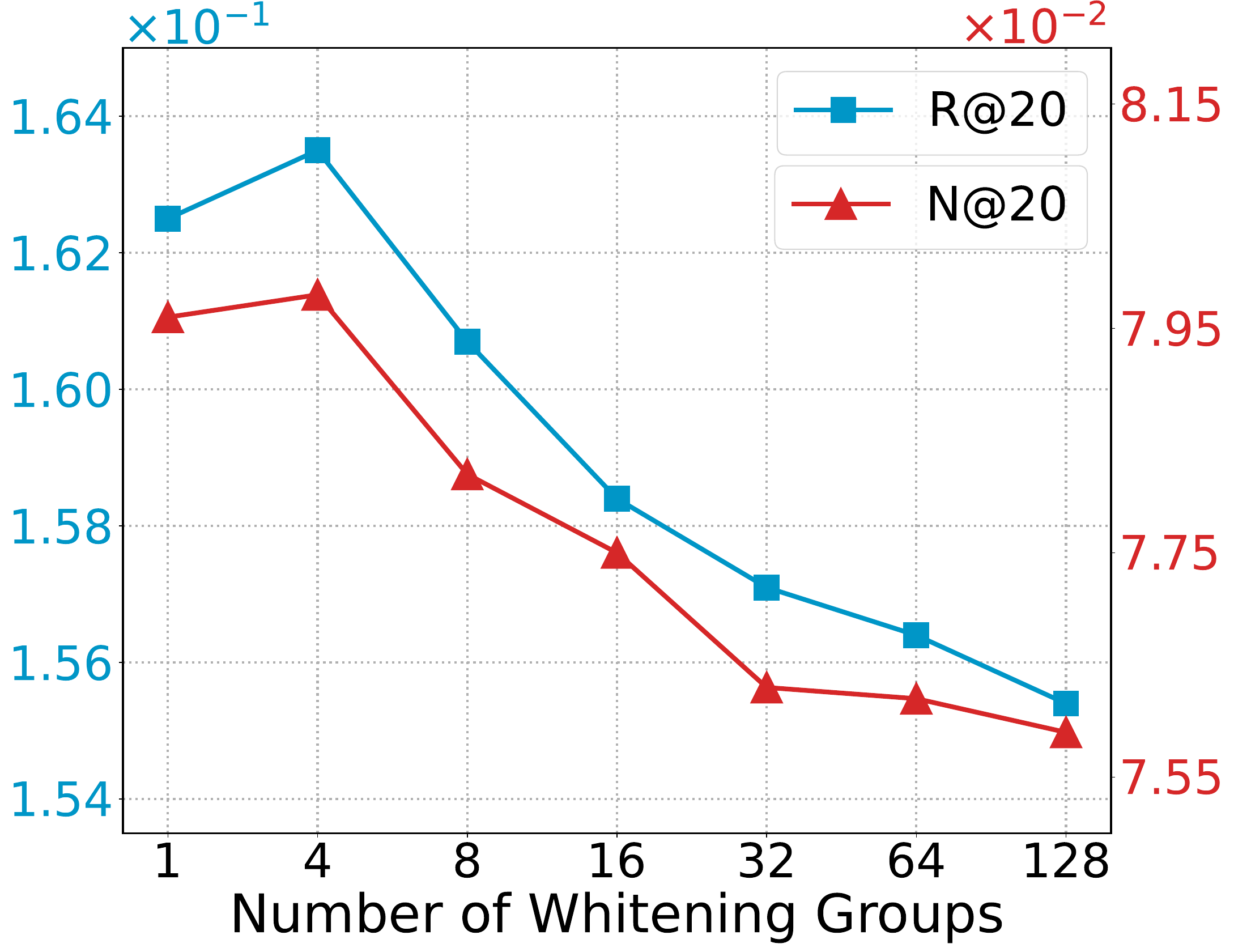}}
    \hfill
    \subfloat[Toys]{\includegraphics[width=0.315\textwidth]{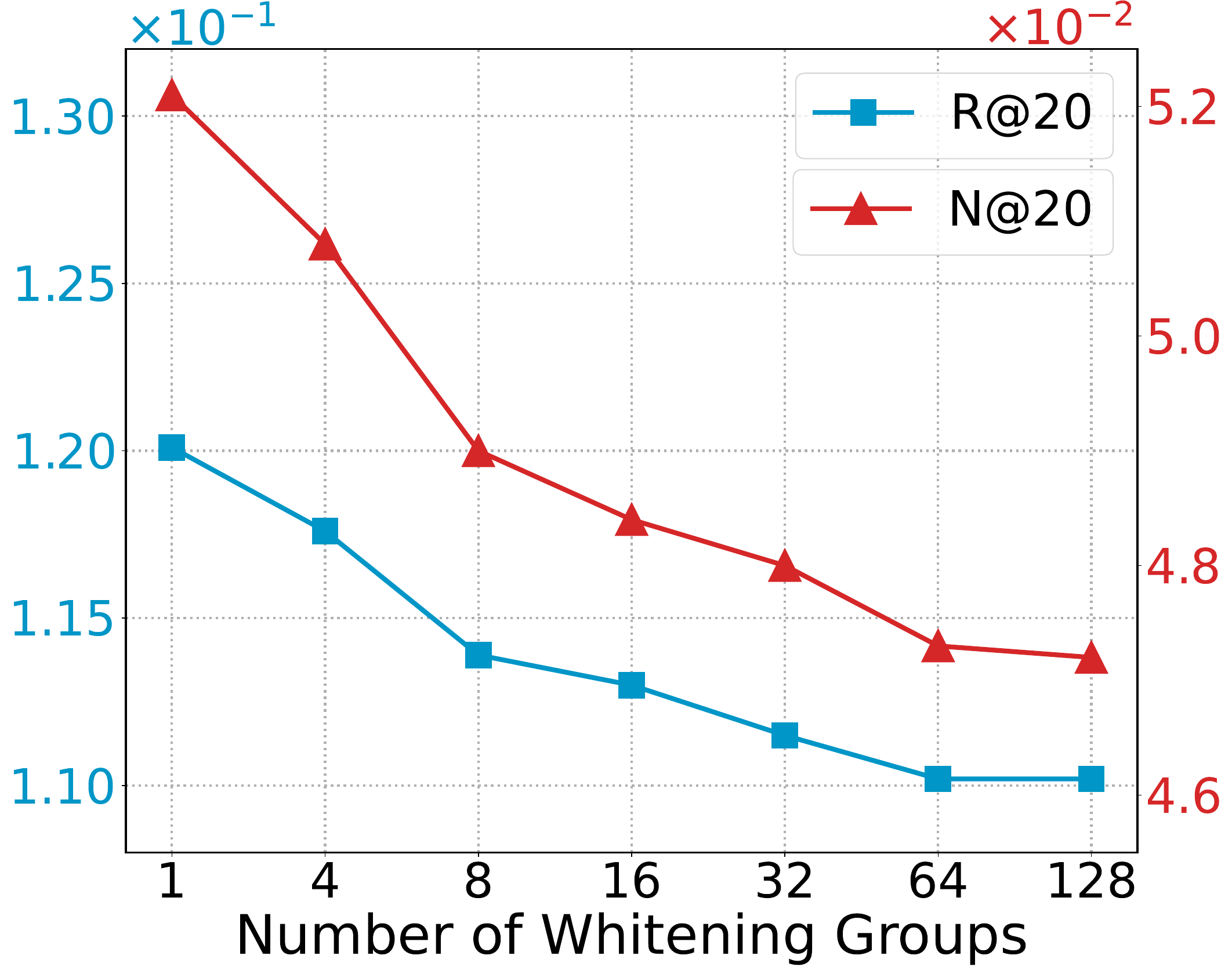}}
    \hfill
    \subfloat[Tools]{\includegraphics[width=0.308\textwidth]{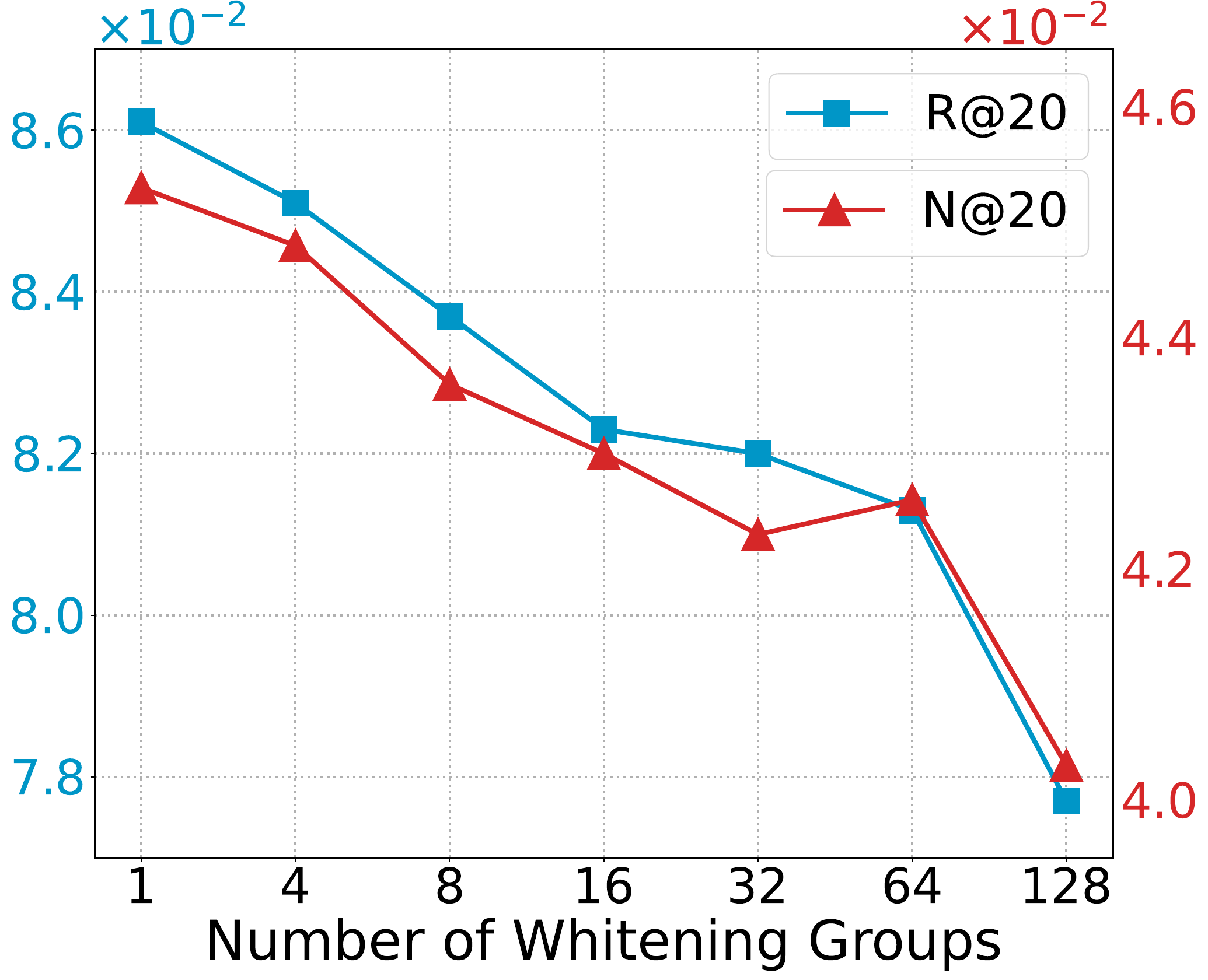}}
    \caption{Performance of different groups $G$ for Whitening.}
    \label{fig:group-whiten}
  \end{minipage}
  \vspace{-1em}
\end{figure*}
\subsection{Relaxed Whitening for Retaining Text Semantics}

Although ZCA whitening can effectively decorrelate pre-trained text embeddings, 
our observations indicate that the fully whitened representation (Fig.~\ref{fig:tsne-g1}) may have an adverse impact on the manifold of items sharing similar textual semantics, in comparison to the original text representation (Fig.~\ref{fig:tsne-text}).

Inspired by~\cite{huang2018decorrelated,hua2021feature}, we adapt ``group whitening'' to standardize covariance matrices within dimensional groups to relax the extent of whitening and retain more original textual semantics. Specifically, the relaxed whitening with the number of groups $G$ takes as its input a matrix $\mathbf{X}\in\mathbb{R}^{d_t \times |\mathcal{I}|}$ and its output is a matrix $\mathbf{Y}\in\mathbb{R}^{d_t \times |\mathcal{I}|}$ computed as:
\begin{align}
    \mathbf{Y}^{[h]} & = \text{ZCA}(\mathbf{X}^{[h]}),
\end{align}
where $\mathbf{X}^{[h]} = \left( \left(\mathbf{X}_{ (h-1) \cdot \frac{d_t}{G} + 1 } \right)^\top, \cdots, \left(\mathbf{X}_{h \cdot \frac{d_t}{G}} \right)^\top \right)^\top \in \mathbb{R}^{\frac{d_t}{G} \times |\mathcal{I}|}$ and $\mathbf{Y}^{[h]} = \left( \left(\mathbf{Y}_{ (h-1) \cdot \frac{d_t}{G} + 1 } \right)^\top, \cdots, \left(\mathbf{Y}_{h \cdot \frac{d_t}{G}} \right)^\top \right)^\top \in \mathbb{R}^{\frac{d_t}{G} \times |\mathcal{I}|}$. $\text{ZCA}(\mathbf{X})=\mathbf{D}\Lambda^{-\frac{1}{2}}\mathbf{D}^{\top}(\mathbf{X}-\mu\cdot\mathbf{1}^\top)$ follows Eqn.~\eqref{eq:zca1} and \eqref{eq:zca2}.
In other words, $\mathbf{Y}$ is derived by dividing all feature dimensions $d_t$ into $G$ groups and applying ZCA whitening to each group independently.

We visualize the Cumulative Distribution Function (CDF) plot of item pairs concerning different extents of whitening on text embeddings of Arts, \ie different $G$, in Fig.~\ref{fig:cdf-items}. The legend specifies $G$ involved in the whitening transformation, with a decreasing extent of whitening resulting from an increase in $G$. ``Raw'' denotes the original text features without whitening. Other datasets showing similar distributions are omitted for space constraints.
Namely, a smaller value of $G$ corresponds to a higher degree of decorrelation, indicating a stronger suppression of redundant information. 
From Fig.\ref{fig:cdf-items}, weaker whitening leads to a less concentrated CDF line within a broader range, indicating increasingly similar item representations and more preserving of textual semantics.

Despite the apparent advantage of retaining text semantics for recommendation tasks, our findings suggest that the exclusive use of relaxed whitened item representations for recommendation may result in cluttered item embedding distributions, as demonstrated in Fig.~\ref{fig:tsne-g1}, Fig.~\ref{fig:tsne-g4} and Fig.~\ref{fig:tsne-g32}.
In these figures, we perform whitening with values of $G$ equal to 1, 4, and 32, respectively. It is apparent that as $G$ increases, the distribution becomes increasingly non-uniform.
To investigate the impact of $G$ on the recommendation performance, we conduct experiments on WhitenRec by varying $G$ in the range of $\{1, 4, 8, 16, 32, 64, 128\}$ and report the results for Arts, Toys, and Tools datasets in Fig.~\ref{fig:group-whiten}. The results indicate that optimal performance is achieved when $G$ is set to a smaller value, suggesting that further decorrelation enhances the representation learning of sequential recommendation.

\subsection{Ensemble of Relaxed Whitening for Further Gains}
The preceding sections present evidence that applying whitening techniques for dimension decorrelation mitigates the feature degeneration issue, consequently enhancing the sequential recommendation performance. Nonetheless, a relaxation of the whitening criteria aiming to retain more of the original text semantics results in sub-optimal performance.

To maximize the advantages of complete whitening while retaining some semantic content from the original text features, we propose an ensemble method WhitenRec+, which leverages both fully whitened representations and relaxed whitened representations to further improve the learning of item representations.
The framework is depicted in Fig.~\ref{fig:overall}d.
Specifically, we apply both the most stringent and relaxed whitening on item text features $\mathbf{X}$. The resultant embeddings are denoted as $\mathbf{Z}_{G=1}$ and $\mathbf{Z}_{G>1}$, respectively. 
Then, we map both $\mathbf{Z}_{G=1}$ and $\mathbf{Z}_{G>1}$ into a latent representation space using the item encoder $f_{\theta_1}(\cdot)$, \ie a shared projection head consisting of two MLP layers. The outputs from $f_{\theta_1(\cdot)}$ are combined using element-wise summation:
\begin{align}
\mathbf{V} = f_{\theta_1}\left(\mathbf{Z}_{G = 1}\right) + f_{\theta_1}\left(\mathbf{Z}_{G>1}\right).
\end{align}
Subsequently, $\mathbf{V}$ is used as the input to the sequence encoder for the generation of recommendations.

Intuitively, the recommendation model can learn representations that are both discriminative and robust to variations in the input data by leveraging both fully whitened representations and relaxed whitened representations. Results in Sec.~\ref{sec:exp} show that WhitenRec+ not only can benefit the warm setting of sequential recommendation but also help further improve the cold setting with a performance improvement of 8.5\%, 17.9\%, and 64.5\% on NDCG@50 for Arts, Toys, and Tools.

\begin{figure*}
% \color{blue}
     \centering
     \subfloat[Arts-Users]{\includegraphics[width=0.22\textwidth]{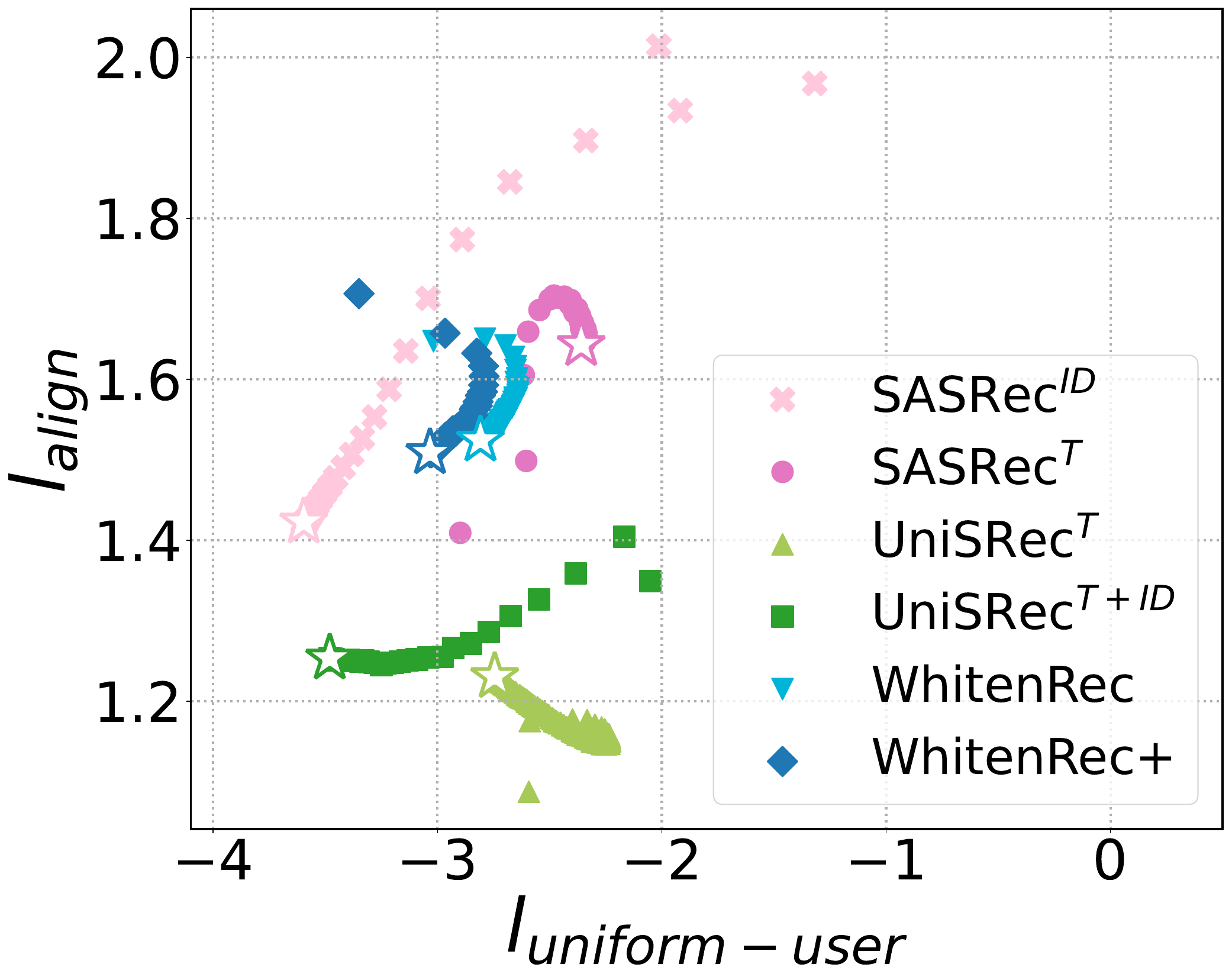}}
     \quad
     \subfloat[Toys-Users]{\includegraphics[width=0.22\textwidth]{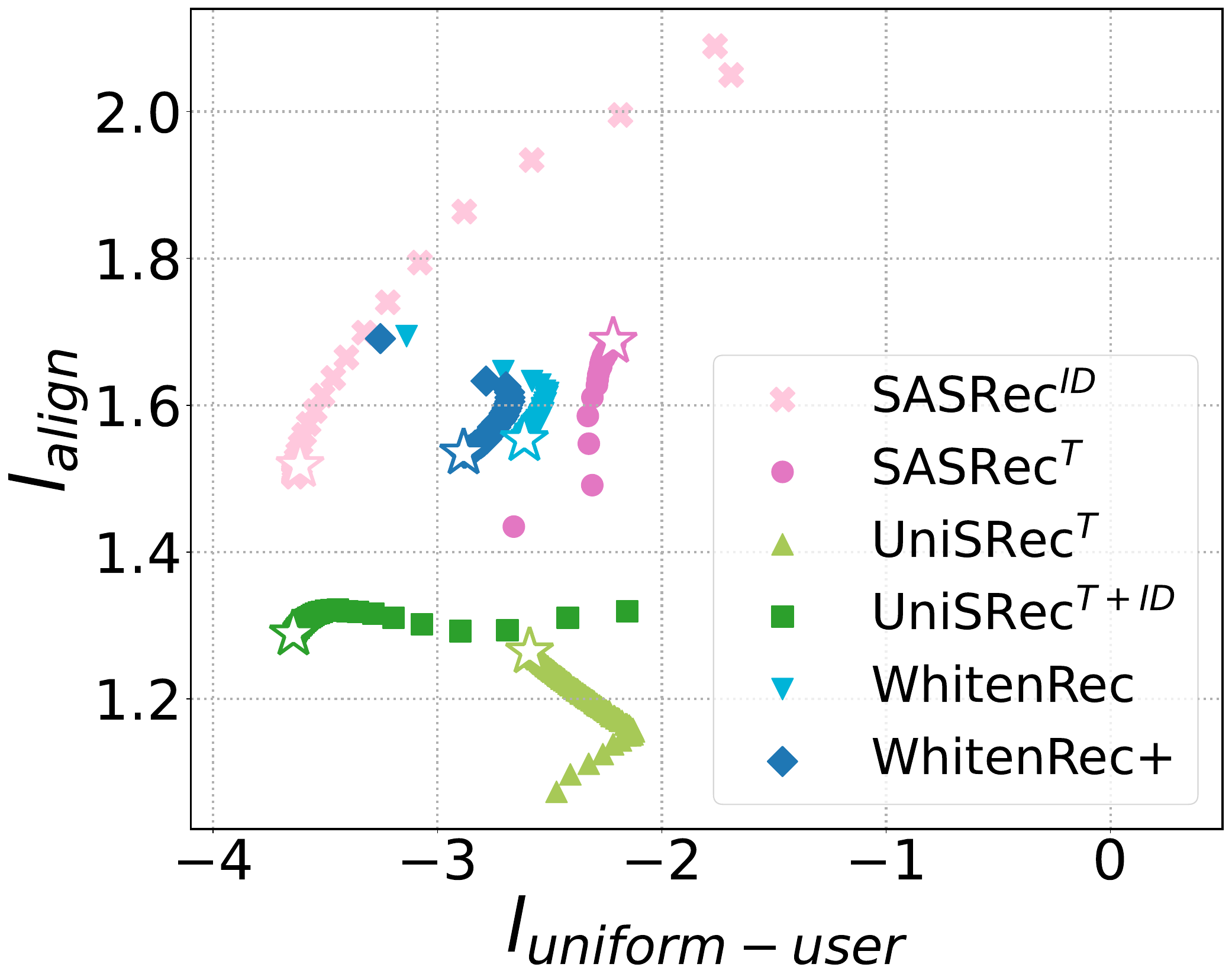}}
     \quad
     \subfloat[Tools-Users]{\includegraphics[width=0.22\textwidth]{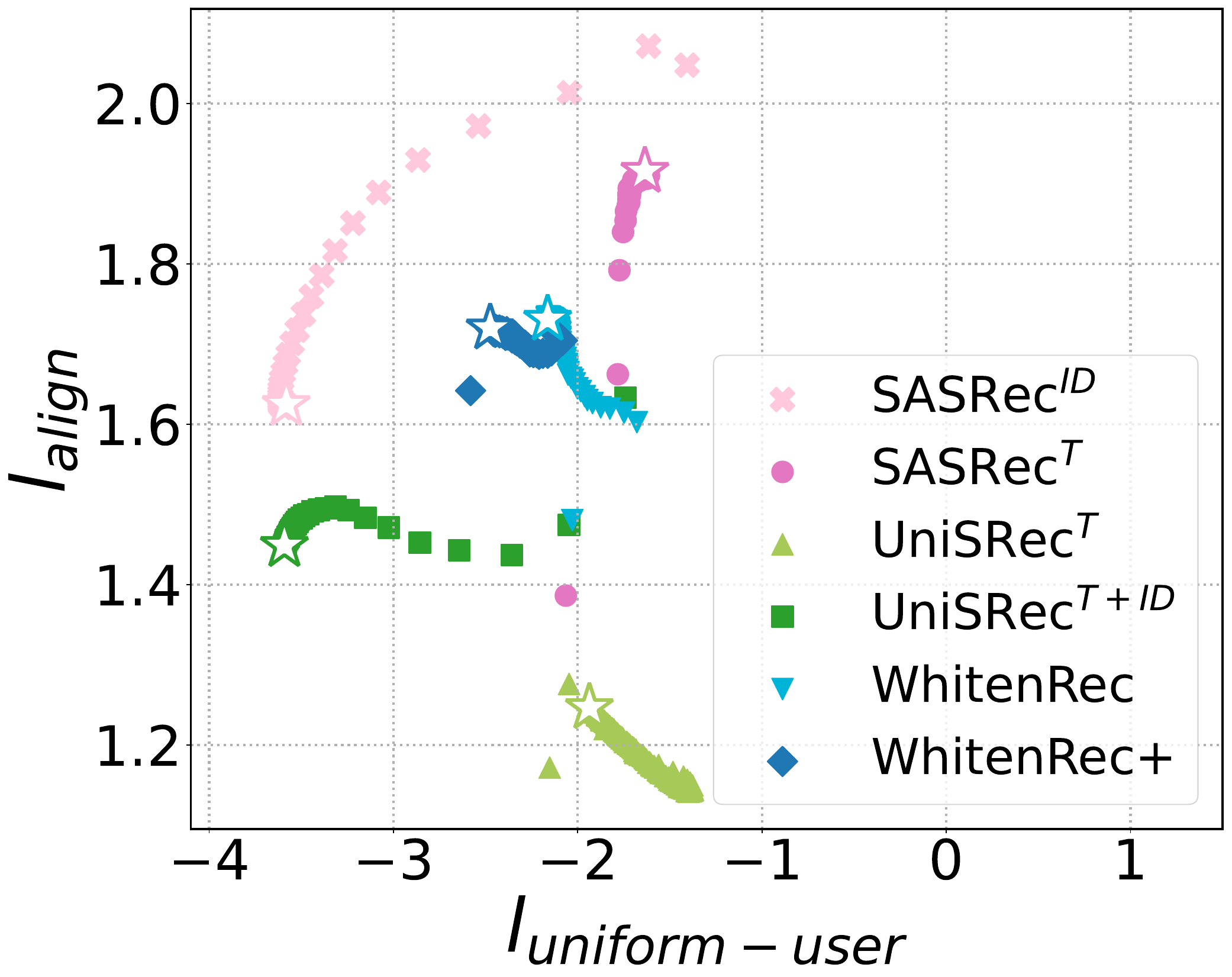}}
     \quad
     \subfloat[Food-Users]{\includegraphics[width=0.22\textwidth]{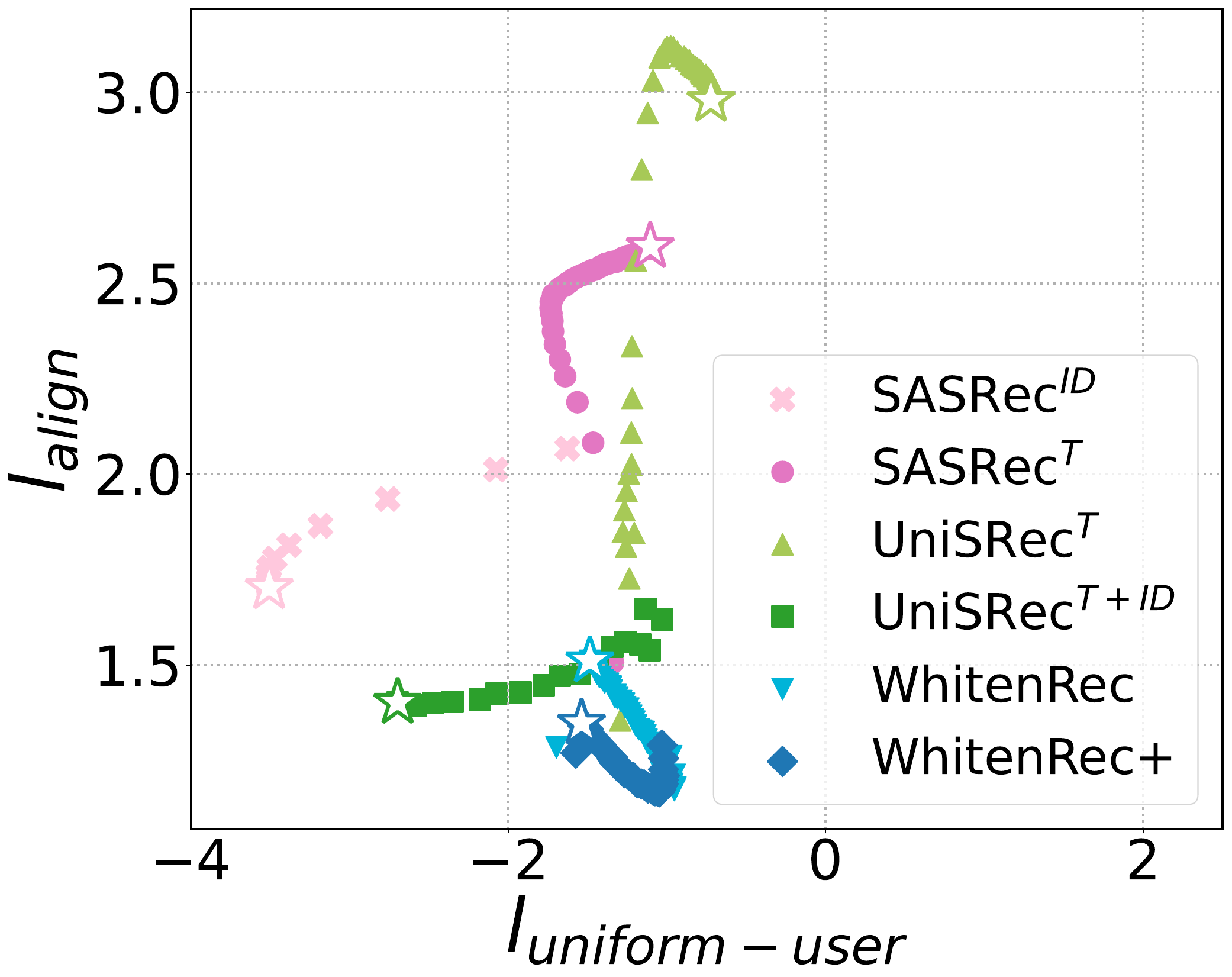}}
     % \quad
     \\
     \subfloat[Arts-Items]{\includegraphics[width=0.22\textwidth]{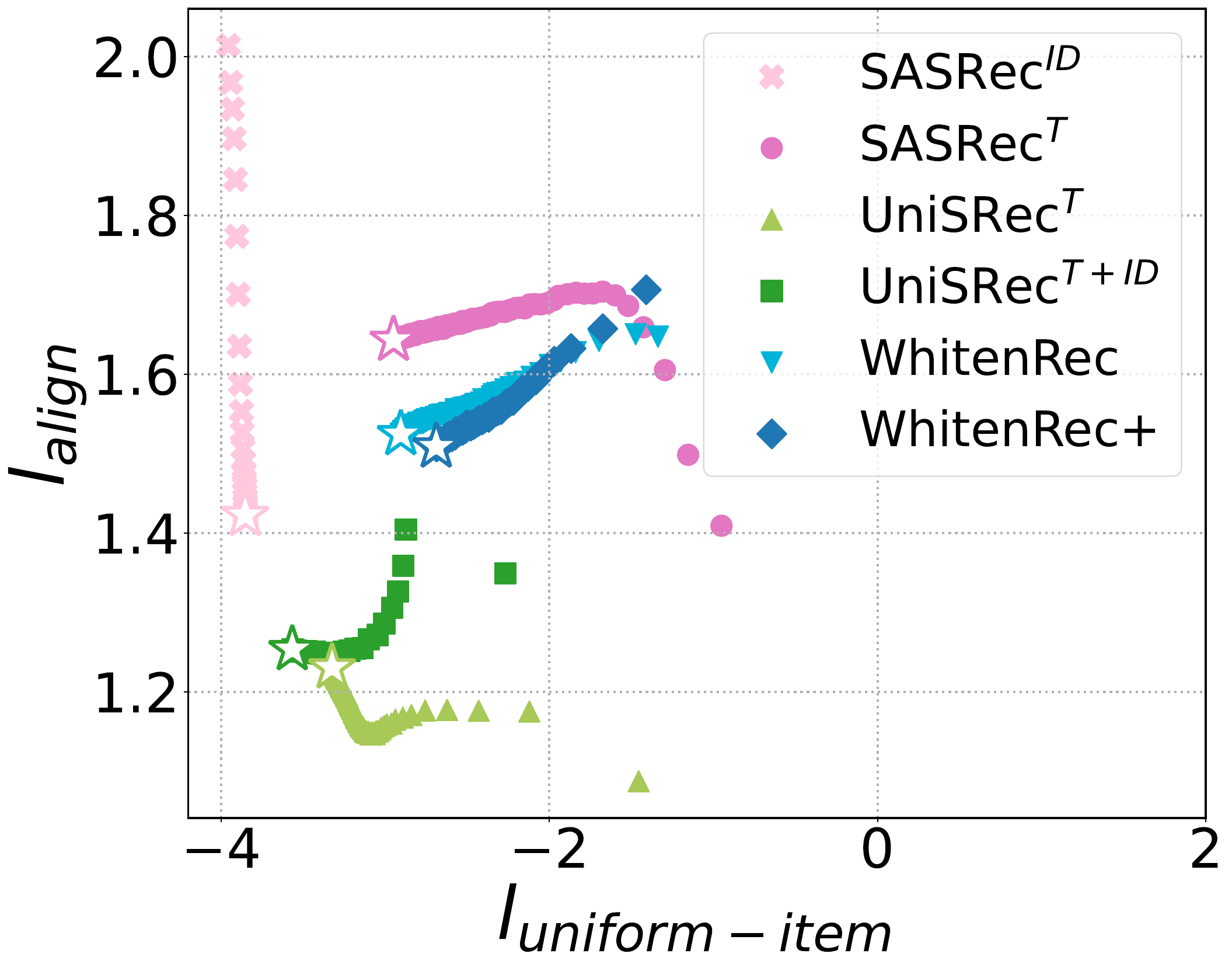}}
     \quad
     \subfloat[Toys-Items]{\includegraphics[width=0.22\textwidth]{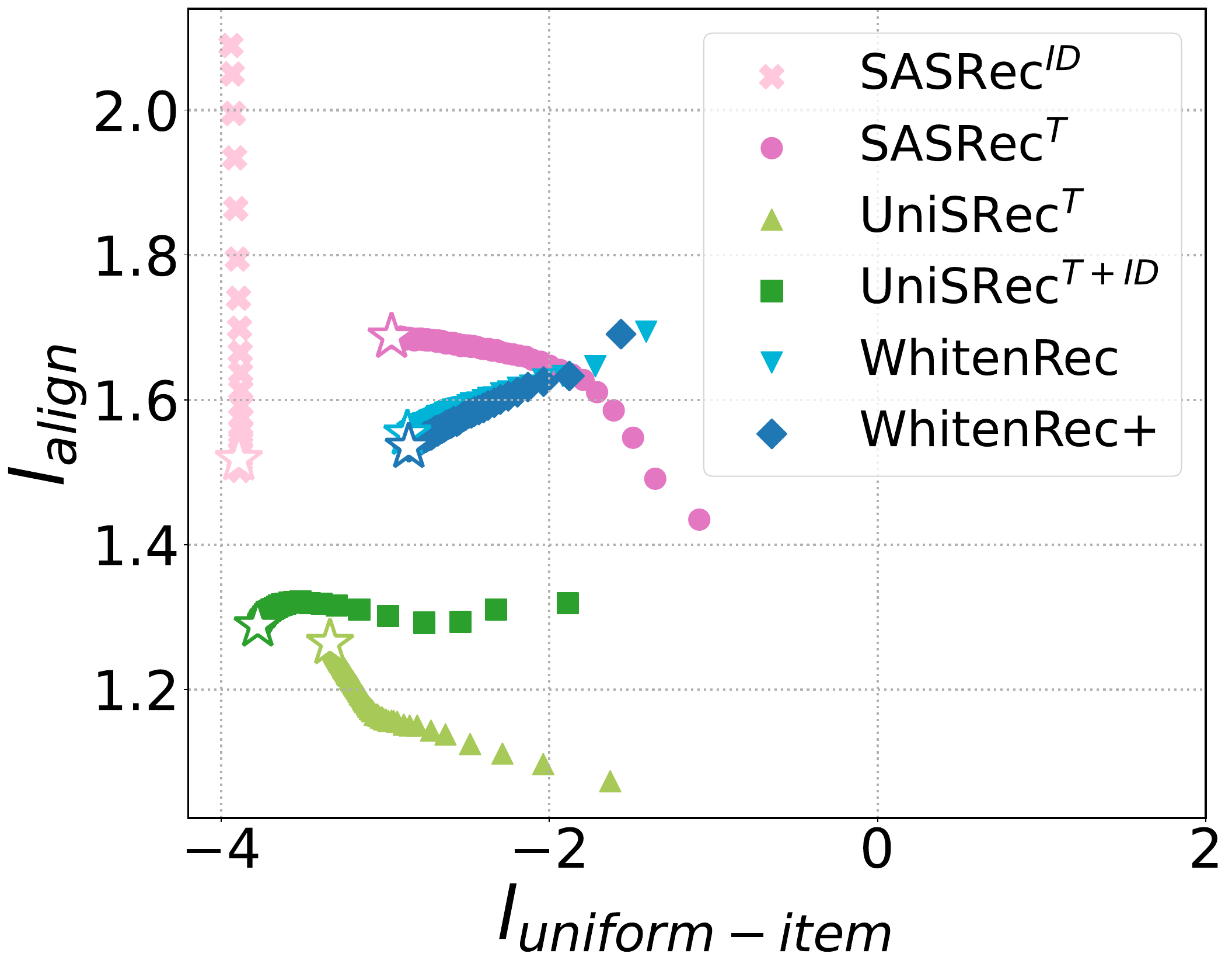}}
    \quad
     \subfloat[Tools-Items]{\includegraphics[width=0.22\textwidth]{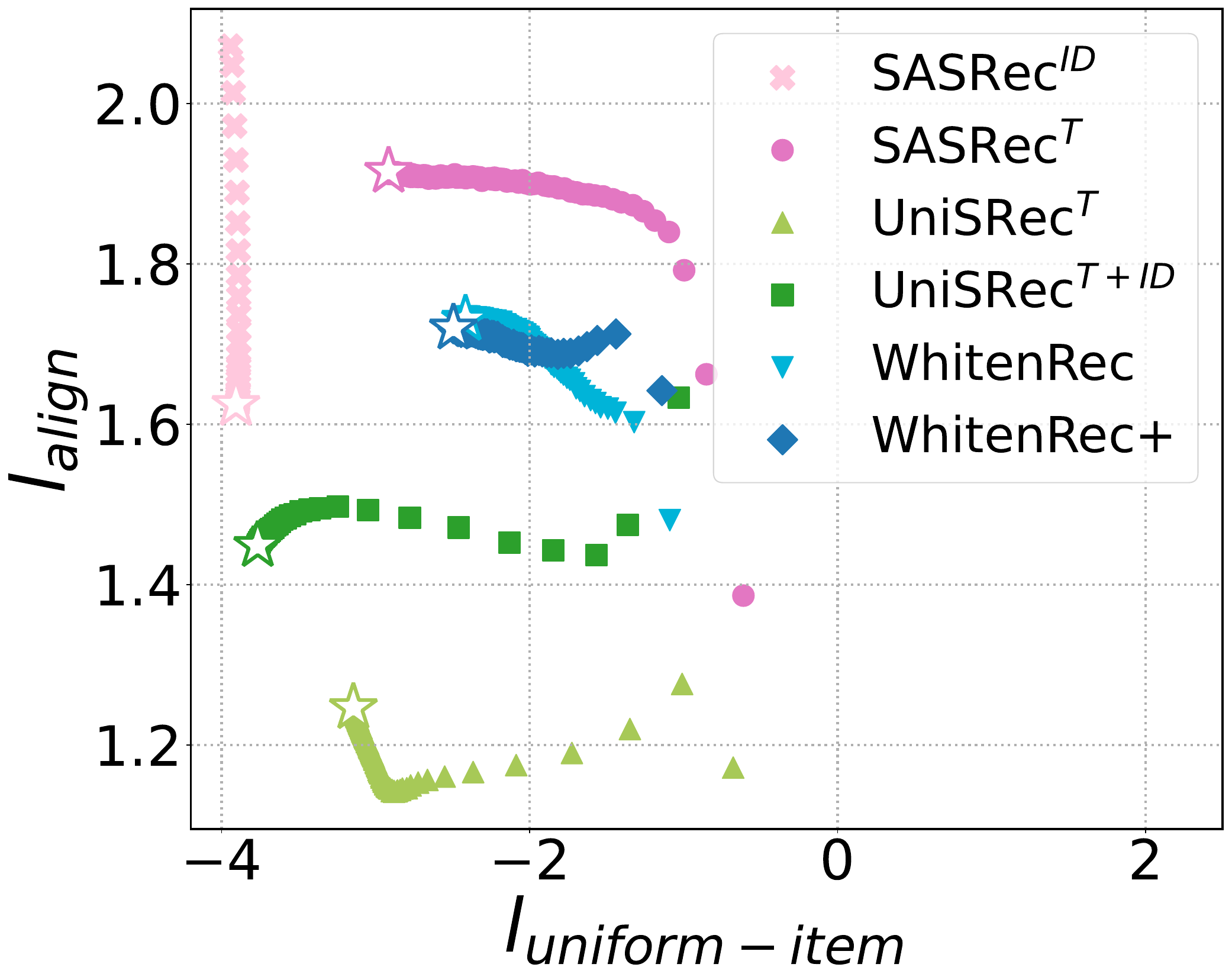}}
     \quad
     \subfloat[Food-Items]{\includegraphics[width=0.22\textwidth]{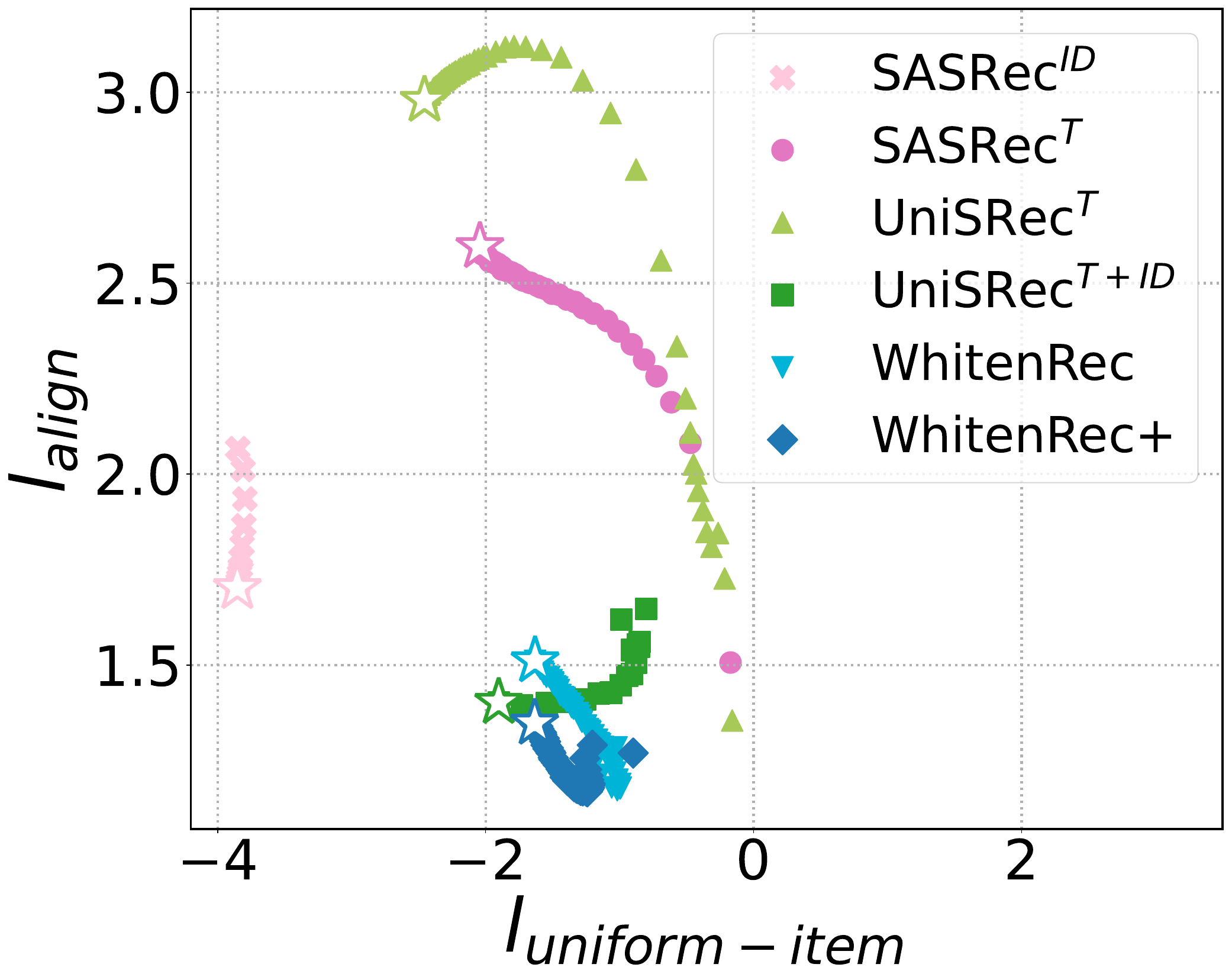}}
     \caption{\textcolor{black}{$l_{align}-l_{uniform}$ plots for representations of users and items during training for Arts, Toys, Tools, and Food datasets. We visualize these two metrics in each epoch, and the stars indicate the last converged epoch. 
     For $l_{align}$ and $l_{uniform}$, lower numbers are generally more preferable. 
     }}
     \label{fig:align_uniform}
     \vspace{-1em}
\end{figure*}

\subsection{Discussion and Analysis}

We further investigate the merits of WhitenRec and WhitenRec+ through both empirical and theoretical analyses. Our examination focuses on representation uniformity and alignment, conditioning, and information reconstruction.

\subsubsection{Uniformity and Alignment}
We analyze the user embedding $\mathbf{s}$ (\ie generated by the sequence encoder) and the item embedding $\mathbf{v}$ retrieved from $\mathbf{V}$ (\ie generated by the item encoder) with respect to their uniformity and alignment~\cite{wang2022towards}. The uniformity and alignment in the context of recommendation are formulated as follows:
\begin{equation}
\begin{aligned}
    l_{align} &= \underset{(u,i)\sim p_{pos}}{\mathbb{E}}|| f(\mathbf{s}_u)-f(\mathbf{v}_i)||^2, \\
    l_{uniform-user} &= log\underset{(u,{u'})\sim p_{user}}{\mathbb{E}}e^{-2|| f(\mathbf{s}_u)-f(\mathbf{s}_{u'})||^2}, \\
    l_{uniform-item} &= log\underset{(i,i')\sim p_{item}}{\mathbb{E}}e^{-2|| f(\mathbf{v}_i)-f(\mathbf{v}_{i'})||^2},
    \label{eq:uni_align}
\end{aligned}
\end{equation}
where $f(\cdot)$ indicates $l_2$ normalized representations. $p_{pos}$, $p_{user}$, and $p_{item}$ are the distribution of positive user-item pairs, users, and items, respectively.

\textcolor{black}{We present visualizations of the learned user and item representations from six models with respect to uniformity and alignment on the Arts, Toys, Tools, and Food datasets in Fig.\ref{fig:align_uniform}. The analysis includes four text-based methods: SASRec$^T$, UniSRec$^T$, WhitenRec, and WhitenRec+, as well as one method utilizing ID embeddings, SASRec$^{ID}$, and another combining text with ID embeddings, UniSRec$^{T+ID}$.
We have two key observations:
(1) When comparing the text-based methods to those incorporating ID embeddings (\ie SASRec$^{ID}$ and UniSRec$^{T+ID}$), it is noted that while the latter exhibit lower uniformity, their performances are worse than our methods. This suggests that the positive correlation between uniformity and performance is limited in scope. Excessive pursuit of uniformity may overlook the proximity of semantically similar items and ultimately impair recommendation performance.
(2) In our comparison of four text-based methods, it is observed that, despite showing higher or comparable item uniformity and alignment relative to SASRec$^T$ and UniSRec$^T$ across most datasets, WhitenRec and WhitenRec+ consistently achieve superior user uniformity.
In conjunction with the performance reported in Table~\ref{tab:overall}, our results indicate that user uniformity plays a more significant role in determining recommendation performance.}

\subsubsection{Conditioning Analysis}
\begin{figure*}
% \color{blue}
     \centering
     \subfloat[Arts-Condition Number]{\includegraphics[width=0.22\textwidth]{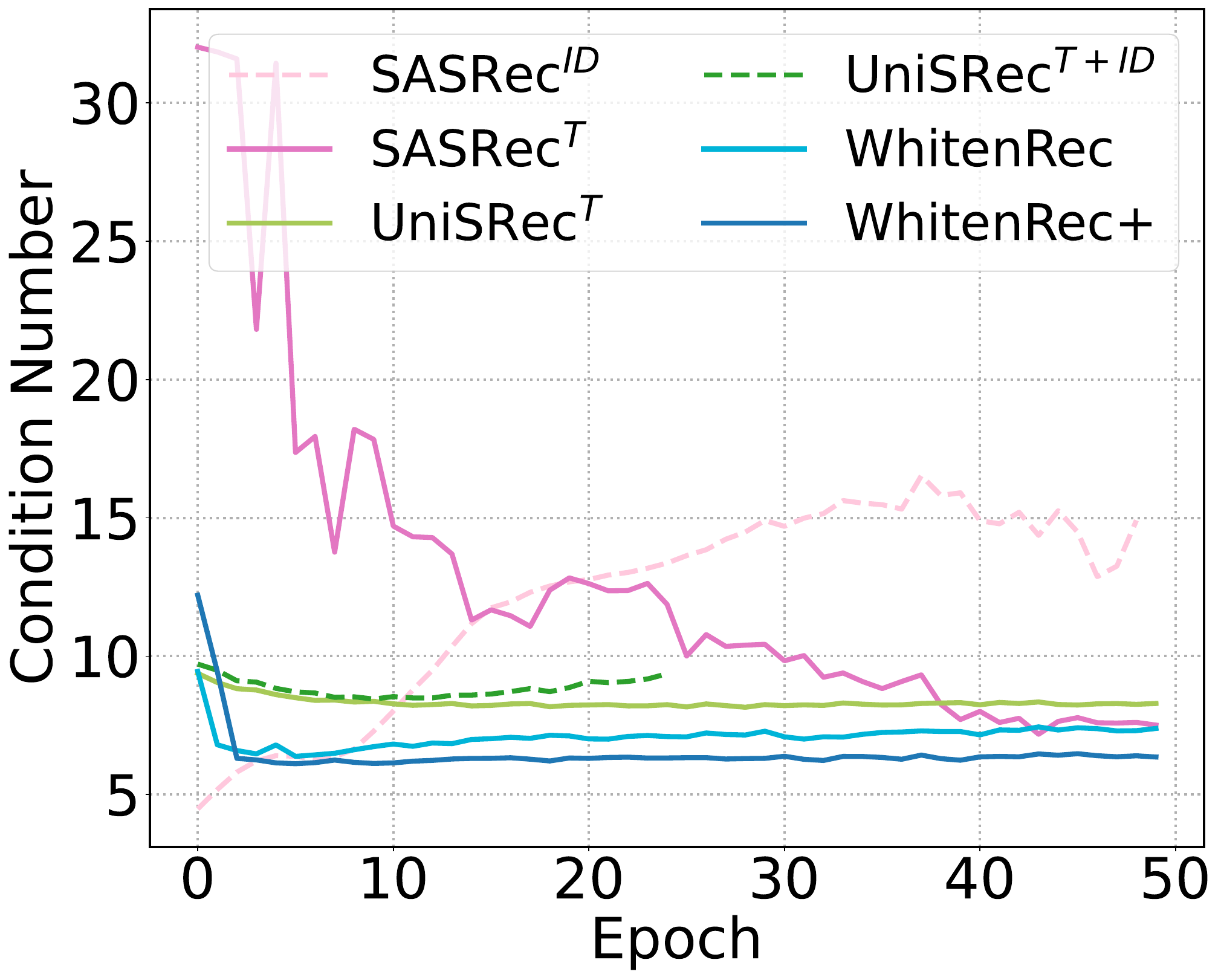}}
     \quad
     \subfloat[Toys-Condition Number]{\includegraphics[width=0.22\textwidth]{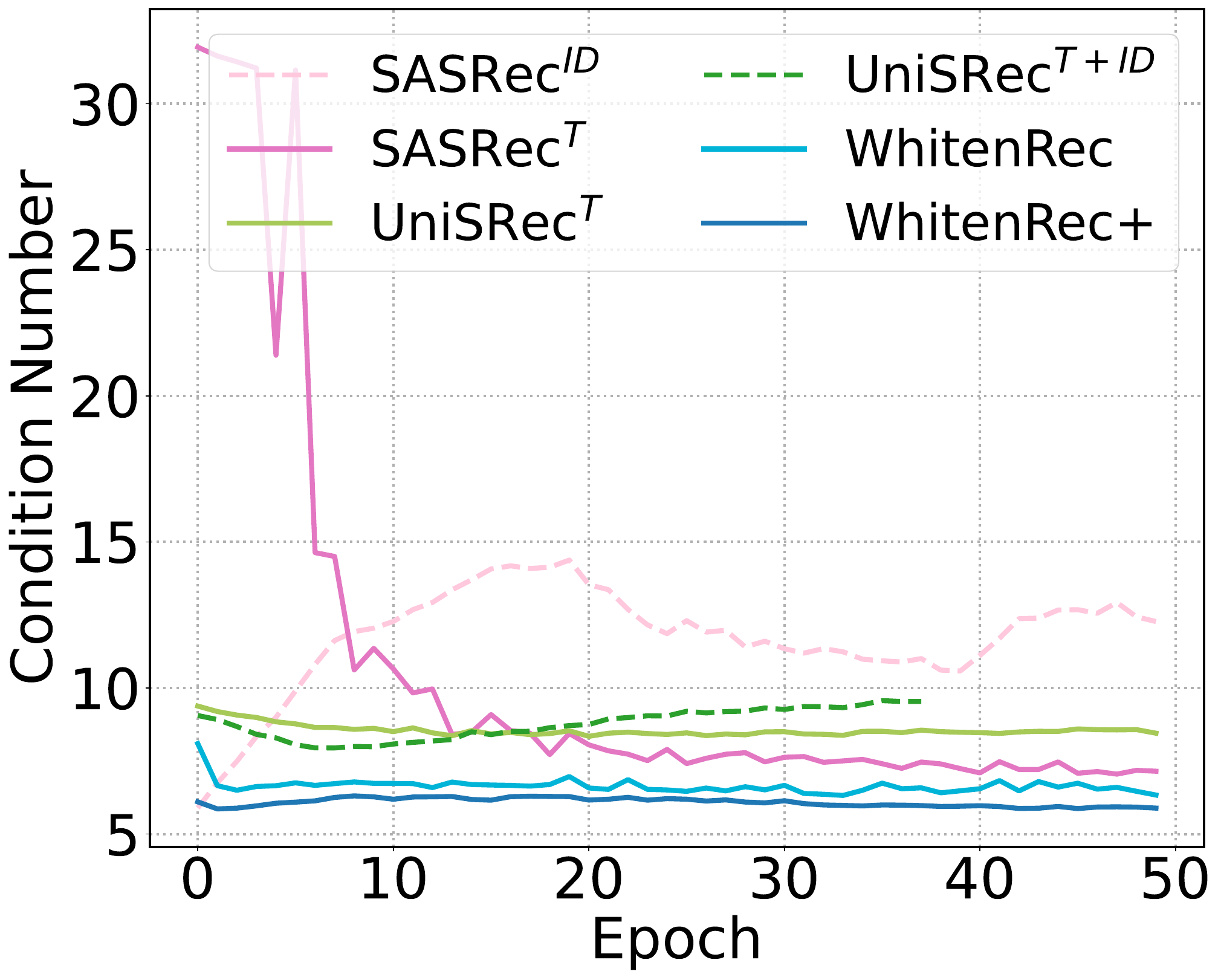}}
     \quad
     \subfloat[Tools-Condition Number]{\includegraphics[width=0.22\textwidth]{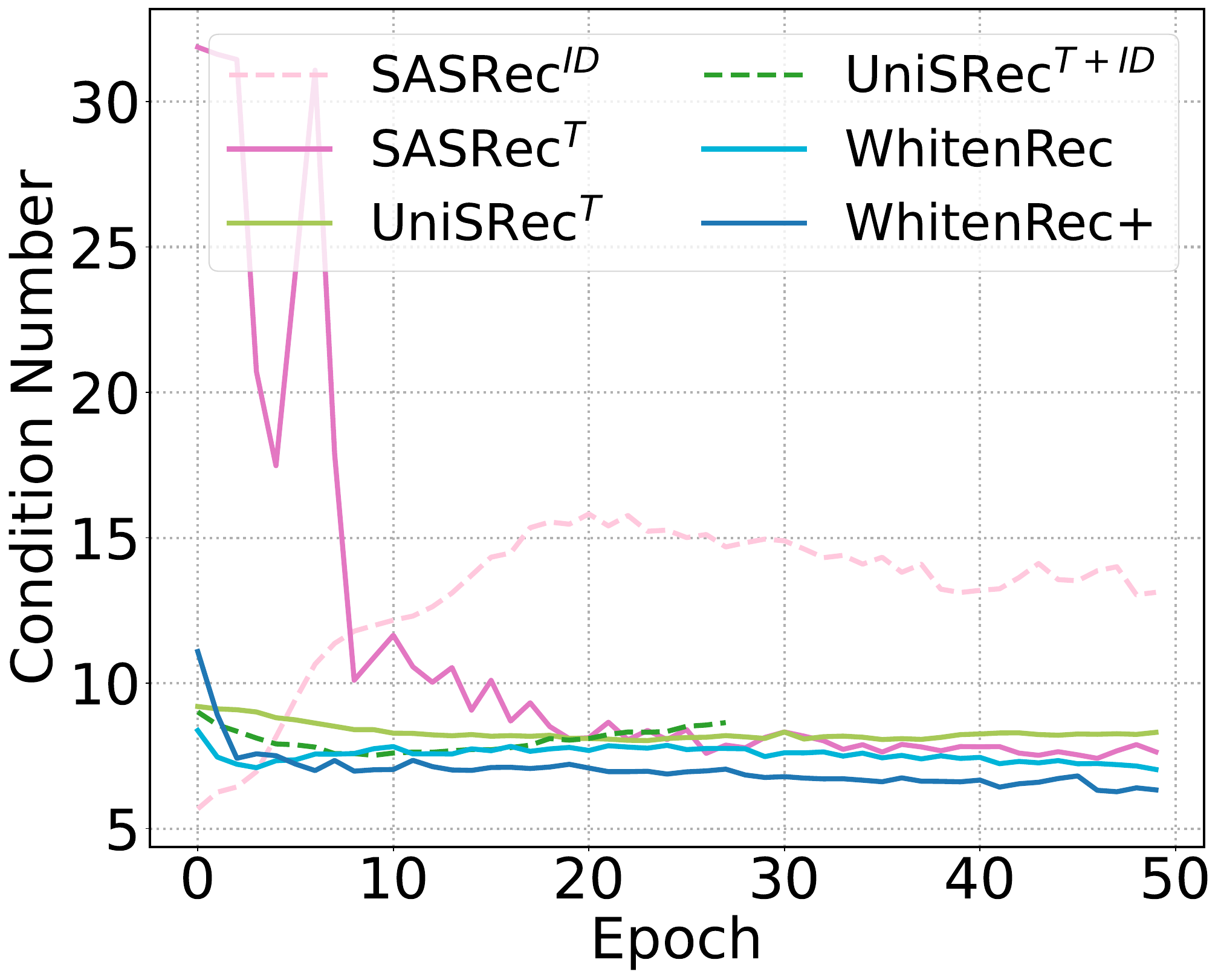}}
     \quad
     \subfloat[Food-Condition Number]{\includegraphics[width=0.22\textwidth]{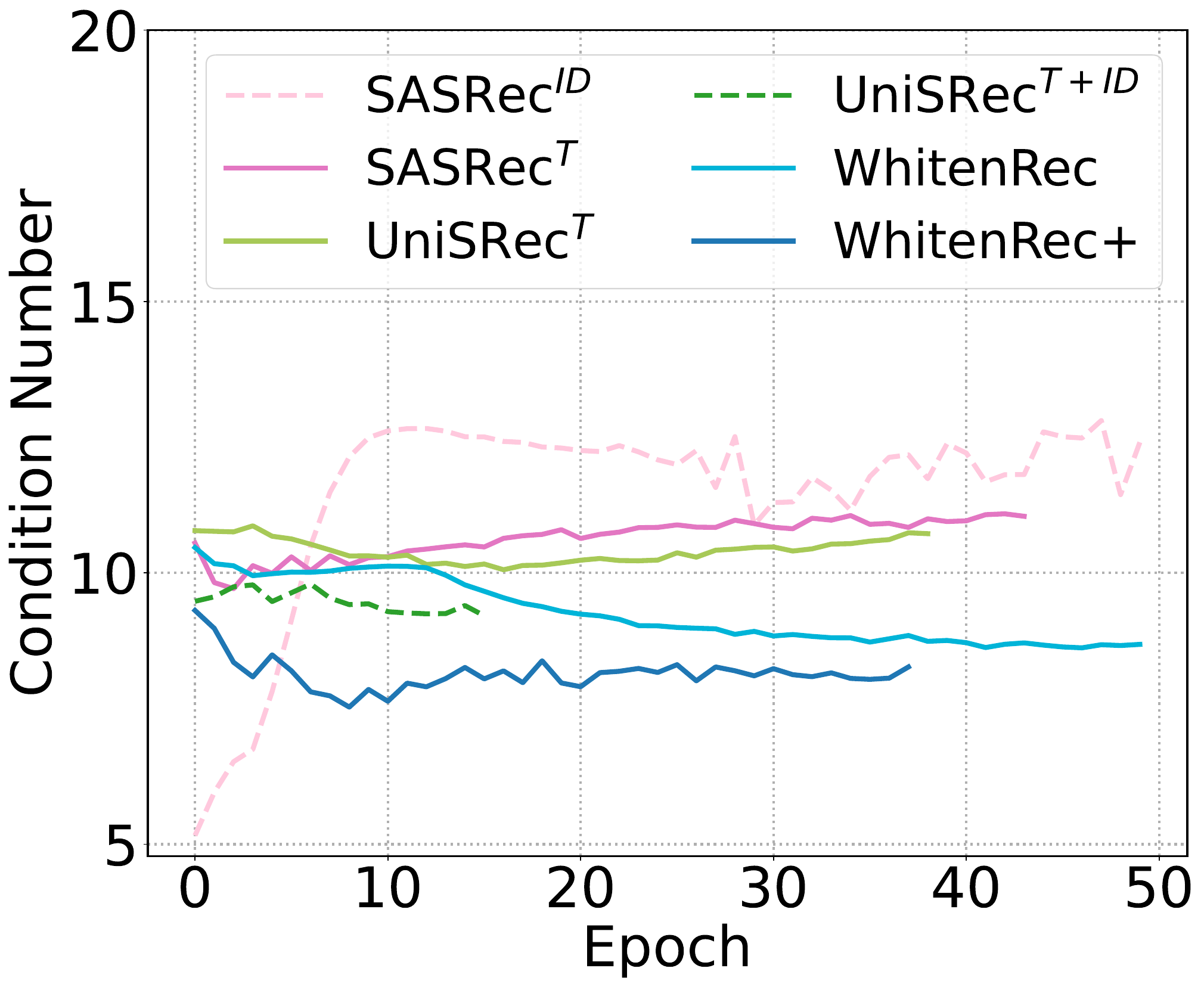}}
     % \quad
     \\
     \subfloat[Arts-Training Loss]{\includegraphics[width=0.22\textwidth]{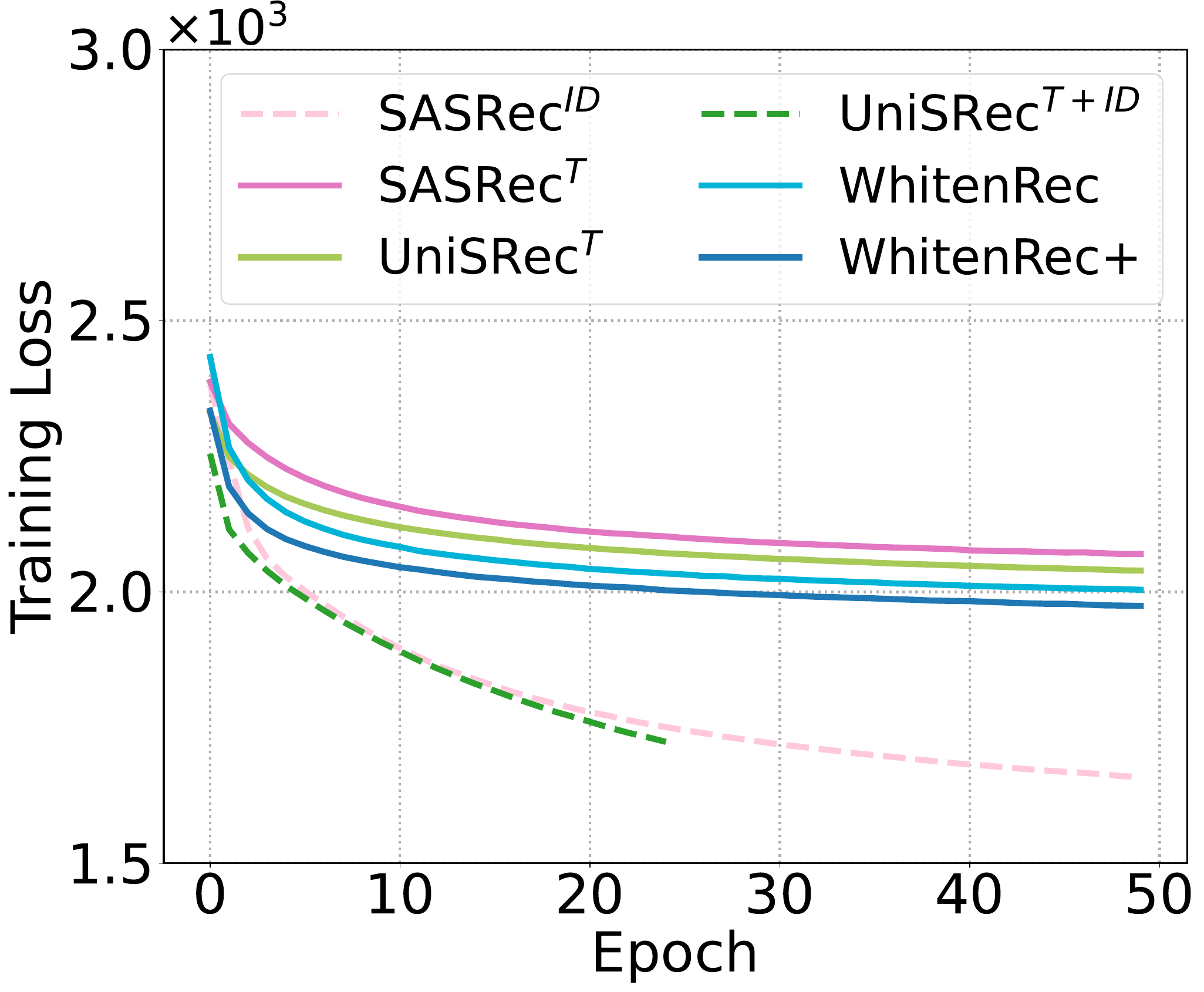}}
     \quad
     \subfloat[Toys-Training Loss]{\includegraphics[width=0.22\textwidth]{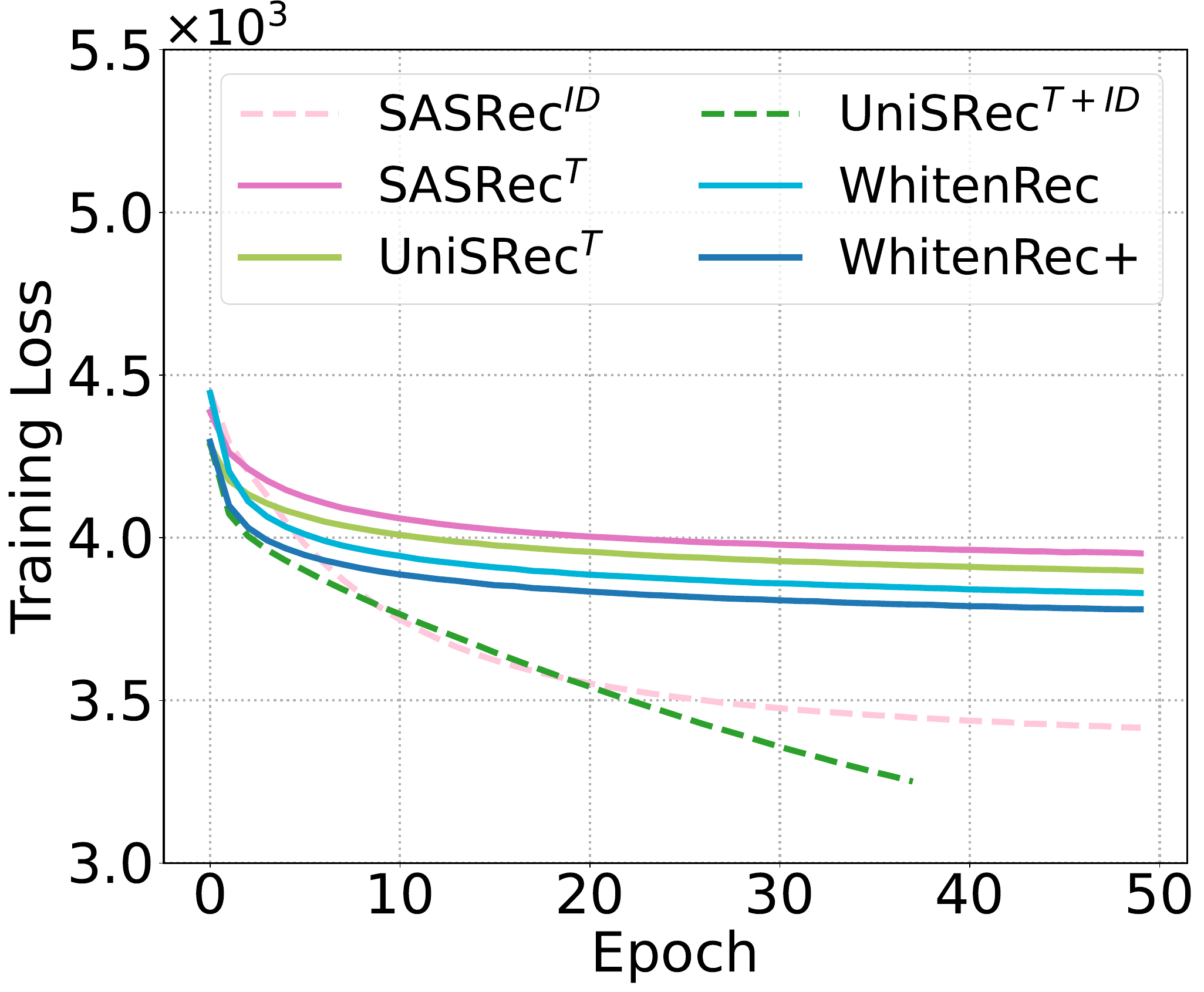}}
     \quad
     \subfloat[Tools-Training Loss]{\includegraphics[width=0.22\textwidth]{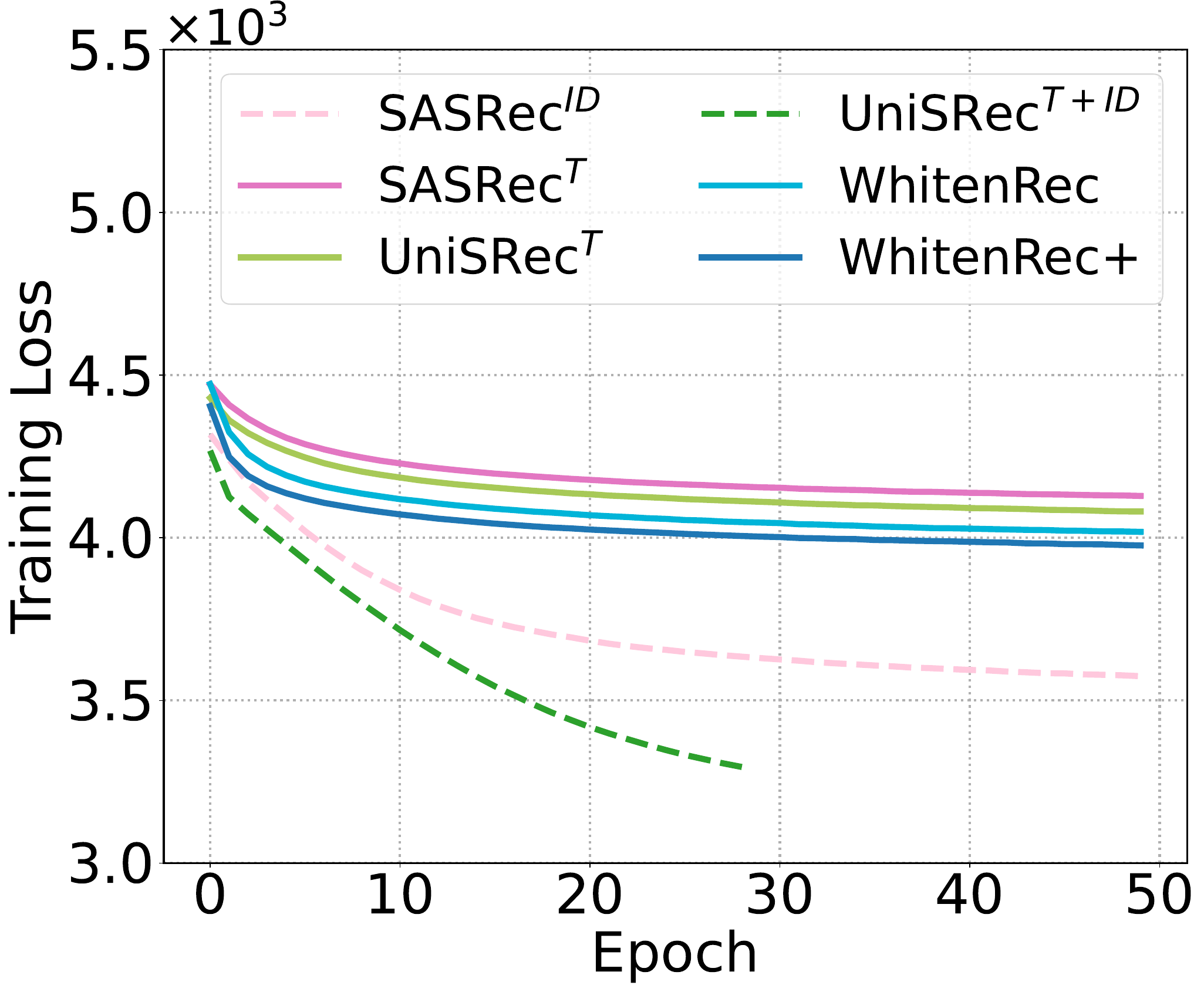}}
     \quad
     \subfloat[Food-Training Loss]{\includegraphics[width=0.22\textwidth]{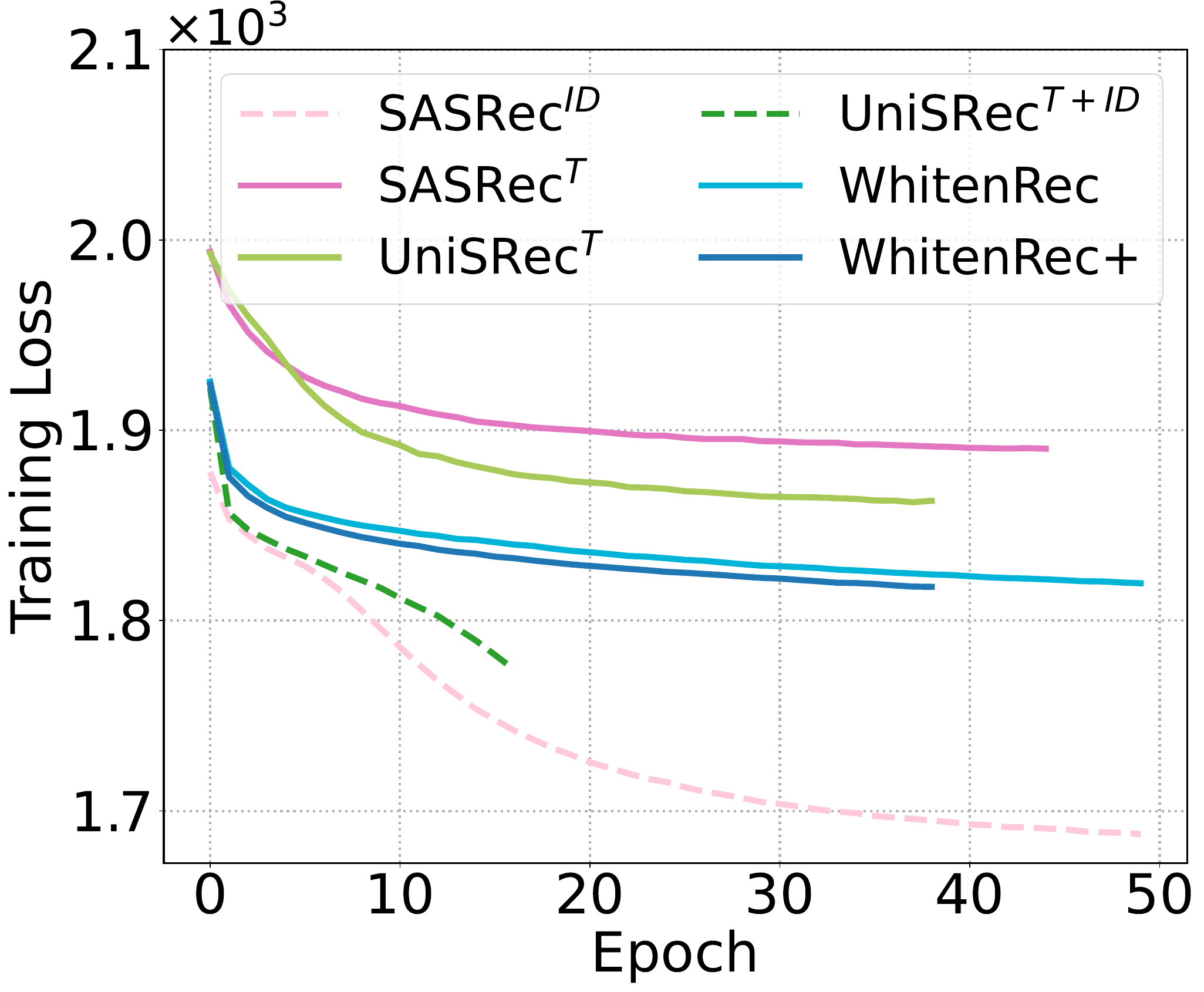}}
     \caption{\textcolor{black}{Conditioning analysis on Arts, Toys, Tools, and Food datasets. We plot the condition number (log-scale) calculated for the item embedding matrix after projection and the training loss with respect to each epoch. The dashed line represents a model that incorporates item ID embeddings.}}
     \label{fig:condition}
     \vspace{-1em}
\end{figure*}

We highlight the benefits of WhitenRec and WhitenRec+ in achieving improved conditioning of item embedding matrix $\mathbf{V}$. Given the covariance matrix $\mathbf{A}$ of $\mathbf{V}$, we measure its conditioning by the condition number~\cite{song2022improving}: $\kappa(\mathbf{A}) = \lambda_{max}(\mathbf{A}) \lambda_{min}^{-1}(\mathbf{A})$,
where $\lambda(\cdot)$ denotes the eigenvalue of the matrix.
Well-conditioned matrices have a low condition number, while ill-conditioned matrices have a high condition number. For neural networks, ill-conditioned covariance matrices cause detrimental effects on training stability and optimization. \textcolor{black}{Fig.~\ref{fig:condition}a-d show the evolution of the condition number throughout training epochs for Arts, Toys, Tools, and Food datasets. Fig.~\ref{fig:condition}e-h show the training loss over training epochs for the same datasets.}

\textcolor{black}{Among all text-based sequential recommendation models, both WhitenRec and WhitenRec+ converge more rapidly and achieve better conditioning compared to SASRec$^T$ or UniSRec$^T$. This outcome highlights the effectiveness of the whitening transformation in simplifying the optimization problem. Moreover, among these text-based methods, WhitenRec+ achieves the best conditioning and highest convergence rate.}
\textcolor{black}{In terms of training losses, methods that incorporate ID embeddings tend to decrease more rapidly, attributed to their larger parameter space, which can quickly capture important variance in the data. Yet, this increased parameter size can elevate the risk of overfitting, which could result in suboptimal performance. This tendency towards overfitting can be further evidenced by the worsening condition of these methods over time, as shown in Fig.~\ref{fig:condition}a-d, for most datasets.
Because overfitting can result in extremely large or small weights within the network, adversely affecting the condition number of the resultant item embedding matrix obtained from the network. }

\subsubsection{More Preserved Information in WhitenRec+}
We also conduct a mathematical proof to demonstrate that WhitenRec+ is capable of preserving more information than WhitenRec. 
Given a pre-trained text embedding matrix $\mathbf{X}\in\mathbb{R}^{d \times |\mathcal{I}|}$, where $n=|\mathcal{I}|$ is the number of items and $d$ is the dimension size.
We can derive the following proposition:
\begin{proposition}
WhitenRec+ preserves at least $(1-\frac{1}{G})d^2$ more information in its whitened representations compared to WhitenRec.
\end{proposition}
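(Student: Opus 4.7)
The plan is to interpret ``information preserved in the whitened representation'' combinatorially, as the number of entries in the item-wise covariance matrix of the whitened embedding that are left free (\ie not pinned to a fixed value) by the whitening operator. Under this interpretation, information destroyed by whitening corresponds to correlations that the operator forces to zero or one, while information preserved is any correlation still visible in the second moment of the representation after whitening. This also matches the reconstruction-oriented view mentioned earlier in the paper, since the free entries are exactly those that must be supplied by the inverse map in order to recover $\mathbf{X}$.

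First I would establish the baseline for WhitenRec. By Eqn.~\eqref{eq:zca1} and Eqn.~\eqref{eq:zca2}, the full ZCA map $\Phi = \mathbf{D}\Lambda^{-1/2}\mathbf{D}^\top$ is a generically dense $d \times d$ matrix, and by construction it enforces $\mathbf{Z}_{G=1}\mathbf{Z}_{G=1}^\top = |\mathcal{I}|\,\mathbf{I}_d$. Consequently every one of the $d^2$ entries of the sample covariance is pinned---to $|\mathcal{I}|$ on the diagonal and $0$ off-diagonal---so no residual correlation from $\mathbf{X}$ survives inside $\mathbf{Z}_{G=1}$; the structure of $\mathbf{X}$ has been fully absorbed into the fixed transformation $\Phi$.

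Next I would analyze the relaxed whitening that produces $\mathbf{Z}_{G>1}$. Because group whitening applies ZCA independently within each of the $G$ groups $\mathbf{X}^{[h]}$, the effective whitening map is block-diagonal with $G$ blocks of size $\frac{d}{G} \times \frac{d}{G}$, containing $G \cdot \left(\frac{d}{G}\right)^2 = \frac{d^2}{G}$ potentially non-zero entries. Thus only the $\frac{d^2}{G}$ within-block entries of the covariance of $\mathbf{Z}_{G>1}$ are forced to a block-identity pattern, while the remaining $d^2 - \frac{d^2}{G} = \left(1-\frac{1}{G}\right)d^2$ off-block entries are unconstrained and continue to carry the original cross-group correlations of $\mathbf{X}$. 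Since WhitenRec+ ensembles both $\mathbf{Z}_{G=1}$ and $\mathbf{Z}_{G>1}$ whereas WhitenRec uses only $\mathbf{Z}_{G=1}$, the WhitenRec+ representation contains at least these $\left(1-\frac{1}{G}\right)d^2$ additional free covariance entries, which yields the claimed bound.

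The main obstacle is that ``information preserved'' is not given a canonical definition in the statement, and a naive Shannon-theoretic reading is vacuous because both whitening maps are invertible and therefore preserve the same mutual information with $\mathbf{X}$. The substantive distinction is \emph{where} the information lives---in the data-dependent representation $\mathbf{Z}$ versus in the fixed transformation $\Phi$---so I would commit to the ``free covariance entries'' definition at the outset, note that it coincides up to constant factors with the reconstruction-parameter view already invoked in the paper, and then the rest of the argument is just the block-counting step above.
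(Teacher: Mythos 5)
Your argument reaches the paper's formula by a genuinely different accounting. The paper works through the Gram matrix $\mathbf{K}_\mathbf{Z}=\mathbf{Z}^\top\mathbf{Z}\in\mathbb{R}^{n\times n}$ (invoking the result that the model sees the training inputs only through $\mathbf{K}$) and counts how many free real values survive in the $d\times n$ representation itself after quotienting out the invertible maps $\mathbf{Q}$ that leave $\mathbf{K}_\mathbf{Z}$ unchanged: canonicalizing $\widehat{\mathbf{Z}}=\mathbf{Q}\mathbf{Z}=\left[\mathbf{I}\;\cdots\right]$ shows that full whitening retains $(n-d)d$ values, each of the $G$ groups retains $(n-\frac{d}{G})\frac{d}{G}$, and the difference is $(1-\frac{1}{G})d^2$. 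You instead count unpinned entries of the $d\times d$ feature covariance: full ZCA pins all $d^2$ of them, group whitening pins only the $\frac{d^2}{G}$ within-block entries. The two ledgers give the same difference, and that is no accident --- the extra constraint full whitening imposes over group whitening is exactly the vanishing of the cross-group second moments --- but your measure is strictly narrower (it sees only second-moment information, not the $(n-d)d$-dimensional residue the paper tracks), and your remark that it ``coincides up to constant factors with the reconstruction-parameter view'' is not right: the absolute counts differ by $n$-dependent terms, and only their difference agrees.

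Two substantive caveats. First, the covariance matrix is symmetric, so your $(1-\frac{1}{G})d^2$ off-block entries comprise only $\frac{1}{2}(1-\frac{1}{G})d^2$ independent real parameters; under the natural reading ``information equals independent parameters'' your argument establishes only half the stated bound, whereas the paper's count over the non-symmetric representation matrix does not suffer this. Second, your observation that a Shannon-theoretic reading is vacuous because both whitenings are invertible is exactly right, but it cuts against your own definition as well: the cross-group covariance of $\mathbf{X}$ is equally recoverable from $\mathbf{Z}_{G=1}$ by applying the known inverse transform, so ``unconstrained covariance entries'' is not by itself a measure of preserved information. The paper escapes this by fixing the functional through which the learner accesses the data (the Gram matrix) and counting the parameters needed to reconstruct it modulo the symmetry group of that functional; committing explicitly to that framing, rather than to free covariance entries, would close both gaps and align your count with the claimed quantity.
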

\begin{proof}
Given $\mathbf{X}$, we define the Gram matrix of $\mathbf{X}$ as $\mathbf{K}=\mathbf{X}^\top\mathbf{X} \in \mathbb{R}^{n \times n}$.
Based on~\cite{wadia2021whitening}, the prediction for recommendation models depends on training inputs only through $\mathbf{K}$\footnote{Pre-trained text embeddings are typically generated from a linear projector in conjunction with BERT.}.
Denote the text features whitened by WhitenRec as $\mathbf{Z}$. Since WhitenRec performs full data whitening on $\mathbf{X}$, we have $\mathbf{Z} \mathbf{Z}^\top = \mathbf{I}_{d}$. Thus, the Gram matrix of $\mathbf{Z}$ is:
\begin{align}
    \mathbf{K}_\mathbf{Z} &= \mathbf{Z}^\top \mathbf{Z} = \mathbf{Z}^\top \mathbf{Z} \mathbf{Z}^+ \mathbf{Z} = \mathbf{Z}^+ \mathbf{Z},
\end{align}
where $^+$ is the Moore–Penrose inverse and $\mathbf{Z} = \mathbf{Z} \mathbf{Z}^+ \mathbf{Z}$ holds true.

To determine the amount of decorrelated information preserved in $\mathbf{Z}$, we perform a transformation $\mathbf{Q}$ on $\mathbf{Z}$, resulting in $\widehat{\mathbf{Z}} = \mathbf{Q} \mathbf{Z} = \left[\mathbf{I} \cdots \right]$.
Here, $\mathbf{Q}\in\mathbb{R}^{d\times d}$ is the inverse of the submatrix formed by the first $d$ columns of $\widehat{\mathbf{Z}}$. As the first $d$ columns are deterministic, hence, $\widehat{\mathbf{Z}}$ can preserve $(n-d)d$ real values.
To reconstruct ${\mathbf{K}}_\mathbf{Z}$, 
\begin{align}
    {\mathbf{K}}_\mathbf{Z} &= {\mathbf{Z}}^+ {\mathbf{Z}} = {\mathbf{Z}}^+ \mathbf{Q}^{-1} \mathbf{Q} {\mathbf{Z}} = \pp{\mathbf{Q} {\mathbf{Z}}}^+ \mathbf{Q} {\mathbf{Z}}
    = \widehat{\mathbf{Z}}^+ \widehat{\mathbf{Z}}.
\end{align}
The above equation indicates that both the whitened text feature matrix $\mathbf{Z}$ and $\widehat{\mathbf{Z}}$ contain an equivalent amount of information for reconstructing $\mathbf{K}_\mathbf{Z}$.

Analogously, we can infer a submatrix (one group) whitened by WhitenRec+ preserves $(n-\frac{d}{G})\frac{d}{G}$ real values. With $G$ groups of submatrices, thus, WhitenRec+ has $(n-\frac{d}{G})d$ values to reconstruct ${\mathbf{K}}$. 
Consequently, WhitenRec+, which integrates both a single group and $G$ groups of whitened representations, retains a minimum of $(1-\frac{1}{G})d^2=(n-\frac{d}{G})d - (n-d)d$ additional information in enhancing the training of the model.
\end{proof}

%%%%%%%%%%%%%%%%%%%%%%
\subsection{\textcolor{black}{Complexity Analysis}}

\textcolor{black}{Note that various degrees of the whitening transformation can be pre-computed; hence, WhitenRec and WhitenRec+ exhibit identical complexity levels. The time complexity of our models primarily arises from the projection head with MLPs and the attention-based transformer layers. Each contributes a time complexity of \(\mathcal{O}(|s|(d_td+d^2))\) and \(\mathcal{O}(|s|^2d+|s|d^2)\), respectively. Consequently, the aggregate time complexity is \(\mathcal{O}(|s|d_td+|s|d^2+|s|^2d)\). 
In the experimental section, we show that our methods have significantly reduced the number of parameters and training time in practice.}

%%%%%%%%%%%%%%%%%%%%%%%%%%%%%%%%%%%%%%%%%%%%%%%%%%%%%%%%%%%%%%%%%%%%%%%%%%%%%%%%%%%%

\section{Experiments}

\subsection{Experimental Settings}

\subsubsection{Datasets}
To evaluate the performance of the proposed method, we conduct experiments on four real-world datasets that are commonly used in evaluating recommender systems. Three of them are representative categories from the Amazon review dataset~\cite{ni2019justifying}: \textit{Arts, Crafts and Sewing}, \textit{Toys and Games}, and \textit{Tools and Instruments}. We abbreviate them as Arts, Toys, and Tools. \textcolor{black}{Another one is Food\footnote{https://www.kaggle.com/datasets/elisaxxygao/foodrecsysv1}, which collects food recipes from Allrecipes.com.}

\begin{table}
\caption{Dataset Statistics. “Avg. n” and “Avg. i” denote the average length of user sequences and the average actions of items.}
  \centering
   \small
  \begin{tabular}{l|cccccc}
    \toprule
    Datasets & \#Users & \#Items & \#Inter. & Avg. n & Avg. i\\ 
    \midrule
    Arts   & 45,486 & 21,019 & 349,664 & 7.69 & 16.63 \\
    Toys &  85,694 & 40,483 & 618,738 & 7.22 & 15.28 \\
    Tools &  90,599 & 36,244 & 623,248 & 6.88 &17.20 \\
    \textcolor{black}{Food} & \textcolor{black}{28,988} & \textcolor{black}{12,910} & \textcolor{black}{274,509} & \textcolor{black}{9.47} & \textcolor{black}{21.26} \\
    \bottomrule
  \end{tabular}
  \label{tab:stats}
  \vspace{-1.2em}
\end{table}

\subsubsection{Baseline Methods}

To evaluate the effectiveness of our proposed method, we compare it with several state-of-the-art recommendation models. These baselines fall into three groups: general recommendation models with text features (\ie GRCN, BM3), sequential recommendation models (\ie SASRec$^{ID}$, CL4SRec), and sequential recommendation models with text features (\ie SASRec$^{T}$, SASRec$^{T+ID}$, S$^3$-Rec, FDSA, UniSRec$^{T}$, UniSRec$^{T+ID}$).
This evaluation excludes general recommendation models based solely on user-item interactions~\cite{zhang2022diffusion, zhou2023selfcf, zhou2023layer}, as these are generally outperformed by the aforementioned models.
\textbf{GRCN}~\cite{wei2020graph} is a graph-based multimodal recommendation model that refines the user-item interaction graph by identifying false-positive feedback and pruning noisy edges. Only item text representations are exploited;
\textbf{BM3}~\cite{zhou2022bootstrap} utilizes contrastive learning losses for multimodal recommendation. Only item text representations are exploited;
\textbf{SASRec$^{ID}$}~\cite{kang2018self} is a directional self-attention method for next item prediction. Item representations are randomly initialized ID embeddings; 
\textbf{CL4SRec}~\cite{xie2022contrastive} designs three data augmentation approaches to construct contrastive tasks and extract self-supervised signals to improve the sequential recommendation performance;
\textbf{SASRec$^{T}$} and \textbf{SASRec$^{T+ID}$} are two extensions of SASRec~\cite{kang2018self}. \textbf{SASRec}$^{T}$ transforms item text representations with MLPs as the input for the self-attention blocks. \textbf{SASRec$^{T+ID}$} combines item ID embeddings with text representations of items, transformed by MLPs, as the input for the self-attention blocks;
\textbf{S$^3$-Rec}~\cite{zhou2020s3} devises supervised learning objectives to learn the correlations between items and features;
\textbf{FDSA}~\cite{zhang2019feature} models the transition patterns between items as well as features by separate self-attention blocks;
\textbf{UniSRec}~\cite{hou2022towards} leverages item text representations with an MoE-based adaptor and employs contrastive learning tasks to learn transferable sequence representations. For a fair comparison, we remove its pre-training stage and fine-tune the model with the inductive setting and the transductive setting, which are denoted as \textbf{UniSRec$^{T}$} and \textbf{UniSRec$^{T+ID}$} respectively. The inductive setting takes into account only item text representations, whereas the transductive setting takes into account both item text and ID representations. 
\textcolor{black}{\textbf{VQRec}~\cite{hou2023learning} proposes to transform text encodings into discrete codes, followed by utilizing the embedding lookup for refining item textual representation from pre-trained language models. For a fair comparison, we remove its pre-training stage and directly fine-tune the model with vector-quantized item representations.}

\subsubsection{Evaluations}
We conduct experiments in both warm-start and cold-start settings.

\textbf{Warm-start settings.} Following~\cite{zhou2020s3,sun2019bert4rec}, we keep the five-core datasets and discard users and items with fewer than five interactions. We apply the \textit{leave-one-out} strategy to evaluate the performance. Specifically, for each user, the last item of her interaction sequence is used for testing, the second last item is used for validation, and the remaining items are used for model training. \textcolor{black}{For all experiments, unless otherwise stated, they are conducted under the warm-start setting.}

\textbf{Cold-start settings.}
Following \cite{wei2021contrastive}, a subset of items (15\% of all items) is randomly selected, and all user-item interactions related to this subset are removed. We preserve sequences containing the aforementioned “cold” items as target items in the validation and testing sets. 
Since these items are not encountered by the model during training, we can assess the model's capability to generalize to previously unseen items.

Each method is evaluated on the entire item set without sampling to avoid inconsistent results~\cite{krichene2020sampled}. The recommendation performance is evaluated by two widely used metrics, \ie Recall@$K$ and Normalized Discounted Cumulative Gain@$K$ (respectively denoted by R@$K$ and N@$K$). In the experiments, $K$ is empirically set to 20 and 50.

\subsubsection{Implementation Details}
The proposed method is implemented by Pytorch~\cite{paszke2019pytorch} and an open-source recommendation framework RecBole~\cite{zhao2021recbole}. The Adam optimizer~\cite{kingma2014adam} is used to learn model parameters. 
For a fair comparison with baselines, we set the maximum sequence length, embedding size, and batch size to $50$, $300$, and $1024$, respectively. We fix the number of self-attention blocks, attention heads, and MLP layers used in the projection head at 2. 
Other hyper-parameters of baseline methods are selected following the original papers, and the optimal settings are chosen based on the model performance on validation data. For our proposed methods, we tune the learning rate in $\{1e^{-5},1e^{-4}, 5e^{-4}, 1e^{-3}\}$ and weight decay in $\{0, 1e^{-4}, 1e^{-6}\}$. We adopt an early stopping strategy, \ie we apply a premature stopping if N@20 on the validation data does not increase for 10 epochs to avoid over-fitting.

\begin{table*}
    \caption{Performance comparison of different methods on the warm-start setting. The best results are in \textbf{boldface}, and the best results for baselines are \underline{underlined}. * denotes WhitenRec or WhitenRec+ surpasses the best baseline using a paired t-test ($p < 0.01$). The features utilized for item representations in each model are categorized as ID, text (T), or a combination of both (T+ID).}
    \centering
    \small
    \resizebox{\textwidth}{!}{
    \begin{tabular}{l@{\hspace{0.5\tabcolsep}}c|@{\hspace{0.5\tabcolsep}} c@{\hspace{0.5\tabcolsep}} c|@{\hspace{0.5\tabcolsep}} c@{\hspace{0.5\tabcolsep}} c@{\hspace{0.5\tabcolsep}}| c@{\hspace{0.5\tabcolsep}} c@{\hspace{0.5\tabcolsep}} c@{\hspace{0.5\tabcolsep}} c@{\hspace{0.5\tabcolsep}} c@{\hspace{0.5\tabcolsep}} c@{\hspace{0.5\tabcolsep}} c @{\hspace{0.5\tabcolsep}} |c@{\hspace{0.5\tabcolsep}} c@{\hspace{0.5\tabcolsep}} }
    \toprule
    \multirow{2}{*}{Dataset} & \multirow{2}{*}{Metric}
    &GRCN&BM3&SASRec&CL4SRec&SASRec&SASRec&S$^3$-Rec&FDSA&UniSRec&UniSRec&\textcolor{black}{VQRec}&WhitenRec&WhitenRec+\\ 
    &&(T+ID)&(T+ID)&(ID)&(ID)&(T)&(T+ID)&(T+ID)&(T+ID)&(T)&(T+ID)&\textcolor{black}{(T)}&(T)&(T)\\
    
    \midrule
    \multirow{4}{*}{Arts}
    &R@20 &0.0851&0.1233&0.1410&0.1388&0.1476&0.1435&0.1411&0.1284&0.1500&\underline{0.1611}&\textcolor{black}{0.1390}&0.1625&\textbf{0.1688}*\\
    & R@50 &0.1296&0.1782&0.1967&0.1967&0.2129&0.2009&0.2007&0.1788&0.2165&\underline{0.2322}&\textcolor{black}{0.1947}&0.2348&\textbf{0.2403}*\\
    & N@20 &0.0411&0.0642&0.0776&0.0653&0.0721&0.0766&0.0762&\underline{0.0785}&0.0738&0.0774&\textcolor{black}{0.0734}&0.0796&\textbf{0.0810}*\\
    & N@50 &0.0499&0.0750&0.0887&0.0768&0.0850&0.0879&0.088&0.0888&0.0869&\underline{0.0915}&\textcolor{black}{0.0843}&0.0939*&\textbf{0.0952}*\\

    \midrule
    \multirow{4}{*}{Toys}
    & R@20 &0.0651&0.0965&0.1121&0.1094&0.0983&\underline{0.1163}&0.1068&0.0895&0.1042&\textbf{0.1257}&\textcolor{black}{0.1075}&0.1201&\textbf{0.1257}\\
    & R@50 &0.0981&0.1383&0.1581&0.1609&0.1542&0.1664&0.1533&0.1242&0.1607&\underline{0.1801}&\textcolor{black}{0.1491}&0.1798&\textbf{0.1874}*\\
    & N@20 &0.0304&0.0478&0.0467&0.0426&0.0429&0.0511&0.0488&0.0475&0.0451&\underline{0.0513}&\textcolor{black}{0.0468}&0.0521&\textbf{0.0537}*\\
    & N@50 &0.0369&0.0560&0.0558&0.0528&0.0539&0.0610&0.0581&0.0543&0.0563&\underline{0.0621}&\textcolor{black}{0.0550}&0.0639*&\textbf{0.0659}*\\
    
    \midrule
    \multirow{4}{*}{Tools}
    & R@20 &0.0452&0.0530&0.0712&0.0781&0.0739&0.0728&0.0707&0.0633&0.0772&\underline{0.0828}&\textcolor{black}{0.0734}&0.0861*&\textbf{0.0888}*\\
    & R@50 &0.0682&0.0714&0.0941&0.1027&0.1055&0.0954&0.0943&0.0812&0.1091&\underline{0.1116}&\textcolor{black}{0.0963}&0.1196*&\textbf{0.1236}*\\
    & N@20 &0.0234&0.0299&0.0418&0.0385&0.0386&\underline{0.0445}&0.0424&0.0432&0.0407&0.0420&\textcolor{black}{0.0423}&0.0453&\textbf{0.0462}*\\
    & N@50&0.0280&0.0335&0.0463&0.0433&0.0448&\underline{0.0490}&0.0470&0.0468&0.0470&0.0477&\textcolor{black}{0.0468}&0.0519*&\textbf{0.0531}*\\

    \midrule
    \multirow{4}{*}{\textcolor{black}{Food}}
& \textcolor{black}{R@20} &\textcolor{black}{0.0408}&\textcolor{black}{0.0540}&\textcolor{black}{0.0520}&\textcolor{black}{0.0531}&\textcolor{black}{0.0541}&\textcolor{black}{0.0547}&\textcolor{black}{0.0522}&\textcolor{black}{0.0518}&\textcolor{black}{0.0544}&\textcolor{black}{\underline{0.0555}}&\textcolor{black}{0.0471}&\textcolor{black}{0.0569}&\textcolor{black}{\textbf{0.0586}*}\\
& \textcolor{black}{R@50} &\textcolor{black}{0.0796}&\textcolor{black}{0.0947}&\textcolor{black}{0.0955}&\textcolor{black}{0.0949}&\textcolor{black}{0.0991}&\textcolor{black}{0.0984}&\textcolor{black}{0.0960}&\textcolor{black}{0.0960}&\textcolor{black}{\underline{0.1018}}&\textcolor{black}{0.1001}&\textcolor{black}{0.0861}&\textcolor{black}{0.1043}&\textcolor{black}{\textbf{0.1072}*}\\
& \textcolor{black}{N@20} &\textcolor{black}{0.0162}&\textcolor{black}{0.0215}&\textcolor{black}{0.0208}&\textcolor{black}{0.0214}&\textcolor{black}{0.0220}&\textcolor{black}{0.0222}&\textcolor{black}{0.0210}&\textcolor{black}{0.0210}&\textcolor{black}{0.0221}&\textcolor{black}{\underline{0.0223}}&\textcolor{black}{0.0189}&\textcolor{black}{0.0227}&\textcolor{black}{\textbf{0.0234}*}\\
& \textcolor{black}{N@50} &\textcolor{black}{0.0239}&\textcolor{black}{0.0295}&\textcolor{black}{0.0294}&\textcolor{black}{0.0297}&\textcolor{black}{0.0308}&\textcolor{black}{0.0308}&\textcolor{black}{0.0296}&\textcolor{black}{0.0297}&\textcolor{black}{\underline{0.0315}}&\textcolor{black}{0.0311}&\textcolor{black}{0.0266}&\textcolor{black}{0.0321}&\textcolor{black}{\textbf{0.0330}*}\\

    \bottomrule
    \end{tabular}}
    \vspace{-1em}
    \label{tab:overall}
\end{table*}

\begin{table*}
    \caption{Performance comparison of different methods on the cold-start setting. The best results are in \textbf{boldface}, and the second best results are \underline{underlined}. * denotes WhitenRec or WhitenRec+ surpasses the best baseline using a paired t-test ($p < 0.01$).}
    \centering
    \small
    % \resizebox{\textwidth}{!}{
    \begin{tabular}{l|cc|cc|c c| cc}
    \toprule
    \multirow{2}{*}{Model}&\multicolumn{2}{c|}{Arts}&\multicolumn{2}{c|}{Toys}
&\multicolumn{2}{c|}{Tools} &\multicolumn{2}{c}{\textcolor{black}{Food}}\\   
    &R@20&N@20&R@20&N@20&R@20&N@20&\textcolor{black}{R@20}&\textcolor{black}{N@20}\\
    \midrule
    SASRec(T)&0.0300&0.0130&0.0239&0.0100&0.0153&0.0057&\textcolor{black}{0.0031}&\textcolor{black}{0.0013}\\
    UniSRec(T)&0.0617&0.0281&0.0519&0.0222&0.0298&0.0158&\textcolor{black}{0.0037}&\textcolor{black}{0.0011}\\
    \midrule
    WhitenRec$_{G=1}$(T)&0.0554&0.0271&0.0530&0.0238*&0.0431*&0.0234*&\textcolor{black}{0.0037}&\textcolor{black}{0.0012}\\
    WhitenRec$_{G>1}$(T)&\underline{0.0656}*&\underline{0.0297}*&\underline{0.0624}*&\underline{0.0265}*&\underline{0.0501}*&\underline{0.0252}*&\underline{\textcolor{black}{0.0044*}}&\underline{\textcolor{black}{0.0014*}}\\
    WhitenRec+(T)&\textbf{0.0693}*&\textbf{0.0315}*&\textbf{0.0626}*&\textbf{0.0266}*&\textbf{0.0537}*&\textbf{0.0268}*&\textbf{\textcolor{black}{0.0048*}}&\textbf{\textcolor{black}{0.0017*}}\\
    \bottomrule
    \end{tabular}
    \label{tab:cold}
    \vspace{-1em}
\end{table*}
%%%%%%%%%%%%%%%%%%%%%%%%%%%%%%%%%%%%%%%%%%
\subsection{Performance Comparison}
\label{sec:exp}

\subsubsection{Overall performance}
Table~\ref{tab:overall} shows the performance comparison results for warm-start settings, from which we can observe:
(1) General recommendation methods utilizing text features perform worse than sequential methods for three Amazon datasets, highlighting the effectiveness of sequence encoders in capturing sequential data patterns. \textcolor{black}{However, for the Food dataset, a more advanced general recommendation strategy that utilizes text features, specifically BM3, either excels or matches the performance of sequential methods. This observation indicates that text features in Food dataset contain more pertinent information, which significantly contributes to the enhancement of recommendation accuracy.}
(2) Sequential methods utilizing text features yield better performance overall, suggesting that text features provide rich semantic information about items, and can enhance recommendation accuracy.
(3) Our methods WhitenRec and WhitenRec+ significantly outperform both general recommendation methods with text features and sequential recommendation methods, demonstrating the effectiveness of the whitening for text features extracted from pre-trained encoders. 
(4) The performance of both WhitenRec and WhitenRec+ is comparable or superior to that of SASRec$^{T+ID}$ or UniSRec$^{T+ID}$.This finding suggests that the proposed whitening transformation approach can achieve improved results without depending on ID embeddings, while also reducing the number of learnable parameters.
(5) WhitenRec+ can further improve recommendation performance compared with WhitenRec. This indicates leveraging both fully whitened and relaxed whitened text representations can enhance the item representation learning, and therefore improve the sequential recommendation performance.

\subsubsection{Performance in cold-start settings}
The cold-start problem persists in recommender systems. Item text features provide rich content information that can alleviate the cold-start problem. 
\textcolor{black}{Since our objective is to predict user preferences for items not present in the training data, item ID embeddings become unlearnable and thus cannot be utilized during inference. 
Given this context, SASRec\(^T\) and UniSRec\(^T\) have been selected as representative baseline models. Because they solely depend on item text features, which aligns with the constraints of the cold-start setting. Furthermore, these models exhibit relatively better performance compared to other baselines.} 
Table~\ref{tab:cold} shows the results of the performance comparison, from which we can observe:
(1) UniSRec$^{T}$ performs better than SASRec$^{T}$, indicating the effectiveness of utilizing the Mixture-of-Experts adaptor with parametric whitening to transform text embeddings for the recommendation task.
(2) Full whitening WhitenRec$_{G=1}$ is either surpassed by or yields similar performance to UniSRec$^{T}$ in the Arts and Toys datasets. In contrast, relaxed whitening WhitenRec$_{G>1}$ outperforms WhitenRec$_{G=1}$ and baselines. It suggests that the utilization of relaxed whitened representations facilitates improved generalization for unseen data, ultimately leading to greater performance enhancement.
(3) Our proposed method, WhitenRec+, demonstrates the best performance across all baselines for all three datasets. 
Leveraging both full and relaxed whitening transformation on text features is proved to be effective under the cold-start setting.

%%%%%%%%%%%%%%%%%%%%%%
\subsection{Effect of Group Size}
\begin{figure*}
     \centering
     \subfloat[Arts-R@20]{\includegraphics[width=0.22\textwidth]{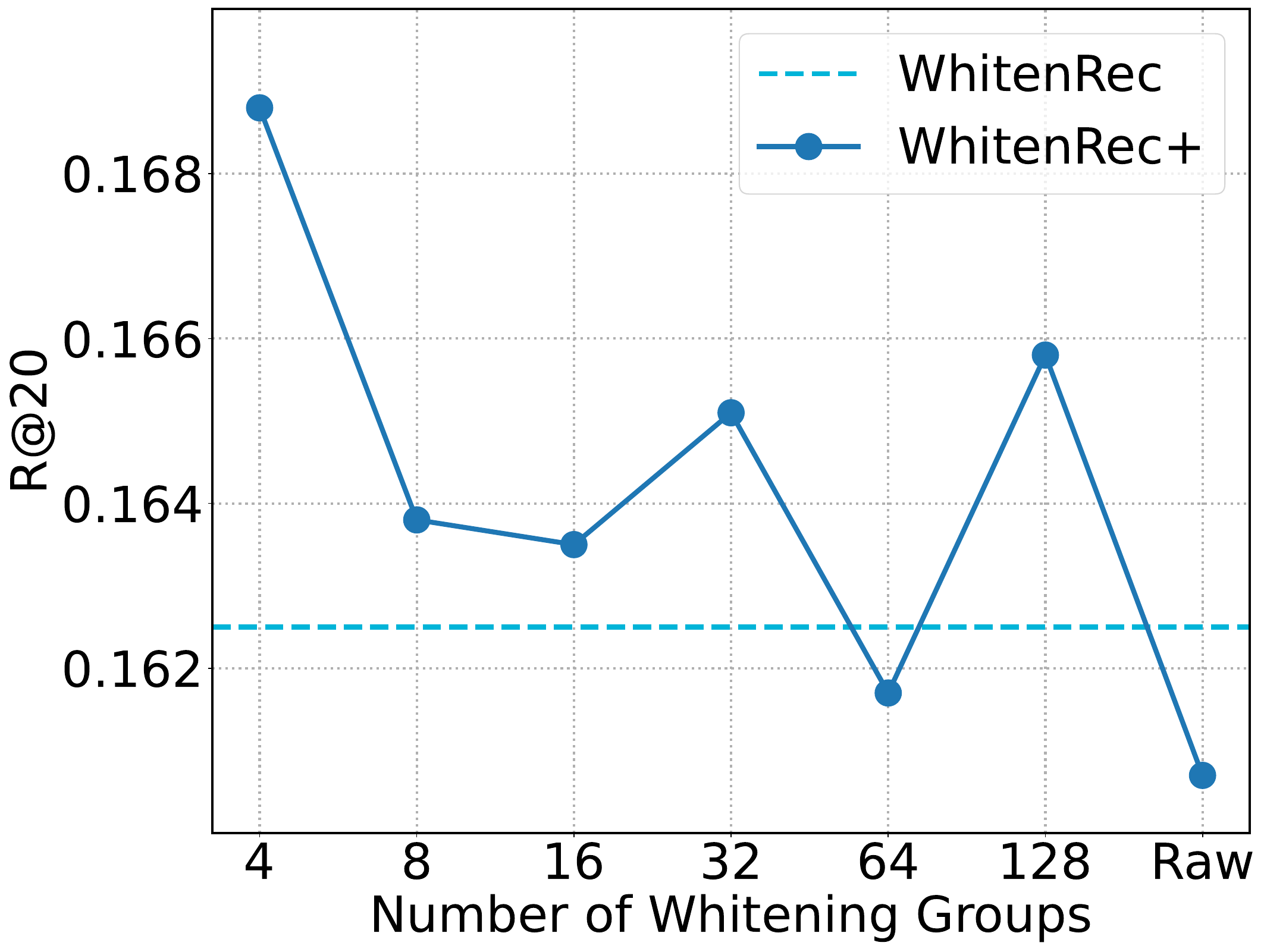}}
     \quad
    \subfloat[Toys-R@20]{\includegraphics[width=0.22\textwidth]{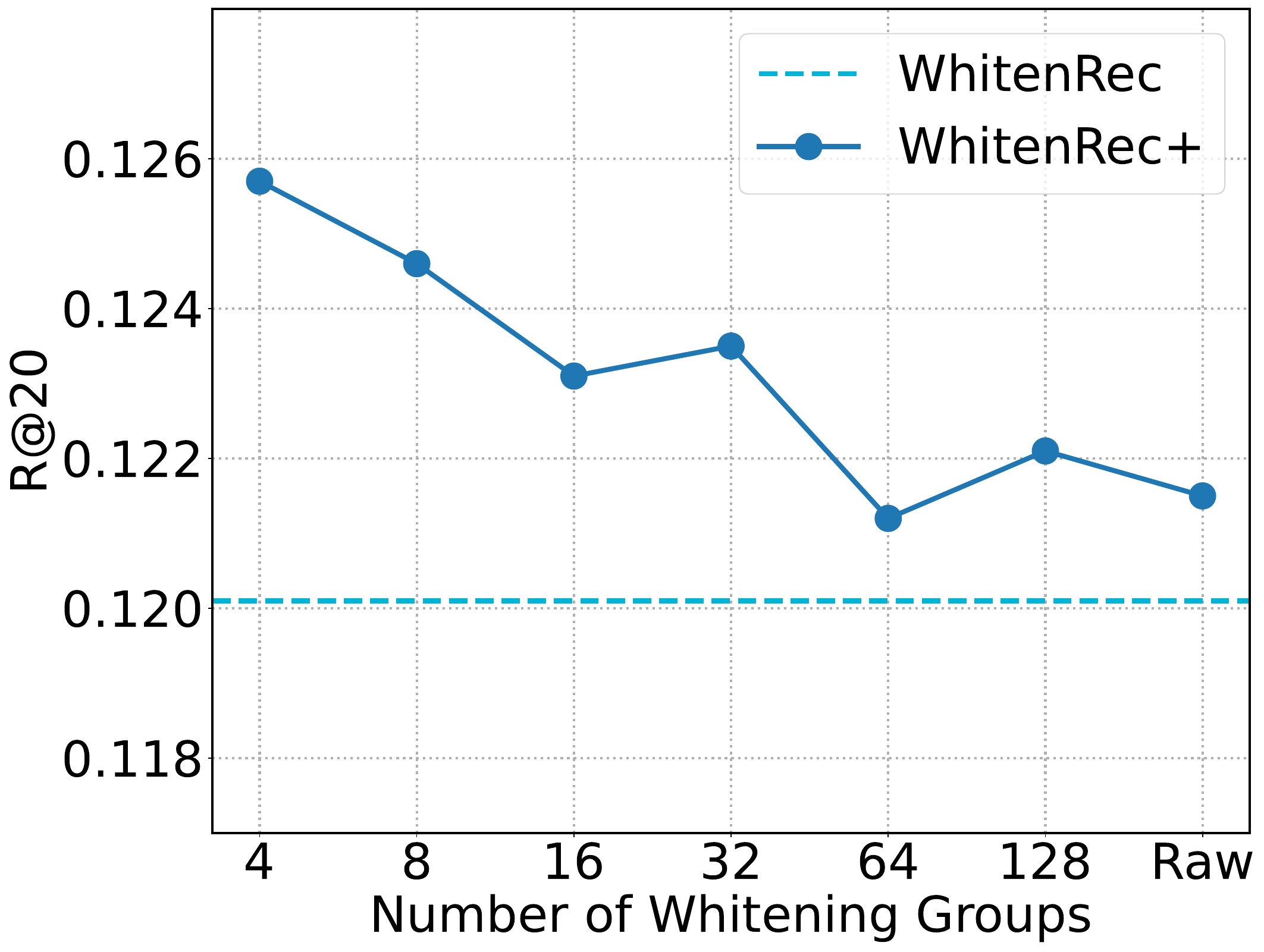}}
    \quad
    \subfloat[Tools-R@20]{\includegraphics[width=0.22\textwidth]{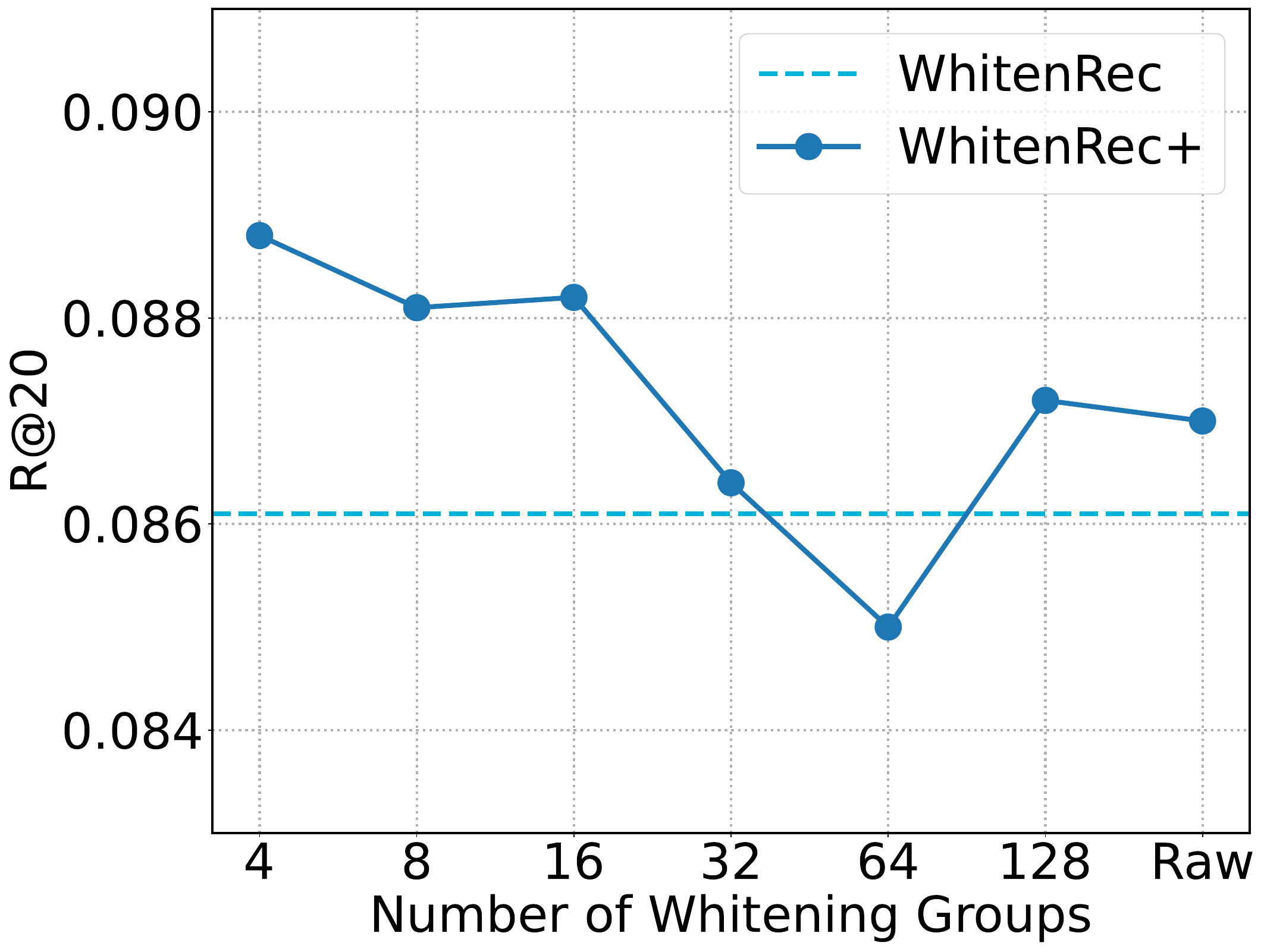}}
    \quad
    \subfloat[\textcolor{black}{Food-R@20}]{\includegraphics[width=0.22\textwidth]{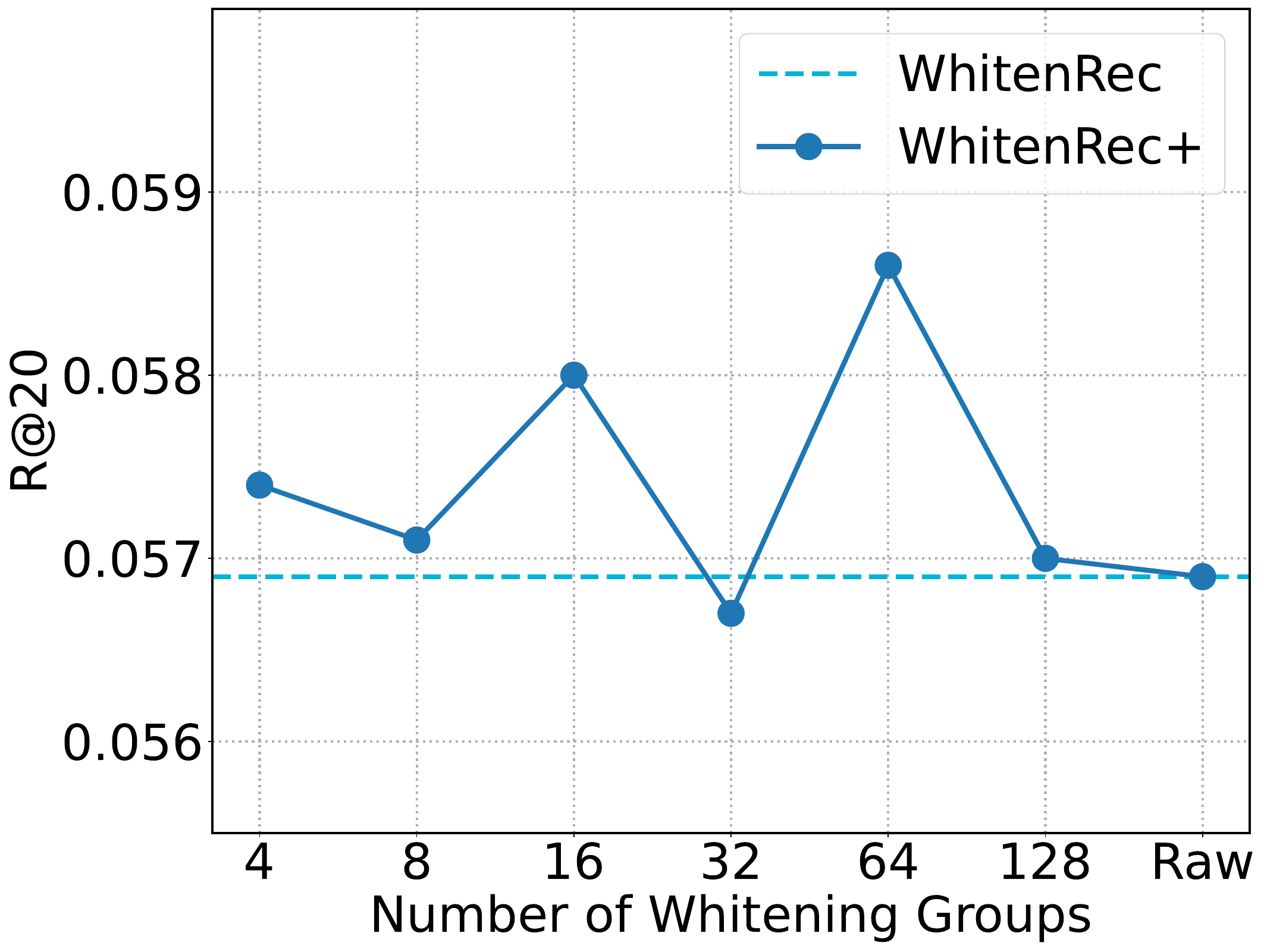}}
     \caption{Performance by Whitening Groups for WhitenRec+. }
     \label{fig:group}
     \vspace{-1em}
\end{figure*}

\begin{table*}
    \caption{Performance comparison of projection head for WhitenRec+.}
    \centering
    \small
    \begin{tabular}{l| c c |c c| c c| c c}
     \toprule
    \multirow{2}{*}{Model}&\multicolumn{2}{c|}{Arts}&\multicolumn{2}{c|}{Toys}&\multicolumn{2}{c|}{Tools} &\multicolumn{2}{c}{\textcolor{black}{Food}}\\   
    &R@20&N@20&R@20&N@20&R@20&N@20&\textcolor{black}{R@20}&\textcolor{black}{N@20}\\
    \midrule
     Linear&0.1476&0.0724&0.1029&0.0448&0.0751&0.0396&\textcolor{black}{0.0551}&\textcolor{black}{0.0222}\\
    MLP-1&0.1627&0.0782&0.1168&0.0494&0.0836&0.0427&\textcolor{black}{0.0560}&\textcolor{black}{0.0221}\\
    MLP-2&\underline{0.1688}&\textbf{0.0810}&\underline{0.1257}&\underline{0.0537}&\underline{0.0888}&\underline{0.0462}&\textcolor{black}{\textbf{0.0586}}&\textcolor{black}{\textbf{0.0234}}\\
    MLP-3&{0.1655}&\underline{0.0808}&\textbf{0.1261}&\textbf{0.0547}&\textbf{0.0894}&\textbf{0.0469}&\textcolor{black}{\underline{0.0565}}&\textcolor{black}{\underline{0.0229}}\\
    \textcolor{black}{MoE} & \textcolor{black}{\textbf{0.1690}} & \textcolor{black}{0.0784} & \textcolor{black}{0.0896} & \textcolor{black}{0.0446} & \textcolor{black}{0.0852} & \textcolor{black}{0.0438} & \textcolor{black}{0.0553} & \textcolor{black}{0.0221}\\
    \bottomrule
    \end{tabular}
    \vspace{-1em}
    \label{tab:head}
\end{table*}

\begin{table*}
    \caption{Performance comparison of whitening methods for WhitenRec+.}
    \centering
    \small
    \begin{tabular}{l| c c| c c| c c|cc}
     \toprule
    \multirow{2}{*}{Model}&\multicolumn{2}{c|}{Arts}&\multicolumn{2}{c|}{Toys}&\multicolumn{2}{c|}{Tools} &\multicolumn{2}{c}{\textcolor{black}{Food}}\\   
    &R@20&N@20&R@20&N@20&R@20&N@20&\textcolor{black}{R@20}&\textcolor{black}{N@20}\\
    \midrule
    PW &0.1243&0.0599 &0.0843&0.0363&0.0626&0.0322&\textcolor{black}{0.0547}&\textcolor{black}{0.0217}\\
    \textcolor{black}{BERT-flow}&\textcolor{black}{0.1550}&\textcolor{black}{0.0755}&\textcolor{black}{0.1082}&\textcolor{black}{0.0469}&\textcolor{black}{0.0796}&\textcolor{black}{0.0416}&\textcolor{black}{0.0558}&\textcolor{black}{0.0225}\\
    PCA&0.1283&0.0633&0.0748&0.0333&0.0625&0.0334&\textcolor{black}{0.0565}&\textcolor{black}{0.0227}\\
    BN&0.1628&0.0789&0.1150&0.0494&0.0799&0.0418&\textcolor{black}{\underline{0.0569}}&\textcolor{black}{0.0228}\\
    CD&\underline{0.1664}&\underline{0.0798}&\underline{0.1230}&\underline{0.0528}&\textbf{0.0891}&\textbf{0.0465}&\textcolor{black}{0.0565}&\textcolor{black}{\underline{0.0229}}\\
    ZCA&\textbf{0.1688}&\textbf{0.0810}&\textbf{0.1257}&\textbf{0.0537}&\underline{0.0888}&\underline{0.0462}&\textcolor{black}{\textbf{0.0586}}&\textcolor{black}{\textbf{0.0234}}\\

    \bottomrule
    \end{tabular}
    \vspace{-1em}
    \label{tab:whiten-methods}
\end{table*}

To examine the impact of different levels of decorrelation strength of whitening transformation on WhitenRec+, we experiment with two whitening transformations by fixing one of them with a group number $G$ of 1, representing fully whitened representations. We then vary the $G$ of relaxed whitened representations in $\{4, 8, 16, 32, 64, 128, \text{Raw}\}$. 
``Raw'' denotes text features without whitening transformation.
We also include WhitenRec's accuracy in the plots for comparison. 
The results, as depicted in Fig.~\ref{fig:group}, show consistent patterns across the three Amazon datasets. Initially, there is a trend where an increase in the number of groups $G$ corresponds to a decline in performance.
Furthermore, a larger value of $G$ results in inferior performance compared to WhitenRec. This indicates that overly relaxed whitened representations are not beneficial and could impede the model's performance. 
Therefore, it is advisable to opt for a smaller $G$ value that ensures a relatively stronger decorrelation strength when configuring $G$ for relaxed whitened representations.
\textcolor{black}{Specific to the Food dataset, the optimal number of groups is determined to be 64. Deviating from this number, whether by increasing or decreasing $G$, leads to a reduction in performance.}

%%%%%%%%%%%%%%%%%%%%%
\subsection{Effect of Projection Head}

We conduct a preliminary experiment to explore how the projection head affects the performance of WhitenRec+. 
We adjust the number of hidden layers within the projection head to the values of \{1,2,3\}, correspondingly denoted as MLP-1, MLP-2, and MLP-3. We also examine the model's performance when employing a linear projection devoid of a non-linear activation function, referred to as Linear. \textcolor{black}{Moreover, we implement the Mixture-of-Experts (MoE)~\cite{shazeer2017outrageously,hou2022towards} approach as an alternative projection head. }
Results are presented in Table~\ref{tab:head}. We observe that increasing the number of layers, and consequently the model's complexity, leads to improved performance. ``Linear'' performs the worst for all datasets except Toys, underscoring the importance of introducing a non-linear activation function to adapt the pre-trained text embeddings for downstream recommendation tasks. 
\textcolor{black}{The MoE projection head exhibits performance on par with MLP-1 across all datasets except Toys, where MoE is the least effective.}

%%%%%%%%%%%%%%%%%%%%%%
\subsection{Whitening Transformations}
\label{sec:whiten}
We perform experiments to investigate the impact of utilizing different whitening transformations, including both non-parametric and parametric methods. 
The non-parametric methods examined are PCA, BN, CD, and ZCA. The parametric methods evaluated are PW~\cite{hou2022towards}, which employs a linear layer for whitening transformation, and \textcolor{black}{BERT-flow, which learns an invertible mapping to transform BERT embeddings into a latent Gaussian representation without loss of information~\cite{li2020emnlp}}.

From Table~\ref{tab:whiten-methods}, our results reveal that the parametric method PW generally performs inferiorly in comparison to the non-parametric methods. This can be attributed to the fact that a linear layer cannot ensure the transformed output is in fact whitened.
\textcolor{black}{BERT-flow can outperform PW and PCA across three Amazon datasets, demonstrating its capability in learning an effective invertible mapping.}
Among all the non-parametric whitening methods, PCA exhibits the worst performance due to the issue of stochastic axis swapping, which can impede training progress as noted in~\cite{huang2018decorrelated}. 
For three Amazon datasets, CD and ZCA outperform BN by producing more informative representations through further decorrelation between axes. CD and ZCA consistently rank as the best or second-best methods across all datasets. 
\textcolor{black}{Notably, for the Food dataset, all non-parametric whitening methods, except ZCA, demonstrate comparable performance. This may be attributed to the relatively short text descriptions (namely, recipe names) within this dataset, which likely contain minimal redundant information that would necessitate compression. Compared with Amazon datasets, which have an average word count of 20.5, the Food dataset averages only 3.8 words per description.}

%%%%%%%%%%%%%%%%%%%%%%
\subsection{Effect of Ensemble Methods}
\textcolor{black}{We examine various ensemble techniques to integrate both fully and relaxed whitened item text representations. We evaluate element-wise summation (Sum), direct concatenation (Concat), and an attention mechanism (Attn). The results are presented in Table~\ref{tab:ensemble-method}. From our observations, for three Amazon datasets, Sum and Attn yield comparable results and both outperform Concat. In contrast, for the Food dataset, Sum, Concat, and Attn all exhibit comparable levels of performance.}

\begin{table}
    \caption{Performance comparison using different ensemble methods.}
    \centering
    \small
    \begin{tabular}{l@{\hspace{0.5\tabcolsep}}
    |c@{\hspace{0.7\tabcolsep}}c@{\hspace{0.7\tabcolsep}}c@{\hspace{0.7\tabcolsep}}|c@{\hspace{0.7\tabcolsep}}c@{\hspace{0.7\tabcolsep}}c}
     \toprule
    \multirow{2}{*}{Dataset}
    & \multicolumn{3}{c|}{R@20} & \multicolumn{3}{c}{N@20} \\   
    & Sum &Concat& Attn & Sum &Concat& Attn\\
    \midrule
    \multirow{1}{*}{Arts}&\textbf{0.1688}&0.1634&\underline{0.1640}&\textbf{0.0810}&0.0800&\underline{0.0803}\\
    \multirow{1}{*}{Toys}&\textbf{0.1257}&0.1187&\underline{0.1227}&\textbf{0.0537}&0.0515&\underline{0.0530}\\
    \multirow{1}{*}{Tools}&\underline{0.0888}&0.0854&\textbf{0.0892}&\underline{0.0462}&0.0445&\textbf{0.0465}\\
    \multirow{1}{*}{Food}&\textbf{0.0586}&\underline{0.0580}&\underline{0.0580}&\textbf{0.0234}&\textbf{0.0234}&\underline{0.0232}\\
    \bottomrule
    \end{tabular}
    \label{tab:ensemble-method}
    \vspace{-1em}
\end{table}

%%%%%%%%%%%%%%%%%%%%%%
\subsection{Incorporating ID Embeddings}
\textcolor{black}{We examine the effects of utilizing both ID and text embeddings with WhitenRec and WhitenRec+. Following UniSRec~\cite{hou2022towards}, we adopt a straightforward element-wise summation to merge the text and ID embeddings, yielding the final item representation. The results are shown in Table~\ref{tab:with-ids}. Across all four datasets, the inclusion of ID embeddings negatively impacts performance. This could be due to two potential reasons. Firstly, the mere addition of ID embeddings may not represent the most effective method of integration with whitened text embeddings, necessitating further investigation into optimal strategies within the WhitenRec framework as future work. Secondly, it is hypothesized that while the integration of ID embeddings enhances user uniformity, it may have exceeded optimal levels, resulting in performance degradation.}

\begin{table}
    \renewcommand{\arraystretch}{0.9}
    \caption{\textcolor{black}{Performance comparison of WhitenRec and WhitenRec+ Using text or text+ID Embeddings.}}
    \centering
    \small
    % \resizebox{\columnwidth}{!}{
    \begin{tabular}{l c|cc|cccc}
     \toprule
    \multirow{2}{*}{Dataset} & \multirow{2}{*}{Metric} 
    & \multicolumn{2}{c|}{WhitenRec} & \multicolumn{2}{c}{WhitenRec+} \\   
    &&(T)&(T+ID)&(T)&(T+ID)\\
    \midrule
    \multirow{2}{*}{Arts} & R@20 &\textbf{0.1625}&0.1442&\textbf{0.1688}&0.1434\\
    & N@20 &\textbf{0.0796}&0.0786&\textbf{0.0810}&0.0787\\
    \midrule
    \multirow{2}{*}{Toys} & R@20&\textbf{0.1201}&0.1166&\textbf{0.1257}&0.1163 \\
    & N@20 &0.0521&\textbf{0.0532}&\textbf{0.0537}&0.0527\\
    \midrule
    \multirow{2}{*}{Tools} & R@20&\textbf{0.0861}&0.0756&\textbf{0.0888}&0.0741 \\
    & N@20 &\textbf{0.0453}&0.0449&\textbf{0.0462}&0.0453\\
    \midrule
    \multirow{2}{*}{Food} & R@20 &\textbf{0.0569}&0.0549&\textbf{0.0586}&0.0537\\
    & N@20 &\textbf{0.0227}&0.0222&\textbf{0.0234}&0.0220\\
    \bottomrule
    \end{tabular}
    \label{tab:with-ids}
    \vspace{-1em}
\end{table}

\begin{table}
  \centering
   \small
   % \color{blue}
       \renewcommand{\arraystretch}{0.9}
   \caption{\textcolor{black}{Efficiency Comparison on Tools Dataset.}}
    \resizebox{\columnwidth}{!}{
  % \vspace{-0.5em}
  \begin{tabular}{c|cc|cc|ccccc}
    \toprule
    \multirow{2}{*}{Model} & \multicolumn{2}{c|}{UniSRec} &\multicolumn{2}{c|}{WhitenRec} & \multicolumn{2}{c}{WhitenRec+}&\\
    & (T) & (T+ID) & (T) & (T+ID) & (T) & (T+ID)\\
    \midrule
    \#Params & 2.9M &13.8M&1.4M&12.2M&1.4M&12.2M\\
    s/Epoch &90 &99 &63 & 75 &64 & 77\\
    \bottomrule
  \end{tabular}}
  \vspace{-1.2em}
  \label{tab:efficiency}
\end{table}

\subsection{Efficiency Analysis}
\textcolor{black}{Our comparative analysis of WhitenRec and WhitenRec+ with the leading baseline, namely UniSRec, focuses on the parameter size and training time per epoch, as detailed in Table~\ref{tab:efficiency}.
First, the integration of ID embeddings substantially increases the parameter count, resulting in a training time that is approximately 10\% longer compared to models without ID embeddings. This pattern is consistent across all these three methods and suggests a potential compromise between model complexity and efficiency due to the addition of item ID embeddings.
Second, WhitenRec and WhitenRec+ leverage pre-computable whitening transformations to enhance performance without adding complexity. By relying solely on text embeddings, our models maintain a lower number of parameters, which not only mitigates overfitting but also confers benefits in situations of cold-start, as shown in Table~\ref{tab:cold}.}

%%%%%%%%%%%%%%%%%%%%%%%%%%%%%%%%%%%%%%%%%%%%%%%%%%%%%%%%%%%%%%%%%%%%%%%%%%%%%%%%%%%%
\section{Conclusion}
In conclusion, we present the frameworks WhitenRec and WhitenRec+ to effectively exploit text features of items in sequential recommendation. 
We contend that relying on text embeddings from pre-trained language models is sub-optimal because such embeddings exist in an anisotropic semantic space, which limits the differentiation among item representations.
To address this issue, we propose the WhitenRec method, which transforms the anisotropic text embedding distribution into an isotropic distribution through whitening. 
When an excessive whitening transformation is applied, text embeddings can deviate from their original semantics. 
Relying solely on relaxed whitening results in a clustered embedding distribution and sub-optimal performance.
To benefit from both ends, we introduce WhitenRec+, which leverages both fully whitened and relaxed whitened item representations to balance differentiation and similarity. Our experimental results on four benchmark datasets demonstrate that our proposed methods outperform existing state-of-the-art models for sequential recommendation on both warm and cold settings.

\section*{Acknowledgment}

This research is supported by Alibaba Group and Alibaba-NTU Singapore Joint Research Institute(JRI), Nanyang Technological University, Singapore.

\bibliographystyle{IEEEtran}
\bibliography{ref}

% Generated by IEEEtran.bst, version: 1.14 (2015/08/26)
\begin{thebibliography}{10}
\providecommand{\url}[1]{#1}
\csname url@samestyle\endcsname
\providecommand{\newblock}{\relax}
\providecommand{\bibinfo}[2]{#2}
\providecommand{\BIBentrySTDinterwordspacing}{\spaceskip=0pt\relax}
\providecommand{\BIBentryALTinterwordstretchfactor}{4}
\providecommand{\BIBentryALTinterwordspacing}{\spaceskip=\fontdimen2\font plus
\BIBentryALTinterwordstretchfactor\fontdimen3\font minus \fontdimen4\font\relax}
\providecommand{\BIBforeignlanguage}[2]{{%
\expandafter\ifx\csname l@#1\endcsname\relax
\typeout{** WARNING: IEEEtran.bst: No hyphenation pattern has been}%
\typeout{** loaded for the language `#1'. Using the pattern for}%
\typeout{** the default language instead.}%
\else
\language=\csname l@#1\endcsname
\fi
#2}}
\providecommand{\BIBdecl}{\relax}
\BIBdecl

\bibitem{kang2018self}
W.-C. Kang and J.~McAuley, ``Self-attentive sequential recommendation,'' in \emph{IEEE International Conference on Data Mining (ICDM)}.\hskip 1em plus 0.5em minus 0.4em\relax IEEE, 2018, pp. 197--206.

\bibitem{ma2019hierarchical}
C.~Ma, P.~Kang, and X.~Liu, ``Hierarchical gating networks for sequential recommendation,'' in \emph{Proceedings of the 25th ACM SIGKDD Conference on Knowledge Discovery and Data Mining}, 2019, pp. 825--833.

\bibitem{xie2022contrastive}
X.~Xie, F.~Sun, Z.~Liu, S.~Wu, J.~Gao, J.~Zhang, B.~Ding, and B.~Cui, ``Contrastive learning for sequential recommendation,'' in \emph{IEEE International Conference on Data Engineering (ICDE)}, 2022, pp. 1259--1273.

\bibitem{zhou2020s3}
K.~Zhou, H.~Wang, W.~X. Zhao, Y.~Zhu, S.~Wang, F.~Zhang, Z.~Wang, and J.-R. Wen, ``S3-rec: Self-supervised learning for sequential recommendation with mutual information maximization,'' in \emph{Proceedings of the 29th ACM International Conference on Information and Knowledge Management}, 2020, pp. 1893--1902.

\bibitem{zhang2019feature}
T.~Zhang, P.~Zhao, Y.~Liu, V.~S. Sheng, J.~Xu, D.~Wang, G.~Liu, and X.~Zhou, ``Feature-level deeper self-attention network for sequential recommendation,'' in \emph{Proceedings of the 28th International Joint Conference on Artificial Intelligence}, 2019, pp. 4320--4326.

\bibitem{hou2022towards}
Y.~Hou, S.~Mu, W.~X. Zhao, Y.~Li, B.~Ding, and J.-R. Wen, ``Towards universal sequence representation learning for recommender systems,'' in \emph{Proceedings of the 28th ACM SIGKDD Conference on Knowledge Discovery and Data Mining}, 2022, pp. 585--593.

\bibitem{sun2019bert4rec}
F.~Sun, J.~Liu, J.~Wu, C.~Pei, X.~Lin, W.~Ou, and P.~Jiang, ``Bert4rec: Sequential recommendation with bidirectional encoder representations from transformer,'' in \emph{Proceedings of the 28th ACM International Conference on Information and Knowledge Management}, 2019, pp. 1441--1450.

\bibitem{xie2022decoupled}
Y.~Xie, P.~Zhou, and S.~Kim, ``Decoupled side information fusion for sequential recommendation,'' in \emph{Proceedings of the 45th International ACM SIGIR Conference on Research and Development in Information Retrieval}, 2022, p. 1611–1621.

\bibitem{zhou2022bootstrap}
X.~Zhou, H.~Zhou, Y.~Liu, Z.~Zeng, C.~Miao, P.~Wang, Y.~You, and F.~Jiang, ``Bootstrap latent representations for multi-modal recommendation,'' in \emph{Proceedings of the ACM Web Conference 2023}, 2023, pp. 845--854.

\bibitem{wei2020graph}
Y.~Wei, X.~Wang, L.~Nie, X.~He, and T.-S. Chua, ``Graph-refined convolutional network for multimedia recommendation with implicit feedback,'' in \emph{Proceedings of the 28th ACM International Conference on Multimedia}, 2020, pp. 3541--3549.

\bibitem{shuai2022review}
J.~Shuai, K.~Zhang, L.~Wu, P.~Sun, R.~Hong, M.~Wang, and Y.~Li, ``A review-aware graph contrastive learning framework for recommendation,'' in \emph{Proceedings of the 45th International ACM SIGIR Conference on Research and Development in Information Retrieval}, 2022, pp. 1283--1293.

\bibitem{tran2022aligning}
N.-T. Tran and H.~W. Lauw, ``Aligning dual disentangled user representations from ratings and textual content,'' in \emph{Proceedings of the 28th ACM SIGKDD Conference on Knowledge Discovery and Data Mining}, 2022, pp. 1798--1806.

\bibitem{yuan2023go}
Z.~Yuan, F.~Yuan, Y.~Song, Y.~Li, J.~Fu, F.~Yang, Y.~Pan, and Y.~Ni, ``Where to go next for recommender systems? id-vs. modality-based recommender models revisited,'' in \emph{Proceedings of the 46th International ACM SIGIR Conference on Research and Development in Information Retrieval}, 2023, pp. 2639--2649.

\bibitem{hou2023learning}
Y.~Hou, Z.~He, J.~McAuley, and W.~X. Zhao, ``Learning vector-quantized item representation for transferable sequential recommenders,'' in \emph{Proceedings of the ACM Web Conference 2023}, 2023, pp. 1162--1171.

\bibitem{vaswani2017attention}
A.~Vaswani, N.~Shazeer, N.~Parmar, J.~Uszkoreit, L.~Jones, A.~N. Gomez, L.~Kaiser, and I.~Polosukhin, ``Attention is all you need,'' \emph{Advances in Neural Information Processing Systems}, pp. 5998--6008, 2017.

\bibitem{kessy2018optimal}
A.~Kessy, A.~Lewin, and K.~Strimmer, ``Optimal whitening and decorrelation,'' \emph{The American Statistician}, pp. 309--314, 2018.

\bibitem{qiu2022contrastive}
R.~Qiu, Z.~Huang, H.~Yin, and Z.~Wang, ``Contrastive learning for representation degeneration problem in sequential recommendation,'' in \emph{Proceedings of the 15th ACM International Conference on Web Search and Data Mining}, 2022, pp. 813--823.

\bibitem{wang2022towards}
C.~Wang, Y.~Yu, W.~Ma, M.~Zhang, C.~Chen, Y.~Liu, and S.~Ma, ``Towards representation alignment and uniformity in collaborative filtering,'' in \emph{Proceedings of the 28th ACM SIGKDD Conference on Knowledge Discovery and Data Mining}, 2022, pp. 1816--1825.

\bibitem{weng2022an}
X.~Weng, L.~Huang, L.~Zhao, R.~Anwer, S.~H. Khan, and F.~Shahbaz~Khan, ``An investigation into whitening loss for self-supervised learning,'' \emph{Advances in Neural Information Processing Systems}, pp. 29\,748--29\,760, 2022.

\bibitem{rendle2010factorizing}
S.~Rendle, C.~Freudenthaler, and L.~Schmidt-Thieme, ``Factorizing personalized markov chains for next-basket recommendation,'' in \emph{Proceedings of the Web Conference 2010}, 2010, pp. 811--820.

\bibitem{tang2018personalized}
J.~Tang and K.~Wang, ``Personalized top-n sequential recommendation via convolutional sequence embedding,'' in \emph{Proceedings of the 11th ACM International Conference on Web Search and Data Mining}, 2018, pp. 565--573.

\bibitem{yuan2019simple}
F.~Yuan, A.~Karatzoglou, I.~Arapakis, J.~M. Jose, and X.~He, ``A simple convolutional generative network for next item recommendation,'' in \emph{Proceedings of the 12th ACM International Conference on Web Search and Data Mining}, 2019, pp. 582--590.

\bibitem{donkers2017sequential}
T.~Donkers, B.~Loepp, and J.~Ziegler, ``Sequential user-based recurrent neural network recommendations,'' in \emph{Proceedings of the 11th ACM Conference on Recommender Systems}, 2017, pp. 152--160.

\bibitem{peng2021ham}
B.~Peng, Z.~Ren, S.~Parthasarathy, and X.~Ning, ``Ham: Hybrid associations models for sequential recommendation,'' \emph{IEEE Transactions on Knowledge and Data Engineering}, 2021.

\bibitem{chang2021sequential}
J.~Chang, C.~Gao, Y.~Zheng, Y.~Hui, Y.~Niu, Y.~Song, D.~Jin, and Y.~Li, ``Sequential recommendation with graph neural networks,'' in \emph{Proceedings of the 44th International ACM SIGIR Conference on Research and Development in Information Retrieval}, 2021, pp. 378--387.

\bibitem{IJCAI-GCL4SR}
Y.~Zhang, Y.~Liu, Y.~Xu, H.~Xiong, C.~Lei, W.~He, L.~Cui, and C.~Miao, ``Enhancing sequential recommendation with graph contrastive learning,'' in \emph{Proceedings of the 31st International Joint Conference on Artificial Intelligence}, 2022, pp. 2398--2405.

\bibitem{liu2021augmenting}
Z.~Liu, Z.~Fan, Y.~Wang, and P.~S. Yu, ``Augmenting sequential recommendation with pseudo-prior items via reversely pre-training transformer,'' in \emph{Proceedings of the 44th International ACM SIGIR Conference on Research and Development in Information Retrieval}, 2021, pp. 1608--1612.

\bibitem{zhang2023multimodal}
L.~Zhang, X.~Zhou, Z.~Zeng, and Z.~Shen, ``Multimodal pre-training framework for sequential recommendation via contrastive learning,'' \emph{arXiv preprint arXiv:2303.11879}, 2023.

\bibitem{zhou2023tale}
X.~Zhou and Z.~Shen, ``A tale of two graphs: Freezing and denoising graph structures for multimodal recommendation,'' in \emph{Proceedings of the 31st ACM International Conference on Multimedia}, 2023, pp. 935--943.

\bibitem{zhou2023comprehensive}
H.~Zhou, X.~Zhou, Z.~Zeng, L.~Zhang, and Z.~Shen, ``A comprehensive survey on multimodal recommender systems: Taxonomy, evaluation, and future directions,'' \emph{arXiv preprint arXiv:2302.04473}, 2023.

\bibitem{zhou2023enhancing}
H.~Zhou, X.~Zhou, L.~Zhang, and Z.~Shen, ``Enhancing dyadic relations with homogeneous graphs for multimodal recommendation,'' in \emph{ECAI}, vol. 372.\hskip 1em plus 0.5em minus 0.4em\relax {IOS} Press, 2023, pp. 3123--3130.

\bibitem{jolliffe2002principal}
I.~T. Jolliffe, \emph{Principal component analysis for special types of data}.\hskip 1em plus 0.5em minus 0.4em\relax Springer, 2002.

\bibitem{desjardins2015natural}
G.~Desjardins, K.~Simonyan, R.~Pascanu \emph{et~al.}, ``Natural neural networks,'' \emph{Advances in Neural Information Processing Systems}, 2015.

\bibitem{bell1997independent}
A.~J. Bell and T.~J. Sejnowski, ``The “independent components” of natural scenes are edge filters,'' \emph{Vision Research}, pp. 3327--3338, 1997.

\bibitem{dereniowski2004cholesky}
D.~Dereniowski and M.~Kubale, ``Cholesky factorization of matrices in parallel and ranking of graphs,'' in \emph{International Conference on Parallel Processing and Applied Mathematics}, 2004, pp. 985--992.

\bibitem{siarohin2018whitening}
A.~Siarohin, E.~Sangineto, and N.~Sebe, ``Whitening and coloring transform for {GAN}s,'' in \emph{International Conference on Learning Representations}, 2019.

\bibitem{ioffe2015batch}
S.~Ioffe and C.~Szegedy, ``Batch normalization: Accelerating deep network training by reducing internal covariate shift,'' in \emph{International Conference on Machine Learning}, 2015, pp. 448--456.

\bibitem{huang2018decorrelated}
L.~Huang, D.~Yang, B.~Lang, and J.~Deng, ``Decorrelated batch normalization,'' in \emph{Proceedings of the IEEE Conference on Computer Vision and Pattern Recognition}, 2018, pp. 791--800.

\bibitem{hua2021feature}
T.~Hua, W.~Wang, Z.~Xue, S.~Ren, Y.~Wang, and H.~Zhao, ``On feature decorrelation in self-supervised learning,'' in \emph{Proceedings of the IEEE/CVF International Conference on Computer Vision}, 2021, pp. 9598--9608.

\bibitem{ermolov2021whitening}
A.~Ermolov, A.~Siarohin, E.~Sangineto, and N.~Sebe, ``Whitening for self-supervised representation learning,'' in \emph{International Conference on Machine Learning}, 2021, pp. 3015--3024.

\bibitem{bardes2022vicreg}
A.~Bardes, J.~Ponce, and Y.~LeCun, ``Vicreg: Variance-invariance-covariance regularization for self-supervised learning,'' in \emph{International Conference on Learning Representations}, 2022.

\bibitem{li2020emnlp}
B.~Li, H.~Zhou, J.~He, M.~Wang, Y.~Yang, and L.~Li, ``On the sentence embeddings from pre-trained language models,'' in \emph{Proceedings of the 2020 Conference on Empirical Methods in Natural Language Processing (EMNLP)}, 2020.

\bibitem{su2021whitening}
J.~Su, J.~Cao, W.~Liu, and Y.~Ou, ``Whitening sentence representations for better semantics and faster retrieval,'' \emph{arXiv preprint arXiv:2103.15316}, 2021.

\bibitem{ethayarajh-2019-contextual}
K.~Ethayarajh, ``How contextual are contextualized word representations? {C}omparing the geometry of {BERT}, {ELM}o, and {GPT}-2 embeddings,'' in \emph{Proceedings of the 2019 Conference on Empirical Methods in Natural Language Processing and the 9th International Joint Conference on Natural Language Processing (EMNLP-IJCNLP)}, 2019, pp. 55--65.

\bibitem{ni2019justifying}
J.~Ni, J.~Li, and J.~McAuley, ``Justifying recommendations using distantly-labeled reviews and fine-grained aspects,'' in \emph{Proceedings of the 2019 Conference on Empirical Methods in Natural Language Processing and the 9th International Joint Conference on Natural Language Processing (EMNLP-IJCNLP)}, 2019, pp. 188--197.

\bibitem{zhang2021mining}
J.~Zhang, Y.~Zhu, Q.~Liu, S.~Wu, S.~Wang, and L.~Wang, ``Mining latent structures for multimedia recommendation,'' in \emph{Proceedings of the 29th ACM International Conference on Multimedia}, 2021, pp. 3872--3880.

\bibitem{devlin2018bert}
J.~Devlin, M.-W. Chang, K.~Lee, and K.~Toutanova, ``Bert: Pre-training of deep bidirectional transformers for language understanding,'' in \emph{Proceedings of NAACL-HLT 2019}, 2019, pp. 4171--4186.

\bibitem{dereniowski2003cholesky}
D.~Dereniowski and M.~Kubale, ``Cholesky factorization of matrices in parallel and ranking of graphs,'' in \emph{International Conference on Parallel Processing and Applied Mathematics}, 2003, pp. 985--992.

\bibitem{song2022improving}
Y.~Song, N.~Sebe, and W.~Wang, ``Improving covariance conditioning of the svd meta-layer by orthogonality,'' in \emph{European Conference on Computer Vision}, 2022.

\bibitem{wadia2021whitening}
N.~Wadia, D.~Duckworth, S.~S. Schoenholz, E.~Dyer, and J.~Sohl-Dickstein, ``Whitening and second order optimization both make information in the dataset unusable during training, and can reduce or prevent generalization,'' in \emph{International Conference on Machine Learning}, 2021, pp. 10\,617--10\,629.

\bibitem{zhang2022diffusion}
L.~Zhang, Y.~Liu, X.~Zhou, C.~Miao, G.~Wang, and H.~Tang, ``Diffusion-based graph contrastive learning for recommendation with implicit feedback,'' in \emph{International Conference on Database Systems for Advanced Applications}.\hskip 1em plus 0.5em minus 0.4em\relax Springer, 2022, pp. 232--247.

\bibitem{zhou2023selfcf}
X.~Zhou, A.~Sun, Y.~Liu, J.~Zhang, and C.~Miao, ``Selfcf: A simple framework for self-supervised collaborative filtering,'' \emph{ACM Transactions on Recommender Systems}, vol.~1, no.~2, pp. 1--25, 2023.

\bibitem{zhou2023layer}
X.~Zhou, D.~Lin, Y.~Liu, and C.~Miao, ``Layer-refined graph convolutional networks for recommendation,'' in \emph{2023 IEEE 39th International Conference on Data Engineering (ICDE)}.\hskip 1em plus 0.5em minus 0.4em\relax IEEE, 2023, pp. 1247--1259.

\bibitem{wei2021contrastive}
Y.~Wei, X.~Wang, Q.~Li, L.~Nie, Y.~Li, X.~Li, and T.-S. Chua, ``Contrastive learning for cold-start recommendation,'' in \emph{Proceedings of the 29th ACM International Conference on Multimedia}, 2021, pp. 5382--5390.

\bibitem{krichene2020sampled}
W.~Krichene and S.~Rendle, ``On sampled metrics for item recommendation,'' in \emph{Proceedings of the 26th ACM SIGKDD Conference on Knowledge Discovery and Data Mining}, 2020, pp. 1748--1757.

\bibitem{paszke2019pytorch}
A.~Paszke, S.~Gross, F.~Massa, A.~Lerer, J.~Bradbury, G.~Chanan, T.~Killeen, Z.~Lin, N.~Gimelshein, L.~Antiga \emph{et~al.}, ``Pytorch: An imperative style, high-performance deep learning library,'' \emph{Advances in Neural Information Processing Systems}, pp. 8026--8037, 2019.

\bibitem{zhao2021recbole}
W.~X. Zhao, S.~Mu, Y.~Hou, Z.~Lin, Y.~Chen, X.~Pan, K.~Li, Y.~Lu, H.~Wang, C.~Tian \emph{et~al.}, ``Recbole: Towards a unified, comprehensive and efficient framework for recommendation algorithms,'' in \emph{Proceedings of the 30th ACM International Conference on Information and Knowledge Management}, 2021, pp. 4653--4664.

\bibitem{kingma2014adam}
D.~P. Kingma and J.~Ba, ``Adam: A method for stochastic optimization,'' in \emph{International Conference on Learning Representations}, 2015.

\bibitem{shazeer2017outrageously}
N.~Shazeer, A.~Mirhoseini, K.~Maziarz, A.~Davis, Q.~Le, G.~Hinton, and J.~Dean, ``Outrageously large neural networks: The sparsely-gated mixture-of-experts layer,'' in \emph{International Conference on Learning Representations}, 2017.

\end{thebibliography}

\end{document}